  \renewcommand{\varPhi}{\mitPhi}
  \let\setminus\smallsetminus
\tikzset{vertex/.style={circle, fill, minimum size=10pt,
    inner sep=0pt},
  big vertex/.style={vertex, minimum size=20pt},
  container/.style={rectangle, rounded corners, draw=lightgray,
    thick}}
\DeclareMathOperator{\dom}{dom}
\DeclareMathOperator{\rank}{rank}
\DeclareMathOperator{\opt}{opt}
\newcommand{\nnegrk}{\rank_{+}} 
\newcommand{\psdrk}{\rank_{\oplus}} 
\newcommand{\LPrk}{\rank_{\textnormal{LP}}}
\newcommand{\SDPrk}{\rank_{\textnormal{SDP}}}
\newcommand{\R}{\mathbb{R}}
\newcommand{\N}{\mathbb{N}}
\DeclarePairedDelimiter{\abs}{\lvert}{\rvert}
\newcommand{\SDP}{\mathbb{S}_{+}}
\newcommand{\symM}{\ensuremath{\mathbb{S}}}
\DeclareMathOperator{\oprob}{\mathcal{P}}
\DeclareMathOperator{\fc}{fc_+}
\DeclareMathOperator{\fcp}{fc_{\oplus}}
\DeclareMathOperator{\tr}{Tr}
\providecommand{\cupdot}{\mathbin{\mathaccent\cdot\cup}}
\newcommand{\problem}[1]{\textnormal{\textsf{#1}}}
\newcommand{\MaxSAT}[1]{\problem{Max-\(#1\)-SAT}}
\newcommand{\MaxCONJSAT}[1]{\problem{Max-\(#1\)-CONJSAT}}
\newcommand{\MaxXOR}[1]{\problem{Max-\(#1\)-XOR}}
\newcommand{\MaxCUT}[1][]%
{\ensuremath{\problem{MaxCUT}\sb{#1}}\xspace}
\newcommand{\MaxMULTICUT}[1]{\problem{Max-MULTI-\(#1\)-CUT}}
\NewDocumentCommand{\VertexCover}{og}{%
  \ensuremath{%
    \IfValueTF{#2}
    {\problem{VertexCover(\(#2\))}}
    {\problem{VertexCover}}%
    \IfValueT{#1}{\sb{#1}}}%
  \xspace}
\NewDocumentCommand{\MaxIndep}{og}{%
  \ensuremath{%
    \IfValueTF{#2}
    {\problem{IndependentSet(\(#2\))}}
    {\problem{IndependentSet}}%
    \IfValueT{#1}{\sb{#1}}}%
  \xspace}
  \renewcommand{\VertexCover}{\problem{VertexCover}\xspace}%
  \renewcommand{\MaxIndep}{\problem{IndependentSet}\xspace}%
\newcommand{\MaxDicut}{\problem{Max-DICUT}\xspace}
\newcommand{\MaxCSP}[1]{\problem{Max-\(#1\)-CSP}}
\newcommand{\MinCNF}[1]{\problem{Min-\(#1\)-CNFDeletion}}
\newcommand{\MinUnCUT}{\problem{MinUnCUT}\xspace}
\newcommand{\GraphIsomorphism}{\problem{GraphIsomorphism}\xspace}
\DeclareMathOperator{\sat}{sat}
\DeclareMathOperator{\val}{val}
\newcommand{\face}[1]{\left\{#1\right\}}
\newcommand {\card}[1]{\left|#1\right|}
\newcommand {\binSet}{\{0,1\}}
\newcommand{\marknew}{(\textcolor{red}{new})}
\newcommand{\markoptimal}{(\textcolor{red}{optimal})}
\newenvironment{blockemph}{\begin{quote}\em}{\end{quote}}
\newcommand*{\set}[2]{\left\{#1\,\middle|\,#2\right\}}
\NewDocumentCommand{\conv}{mo}{\operatorname{conv}\left(#1%
    \IfValueT{#2}{\,\middle|\,#2}%
  \right)}
\NewDocumentCommand{\cone}{mo}{\operatorname{cone}\left(#1%
    \IfValueT{#2}{\,\middle|\,#2}%
  \right)}
\DeclarePairedDelimiterX\inp[2]{\langle}{\rangle}{#1,#2}
\NewDocumentCommand{\norm}{osm}{%
  \IfBooleanT{#2}{\left}\lVert
    #3%
    \IfBooleanT{#2}{\right}\rVert
  \IfValueT{#1}{\sb{#1}}}
\newcommand{\maxnorm}{\norm[\infty]}
\newcommand*{\size}[1]{\left|#1\right|}
\newtheorem{theorem}{Theorem}[section]
\newtheorem{lemma}[theorem]{Lemma}
\newtheorem{proposition}[theorem]{Proposition}
\newtheorem{corollary}[theorem]{Corollary}
\newtheorem{conjecture}[theorem]{Conjecture}
\theoremstyle{remark}
\newtheorem{remark}[theorem]{Remark}
\newtheorem{example}[theorem]{Example}
\theoremstyle{definition}
\newtheorem{definition}[theorem]{Definition}
\title{Affine reductions for LPs and SDPs}
\date{\today} 
\author[1]{Gábor Braun}
\affil[1]{ISyE, Georgia Institute of Technology,
  Atlanta, GA,
  USA.
  \textit{Email:}~gabor.braun@isye.gatech.edu}
\author[2]{Sebastian Pokutta}
\affil[2]{ISyE, Georgia Institute of Technology,
  Atlanta, GA,
  USA.
  \textit{Email:}~sebastian.pokutta@isye.gatech.edu}
\author[3]{Daniel Zink}
\affil[3]{ISyE, Georgia Institute of Technology,
  Atlanta, GA,
  USA.
  \textit{Email:}~daniel.zink@gatech.edu}
\begin{document}

\maketitle

\begin{abstract}
  We define a reduction mechanism for LP and SDP formulations that
  degrades approximation factors in a controlled fashion. Our
  reduction mechanism is a minor restriction of classical reductions
  establishing inapproximability in the context of PCP theorems.  As a
  consequence we establish strong linear programming inapproximability
  (for LPs with a polynomial number of constraints) for many
  problems. In particular we obtain a \(\frac{3}{2}-\varepsilon\)
  inapproximability for \VertexCover answering an open question in
  \cite{CLRS13} and we answer a weak version of our sparse graph
  conjecture posed in \cite{BFP2013} showing an inapproximability
  factor of \(\frac{1}{2}+\varepsilon\)
  for bounded degree \MaxIndep.  In the case of SDPs, we obtain
  inapproximability results for these problems relative to the
  SDP-inapproximability of \MaxCUT. Moreover,
  using our reduction framework we are able to reproduce various
  results for CSPs from \cite{CLRS13} via simple reductions from
  \MaxXOR{2}.
\end{abstract}

\section{Introduction}
\label{sec:introduction}

Linear Programming (LP) and Semidefinite Programming (SDP)
formulations are ubiquitous in combinatorial optimization problems  and
are often an important ingredient in the construction of approximation
algorithms. Extended formulations study the size of such LP or SDP
formulations. So far most strong lower bounds have been obtained by
ad-hoc analysis, which is stark contrast to the computational
complexity world, where we often resort to reduction mechanisms to
establish hardness or hardness of approximation. 

In this work we establish a strong reduction mechanism for approximate
LP and SDP formulations. Having a reduction mechanism in place has
three main appeals. 
\begin{enumerate}
\item The reduction mechanism allows for the propagation of hardness
results across optimization problems very similar in spirit to the
computational complexity approach. In particular, no specific
knowledge of how the hardness of the base problem has been established
is required. As such it turns establishing LP/SDP hardness into a
routine task for many problems allowing a much broader community to
benefit from results from extended formulations. 
\item Any future improvements to the strength of the lower bounds
  of the base problems immediately propagate to all problems that one
  can reduce to. In particular, it is likely that the lower bounds in
  \cite{CLRS13} and \cite{LRS14}, which form the base hard problems
  that we reduce from can be further improved. 
\item Having a reduction mechanism in place provides an ordering on the
  hardness of problems. As such it is a first step
  in identifying the LP/SDP analog to a complete problem. This is a
  quite appealing problem in its own right as it will likely have to
  reconcile the LP-hardness of matching and \(3\)-SAT, which is in
  contrast to the polynomial time solvability of matching in
  computational complexity. 
\end{enumerate}

Our reduction mechanism is compatible with
numerous known reductions in the literature that have been used to
analyze computational complexity, so that we can reuse these
reductions in the context of LPs and SDPs. As the approach is very similar in
all cases, i.e., to verify that the additional conditions of our
mechanism are met, we will only state a select few that are of
specific interest. We obtain new LP inapproximability results for problems
such as, e.g., \VertexCover, \MaxMULTICUT{k}, and bounded degree
\MaxIndep which are not 0/1 CSPs and hence not captured by the
approach in \cite{CLRS13}. Moreover we reproduce
previous results for CSPs in \cite{CLRS13} via direct reductions from
\MaxXOR{2}. This is interesting in view of \MaxXOR{2} being the actual
driver of complexity for most of these problems. In particular,
establishing a stronger lower bound for \MaxXOR{2}, e.g., of the form
\(2^{\Omega(n^\delta)}\) would imply improve the LP-hardness of
approximation of many other CSPs.

Our reduction mechanism is based on a more abstract view of extended
formulations that is motivated by the earlier approach in
\cite{CLRS13} to capture linear programming formulations independent
of the specific linear encoding in the context of CSPs, where the
feasible region is the whole 0/1 cube. We generalize this approach to
studying arbitrary combinatorial optimization problems and the
resulting model captures previous approaches studying approximations
via polyhedral pairs (see \cite{bfps2012,bfps2012jour}). In the same
spirit as \cite{CLRS13} we do not rely on lifting a specific
polyhedral representation but rather work directly with the
combinatorial optimization problem, i.e., we answer questions of the
form

\begin{blockemph}
  Given a combinatorial optimization problem, what is the smallest
  size of any of its LPs or SDPs?
\end{blockemph}

While this difference to the traditional extended formulation model is
more of a philosophical nature and the results are equivalent to those
obtained via the traditional extended formulations setup, on a
technical level this perspective significantly simplifies the
treatment of approximate LP/SDP formulations and it enables the
formulation of the reduction mechanism.

\subsection*{Related work}
\label{sec:related-work}

For a detailed account on various lower bounds for
extended formulations of combinatorial optimization
problems, see e.g., \cite{extform4,Rothvoss13}.
A reduction mechanism for \emph{exact} extended formulations
has been considered
in \cite{2013arXiv1302.2340A,VP2013},
where it already provided lower bounds on the exact
extension complexity of various polytopes, however both fall short to
capture approximations as they cannot incorporate the necessary affine
shifts. 
Our generalization of (encoding dependent) extended
formulations to encoding independent
\emph{formulation complexity} is a natural extension of \cite{CLRS13} and
\cite{BFP2013jour}. The former studied an encoding-independent model
for \emph{uniform formulations of CSPs}, whereas the latter used a restricted
LP version of the model presented here. 
Recently, there has been also
progress in relating general LP and SDP formulations with hierarchies
(\cite{CLRS13,lee2014power,LRS14}),
and we reuse some of the results as the
basis for later reductions. In particular \cite{LRS14} recently established
  super-polynomial lower bounds for the size of approximate SDP
  formulations capturing \MaxSAT{3} and other problems,
  however the SDP hardness of approximation of our base problems is
  still open. Our approach is also related to
\cite{KWW2013} as well as inapproximability reductions
in the context of PCPs and Lasserre hierarchies (see e.g.,
\cite{haastad1999clique,trevisan2000gadgets,haastad2001some,
  trevisan2004inapproximability,Schoen-k-CSP,tulsiani2009csp}).

\subsection*{Contribution}
\label{sec:contribution}

Our contribution can be broadly separated into the following three
parts, where the reduction mechanism is the main contribution. The
abstract view on extended formulations should be considered an enabler
for the reduction mechanism and the analysis of approximate LP/SDP
formulations. The use of the reduction mechanism is exemplified via
various new inapproximability results.

We stress that all results are \emph{independent} of P vs. NP and
pertain to solving combinatorial problems with LPs or SDPs.

\begin{enumerate}[leftmargin=1.5em]
\item \textbf{Factorization Theorem for Combinatorial Problems.}  The
  key element in the analysis of extended formulations is
  Yannakakis's celebrated Factorization Theorem (see
  \cite{Yannakakis91,Yannakakis88}) and its generalizations (see e.g.,
  \cite{GouveiaParriloThomas2011,bfps2012,bfps2012jour,CLRS13})
  equating the minimal size of an extended formulation with a property
  of a slack matrix, e.g., in the linear case the nonnegative rank.
  We provide an abstract, unified version
  (Theorem~\ref{thm:factorization}) of these factorization theorems
  for combinatorial optimization problems \emph{and their
    approximations} acting directly on the problem. From
  an optimal factorization one can explicitly reconstruct an
  optimal encoding as a linear program or semidefinite program.
  As a consequence we characterize
  the linear programming complexity as well as
  the semidefinite programming complexity of a combinatorial problem
  \emph{independent} of the linear representation of the problem. Our
  employed model is essentially equivalent to the extended formulations
  approach as we will see (see Section~\ref{sec:relat-appr-EF}) and its main purpose
  is to facilitate the definition (and use) of the 
reduction mechanism.
  
\item \textbf{Reduction mechanism.}  We provide a purely combinatorial
  and conceptually simple framework for reductions (similar to
  $L$-reductions) of optimization problems in the context of LPs and
  SDPs (and in fact \emph{any} other conic programming paradigm),
  where approximations are inherent without the need of any polyhedra
  (and polyhedral pairs) as compared to e.g.,
  \cite{bfps2012,bfps2012jour}.  Thereby we overcome many technical
  difficulties that prevented direct reductions for approximate LPs or SDPs
  in the past.  Many reductions in the context of PCP
  inapproximability are compatible with our mechanism and hence can be
  reused.  Contrasting the above, so far LP inapproximability results have been only obtained
  for very restricted classes of problems (see
  \cite{bfps2012,CLRS13,bfps2012jour}) requiring a case-by-case
  analysis.  

\item \textbf{LP inapproximability and conditional SDP
    inapproximability of specific problems.}  Our reduction mechanism
  opens up the possibility to reuse previous hardness results to
  establish inapproximability of problems. As a case in point, we
  establish the first LP inapproximability result for \VertexCover and
  \MaxIndep as well as reproduce the results in \cite{CLRS13}, all by
  simple and direct reductions from the LP-inapproximability of
  \MaxXOR{2} within a factor better than \(\frac{1}{2}\)
  established in \cite{CLRS13}. For SDP formulations our reductions
  establish relative inapproximability between the considered problems
  as strong SDP hardness of approximation for our base problems
  \MaxXOR{k} are unknown. Several of our inapproximability
  results are new and others reproduce the results in \cite{CLRS13}
  via direct reductions from a single hard problem.  The conditional SDP
  inapproximability factors are formulated under the assumption that the
  Goemans-Williams SDP for \MaxCUT is optimal (see
  Conjecture~\ref{conj:SDP-base}), which is compatible with the Unique
  Games Conjecture. Alternatively, these reductions can be also
  combined with the \(15/16+\varepsilon\)
  SDP-hardness of approximation for \MaxCUT that was recently
  established in \cite{braun2015strong} to obtain unconditional (but
  weaker) inapproximability factors. To obtain the respective factors
  it suffices to replace \(c_{GW}\) with \(15/16\) in the  arguments. 

In particular,
  we answer an open question regarding
  the inapproximability of \VertexCover (see
  \cite{CLRS13}) and we answer a weak version of our
sparse graph conjecture posed in \cite{BFP2013}. We obtain results as
provided in the following table:

  \begin{center}
    \begin{tabulary}{.9\textwidth}{ L | C | C | c | c }
      & \multicolumn{3}{ c| }{Inapproximability} & Approximability  \\
      Problem
      & (LP) & (SDP under \ref{conj:SDP-base}) & (PCP) & (LP)
  \\
  \hhline{=|=|=|=|=}
  \VertexCover \marknew & $\frac{3}{2} - \varepsilon$
  & \(1.12144 - \varepsilon\)
  & $1.361 - \varepsilon$
  & $2$ \\
  \MinCNF{2} \marknew & \(\omega(1)\) & \(\omega(1)\) &
  \(2.889 - \varepsilon\) &
  --- \\
  \MinUnCUT \marknew & \(\omega(1)\) & \(\omega(1)\) &
  \(C_{\text{uncut}} - \varepsilon\) &
  --- \\
  \MaxMULTICUT{k} \marknew &
  \(\frac{2 c(k) + 1}{2 c(k) + 2} + \varepsilon\) &
  \(\frac{c(k) + c_{GW}}{c(k) + 1}\) &
  \(1 - \frac{1}{34 k} + \varepsilon\) &
  \(\frac{1}{2(1-1/k)}\) \\
  bounded degree \MaxIndep \marknew & \(\frac{1}{2} + \varepsilon\)
  & \(c_{GW} + \varepsilon\)
  & \(O\left( \frac{\log^{4} \Delta}{\Delta} \right)\) &
---  \\
  \MaxSAT{2} & $\frac{3}{4} + \varepsilon$ \markoptimal &
  $\frac{1 + c_{GW}}{2} + \varepsilon \approx 0.93928 +
  \varepsilon$ &
  $\frac{21}{22}+\varepsilon$ &
  $\frac{3}{4}$ \\
  \MaxSAT{3} & $\frac{3}{4} +
  \varepsilon$ & $\frac{1 + c_{GW}}{2} + \varepsilon \approx 0.93928 +
  \varepsilon$ &
  $\frac{7}{8} + \varepsilon$ & $\frac{19}{27} $
  \\
  \MaxDicut &  $\frac{1}{2} + \varepsilon$ \markoptimal &
  \(c_{GW} + \varepsilon\) &
  $\frac{12}{13} + \varepsilon$ & \(\frac{1}{2}\)
  \\
  \MaxCONJSAT{2} & \(\frac{1}{2} + \varepsilon\) \markoptimal &
  \(c_{GW} + \varepsilon\) & \(\frac{9}{10} + \varepsilon\)&
  \(\frac{1}{2}\) \\
\end{tabulary}
\end{center}
Here \(c_{GW} \approx 0.87856\) is the approximation factor
of the algorithm for \MaxCUT
from \cite{GoemansWilliamson95},
and \(c(k)\) is a constant depending on \(k\) defined in
Section~\ref{sec:problemmax-multi-k}.
\end{enumerate}

Note that our conditional SDP inapproximability of \MaxSAT{2} and
\MaxSAT{3} within is
\((1 + c_{GW}) / 2 + \varepsilon \approx 0.93928 + \varepsilon\),
better than the current best unconditional inapproximability of
\(7/8\)
from \cite[Theorem~1.5]{LRS14}; see Section~\ref{sec:futh-hardn-appr}
for details. Moreover, there exists a linear program for \MaxIndep
achieving an approximation guarantee of \(2 \sqrt{n}\),
which is indeed better than the \(n^{1-\varepsilon}\)
hardness of approximation by \cite{haastad1999clique}, so that we
cannot expect to obtain a \(n^{1-\varepsilon}\) hardness of approximation ; see
\cite{VertexCover2015} for a detailed discussion.

As a nice (minor) byproduct, the presented framework augments the results in
\cite{extform4,bfps2012,braverman2012information,bfps2012jour,Rothvoss13}
to be independent of the chosen linear encoding (see
Section~\ref{sec:examples}), which allows us to reuse these results
directly in
reductions. Also, approximations of the slack matrix
provide approximate programs for the optimization problem, as shown in
Theorem~\ref{thm:round-LP}. Our approach also immediately carries over to
symmetric formulations (see \cite{SDPmatching2015} for SDPs)
and other conic programming
paradigms; the details are left to the interested reader.

Finally, we would also like to note that our model and reductions have
been already applied to obtain inapproximability results for
approximate \GraphIsomorphism in \cite{GraphIso2015}, exponential
lower bound for symmetric SDPs for \problem{Matching} (see
\cite[Theorem~3.1]{SDPmatching2015}),
and\((2 - \varepsilon)\)-inapproximability
for \VertexCover together with inapproximability of \MaxIndep within
any constant factor in \cite{VertexCover2015} improving our
Theorem~\ref{thm:VertexCover} by a significantly more involved
argument. An extension of this reduction mechanism is presented in
\cite{braun2015strong} relaxing some of our conditions, which enables the
study of fractional optimization problems (such as e.g., Sparsest Cut).

\section{Optimization problems}
\label{sec:programs}

We intend to study the required size of a linear
program or semidefinite program capturing a combinatorial optimization
problem with specified approximation guarantees.  In our context an
optimization problem is defined as follows.

\begin{definition}[Optimization problems]
  \label{def:max-problem}
  An \emph{optimization problem}
  \(\oprob = (\mathcal{S}, \mathcal{F}, \val)\)
  consists of a set \(\mathcal{S}\)
  of \emph{feasible solutions} and a set \(\mathcal{F}\)
  of \emph{instances},
  together with a real-valued objective function
  \(\val \colon \mathcal{S} \times \mathcal{F} \to \R\).
\end{definition}

A wide class of examples consist of constraint satisfaction problems
(CSPs):

\begin{definition}[Maximum Constraint Satisfaction Problem (CSP)]
  \label{def:maxCSP}
  A \emph{constraint family} \(\mathcal{C} = \{C_1,\dots, C_m\}\)
   on the boolean variables \(x_1,\dots,x_n\) 
  is a family of
  boolean functions \(C_i\) in \(x_1,\dots,x_n\).
  The \(C_{i}\) are \emph{constraints} or \emph{clauses}.
  The problem \(\mathcal{P}(\mathcal{C})\)
  corresponding to a constraint family \(\mathcal{C}\) has
   \begin{enumerate}
   \item \textbf{feasible solutions}
      all 0/1 assignments \(s\) to \(x_1,\dots,x_n\);
   \item \textbf{instances}
     all nonnegative weightings \(w_{1}, \dotsc, w_{m}\)
     of the constraints \(C_{1},\dotsc, C_{m}\)
   \item \textbf{objective function}
     the weighted sum of satisfied constraints:
     \(\sat_{w_{1},\dotsc, w_{m}}(s) = \sum_i w_i C_i(s)\).
   \end{enumerate}
   The goal is to maximize the weights of satisfied constraints,
   in particular CSPs are maximization problems.
  A \emph{maximum Constraint Satisfaction Problem}
  is an optimization problem
  \(\mathcal{P}(\mathcal{C})\)
  for some constraint family \(\mathcal{C}\).
  A \(k\)-CSP is a CSP where every constraint depends
  on at most \(k\) variables.
\end{definition}
For brevity, we shall simply use CSP for a maximum CSP,
when there is no danger of confusion with a minimum CSP.
In the following, we shall restrict to instances
with 0/1 weights, i.e., an instance is a subset
\(L \subseteq \mathcal{C}\) of constraints,
and the objective function computes
the number
\(\sat_{L}(s) = \sum_{C \in L} C(s)\) of constraints in \(L\)
satisfied by assignment \(s\).
Restriction to specific instances clearly
does not increase formulation complexity.

As a special case,
the \MaxXOR{k} problem restricts to constraints,
which are XORs of \(k\) literals.
Here we shall write the constraints in the equivalent
equation form
\(x_{i_{1}} \oplus \dotsb \oplus x_{i_{k}} = b\),
where \(\oplus\) denotes the addition modulo \(2\).

\begin{definition}[\MaxXOR{k}]
  For fixed \(k\) and \(n\),
  the problem \(\MaxXOR{k}\) is
  the CSP for variables \(x_{1}, \dotsc, x_{n}\)
  and the family \(\mathcal{C}\) of all constraints
  of the form
  \(x_{i_{1}} \oplus \dotsb \oplus x_{i_{k}} = b\)
  with \(1 \leq i_{1} < \dotsb < i_{k} \leq n\)
  and \(b \in \{0, 1\}\).
\end{definition}

An even stronger important restriction is \MaxCUT, a subproblem of
\(\MaxXOR{2}\) as we will see soon. The aim is to determine the maximum size of cuts
  for all graphs \(G\) with \(V(G)=[n]\).

\begin{definition}[\MaxCUT]
  \label{def:maxCut}
  The problem \MaxCUT
  has instances all simple graphs \(G\) with vertex set
  \(V(G) = [n]\),
  and feasible solutions all cuts on \([n]\),
  i.e., functions \(s \colon [n] \to \{0,1\}\).
  The objective function \(\val\) computes
  the number of edges \(\{i, j\}\) of \(G\)
  cut by the cut, i.e.,
  with \(s(i) \neq s(j)\).

  The problem \(\MaxCUT[\Delta]\) is the subproblem of \MaxCUT
  considering only graphs \(G\)
  with maximum degree at most \(\Delta\).
\end{definition}

We have \(\val_{G}(s) = \sat_{L(G)}(s)\),
for the constraint set
\(L(G) = \set{x_{i} \oplus x_{j} = 1}{\{i,j\} \in E(G)}\),
realizing \MaxCUT as a
subproblem of \MaxXOR{2} with the same feasible solutions.

We are interested
in approximately solving an optimization problem \(\oprob\)
by means of a linear program or
a semidefinite program.
Recall that a typical PCP inapproximability result
states that it is hard to decide between
\(\max \val_{i} \leq S(i)\) and \(\max \val_{i} \geq C(i)\)
for a class of instances \(i\)
and some easy-to compute functions \(S\) and \(C\) usually refereed to
as \emph{soundness} and \emph{completeness}.
Here and below \(\max \val_{i}\) denotes
the maximum value of the function \(\val_{i}\)
over the respective set of
feasible solutions.
We adopt the terminology to linear programs and semidefinite programs.
We start with the linear case.
\begin{definition}[LP formulation of an optimization problem]
  \label{def:LP-formulation} 
  Let
  \(\oprob = (\mathcal{S}, \mathcal{F}, \val)\)
  be an optimization problem
  with
  real-valued functions \(C, S\) on \(\mathcal{F}\),
  called \emph{completeness guarantee}
  and \emph{soundness guarantee}, respectively.
  If \(\oprob\) is a maximization problem,
  then let \(\mathcal{F}^{\mathcal{S}} \coloneqq
  \set{f \in \mathcal{F}}{\max \val_{f} \leq S(f)}\)
  denote the set of instances, for which the maximum is
  upper bounded by soundness guarantee \(S\).
  If \(\oprob\) is a minimization problem,
  then let \(\mathcal{F}^{\mathcal{S}} \coloneqq
  \set{f \in \mathcal{F}}{\min \val_{f} \geq S(f)}\)
  denote the set of instances, for which the minimum is
  lower bounded by soundness guarantee \(S\).

  A \emph{\((C, S)\)-approximate LP formulation} of \(\oprob\)
  is a linear program \(A x \leq b\)
  with \(x \in \R^{d}\)
together with the following \emph{realizations}:
  \begin{enumerate}
  \item \textbf{Feasible solutions} as vectors \(x^{s} \in \R^{d}\)
    for every \(s \in \mathcal{S}\) so that 
  \begin{align}
    \label{eq:LP-contain}
    A x^{s} &\leq b \qquad \text{for all } s \in \mathcal{S},   
  \end{align}
  i.e., the system \(Ax \leq b\) is a relaxation (superset) of
  \(\conv{x^s \mid s \in \mathcal{S}}\).
  \item \textbf{Instances} as affine functions
    \(w^{f} \colon \R^{d} \to \R\)
    for every \(f \in \mathcal{F}^{\mathcal{S}}\)
    such that
    \begin{align}
      \label{eq:LP-linear}
      w^{f}(x^{s}) & = \val_{f}(s)      \qquad \text{for all } s \in
      \mathcal{S}, 
    \end{align}
    i.e., we require that the linearization \(w^{f}\) of \(\val_{f}\)
    is exact on
all \(x^s\) with \(s \in \mathcal{S}\).
  \item \textbf{Achieving guarantee \(C\)}
  via requiring
  \begin{align}
    \label{eq:LP-approx}
    \max  \set{w^{f}(x)}{A x \leq b} &\leq C(f)
    \qquad \text{for all } f \in \mathcal{F}^{\mathcal{S}},
  \end{align}
for maximization problems (resp. \(\min
  \set{w^{f}(x)}{A x \leq b} \geq C(f)\) for minimization problems).
  \end{enumerate}
  The \emph{size} of the formulation is the number of inequalities
  in \(A x \leq b\).
  Finally, the \emph{LP formulation complexity} \(\fc(\oprob, C, S)\)
  of the problem \(\oprob\) is
  the minimal size of all its LP formulations.
\end{definition}

For all instances
\(f \in \mathcal{F}\) soundness and completeness should satisfy
\(C(f) \geq S(f)\) in the case of maximization problems and \(C(f) \leq S(f)\)
in the case of minimization problems in order to capture the notion of
relaxations and we assume this condition in the remainder of the paper.

\begin{remark}
  We use affine maps instead of linear maps to
  allow easy shifting of functions.
  At the cost of an extra dimension and an extra equation,
  affine functions can be realized as linear functions.
\end{remark}
\begin{remark}[Inequalities vs. Equations]
  Traditionally in extended formulations, one would
  separate the description into equations and inequalities and one
  would only count inequalities.
  In our framework,
  equations can be eliminated by restricting to the affine space
  defined by them, and parametrizing it as a vector space.
  However, note that restricting to linear functions,
  one might need an equation to represent affine functions
  by linear functions.
\end{remark}

For determining the exact maximum of a maximization problem,
one chooses \(C(f) = S(f) \coloneqq \max \val_{f}\).
To show inapproximability
within an approximation factor \(0 < \rho \leq  1\),
one chooses guarantees satisfying \(\rho C(f) \geq  S(f)\).
This choice is motivated to be comparable with factors of
approximation algorithms finding a feasible solution \(s\)
with \(\val_{f}(s) \geq \rho \max \val_{f}\).
For minimization problem, \(C(f) = S(f) \coloneqq \min \val_{f}\)
in the exact case,
and \(\rho C(f) \leq S(f)\)
for an approximation factor \(\rho \geq 1\)
provided \(\val_{f}\) is nonnegative.
This model of \emph{outer} approximation
streamlines the models in \cite{CLRS13,BFP2013jour,lee2014power},
and also
captures, simplifies, and generalizes
approximate extended formulations from \cite{bfps2012,bfps2012jour};
see Section~\ref{sec:relat-appr-EF} for a discussion.

We will now adjust Definition~\ref{def:LP-formulation}
to the semidefinite case.
For symmetric matrices, as customary, we use the
Frobenius product as scalar product, i.e., \(\inp{A}{B} = \tr[A B]\).
Recall that the psd-cone is self-dual under this scalar product.

\begin{definition}[SDP formulation of an optimization problem]
  \label{def:SDP-formulation} 
  Let
  \(\oprob = (\mathcal{S}, \mathcal{F}, \val)\)
  be a maximization problem
  with
  real-valued functions \(C, S\) on \(\mathcal{F}\) and let \(\mathcal{F}^{\mathcal{S}} \coloneqq
  \set{f \in \mathcal{F}}{\max \val_{f} \leq S(f)}\)
  as in Definition~\ref{def:LP-formulation}.

  A \((C,S)\)-approximate \emph{SDP formulation} of
  \(\oprob\)
  consists of a linear map \(\mathcal{A} \colon \symM^{d} \to \R^{k}\)
  and a vector \(b \in \R^{k}\)
  (defining a semidefinite program
  \(\set{X \in \SDP^{d}}{\mathcal{A}(X) = b}\)).
  Moreover, we require the following \emph{realizations} of the
  components of \(\oprob\):
  \begin{enumerate}
  \item \textbf{Feasible solutions} as vectors \(X^{s} \in \SDP^{d}\)
    for every \(s \in \mathcal{S}\) so that
  \begin{align}
    \label{eq:SDP-contain}
    \mathcal{A}(X^{s}) = b
  \end{align}
  i.e., the system \(\mathcal{A}(X)  =  b, X \in \SDP^d\)
  is a relaxation of \(\conv{X^s}[s \in \mathcal{S}]\).
  \item \textbf{Instances} as affine functions
    \(w^{f} \colon \symM^{d} \to \R\)
    for every \(f \in \mathcal{F}^{\mathcal{S}}\)
    with
    \begin{equation}
      \label{eq:SDP-linear}
      w^{f}(X^{s}) = \val_{f}(s)      \qquad \text{for all } s \in
      \mathcal{S}, 
    \end{equation}
    i.e., we require that the linearization \(w^{f}\) of \(\val_{f}\)
    is exact on
all \(X^{s}\) with \(s \in \mathcal{S}\).
  \item \textbf{Achieving guarantee \(C\)}
    via requiring
  \begin{align}
    \label{eq:SDP-approx}
    \max  \set{w^{f}(X)}{\mathcal{A}(X^{s})  =  b,
      \ X^s \in \SDP^d}
    \leq C(f)
    \qquad \text{for all } f \in \mathcal{F}, 
  \end{align}
  for maximization problems,
  and the analogous inequality for minimization problems.
  \end{enumerate}

  The \emph{size} of the formulation is the parameter \(d\).
  The \emph{SDP formulation complexity} \(\fcp(\oprob, C, S)\) of the
  problem \(\oprob\) is the minimal size of all its SDP formulations.
\end{definition}

\subsection{Relation to approximate extended formulations}
\label{sec:relat-appr-EF}

Traditionally in extended formulations, one would start from an
initial polyhedral representation of the problem and bound the size of
its smallest possible lift in higher-dimensional space. In the linear
case for example, the minimal number of required inequalities would
constitute the extension complexity of that polyhedral representation.
Our notion of \emph{formulation complexity} can be understood as the
minimum extension complexity over all possible polyhedral encodings of
the optimization problem. This independence of encoding addresses
previous concerns that the obtained lower bounds are polytope-specific
or encoding-specific and alternative linear encodings (i.e., different
initial polyhedron) of the same problem might admit smaller
formulations: we showed that this is not the case. More precisely, in view of
the results from above the standard notion of extension complexity and
formulation complexity are essentially equivalent, however the more
abstract perspective simplifies the handling of approximations and
reductions as we will see in Section~\ref{sec:reductions}.

The notion of \emph{LP formulation}, its size, and
LP formulation complexity are closely related to \emph{polyhedral pairs}
and linear encodings 
(see \cite{bfps2012,bfps2012jour}, and also \cite{Pashkovich12}).
In particular, given a \((C, S)\)-approximate LP formulation of
a maximization problem \(\oprob\)
with linear program \(A x \leq b\),
representations \(\set{x^{s}}{s \in \mathcal{S}}\)
of feasible solutions
and \(\set{w^{f}}{f \in \mathcal{F}^{\mathcal{S}}}\) of
instances,
one can define a polyhedral pair encoding \(\oprob\)
    \begin{equation}
      \label{eq:minimal-pair}
      \begin{aligned}
        P &\coloneqq \conv{x^{s}}[s \in \mathcal{S}], \\
        Q &\coloneqq \set{x \in \R^d}{\inp{w^{f}}{x}
          \leqslant C(f),\ \forall f \in \mathcal{F}^{\mathcal{S}}\}}.
      \end{aligned}
    \end{equation}
Then for \(K \coloneqq \set{x}{A x \leq b}\),
we have \(P \subseteq K \subseteq Q\).
Note that there is no need for the approximating polyhedron \(K\)
to reside in extended space,
as \(P\) and \(Q\) already live there.

Put differently, the LP formulation complexity of \(\oprob\)
is the minimum
size of an extended formulation over all possible linear
encodings of \(\oprob\).
The semidefinite case is similar, with the only difference
being that \(K\) is now a spectrahedron,
being represented by a semidefinite program
instead of a linear program.

\section{Factorization theorem and slack matrix}
\label{sec:fact-theor-slack}

We provide an algebraic characterization of formulation complexity
via the \emph{slack matrix of an optimization problem},
similar in spirit to factorization theorems
for extended formulations
(see e.g.,
\cite{Yannakakis91,Yannakakis88,GouveiaParriloThomas2011,bfps2012,bfps2012jour}),
with a fundamental difference pioneered in \cite{CLRS13}
that there is no linear system to start from.
The linear or semidefinite program is constructed from scratch
using a matrix factorization.
This also extends \cite{BFP2013jour},
by allowing affine functions,
and using a modification of nonnegative rank,
to show that formulation complexity depends only on the slack matrix.

\begin{definition}[Slack matrix of \(\oprob\)]
  Let \(\oprob=(\mathcal{S}, \mathcal{F}, \val)\)
  be an optimization problem with guarantees \(C, S\).
  The \emph{\((C, S)\)-approximate slack matrix} of \(\oprob\)
is the nonnegative
  \(\mathcal{F}^{\mathcal{S}} \times \mathcal{S}\) matrix \(M\),
  with entries
  \begin{equation*}
    M(f, s) \coloneqq
    \begin{cases*}
      C(f) - \val_{f}(s) & if \(\oprob\) is a maximization problem, \\
      \val_{f}(s) - C(f) & if \(\oprob\) is a minimization problem.
    \end{cases*}
  \end{equation*}
\end{definition}

We introduce the \emph{LP factorization} of a nonnegative
matrix, which for slack matrices captures the LP formulation
complexity of the underlying problem.

\begin{definition}[LP factorization of a matrix]
  A \emph{size-\(r\) LP factorization} of \(M \in \R^{m \times  n}_+\)
  is a factorization
  \(M = TU + \mu \mathbb{1}\) where \(T \in \R_+^{m \times r}\),
  \(U \in \R_+^{r \times n} \) and \(\mu \in \R_{+}^{m \times 1}\).
  Here \(\mathbb{1}\) is the \(1 \times n\) matrix with all entries
  being \(1\).
  The \emph{LP rank \(\LPrk{M}\) of \(M\)} is the minimum
  \(r\) such that there exists a size-\(r\) LP factorization of
\(M\).
\end{definition}

A size-\(r\) LP factorization is
equivalent to a decomposition
\(M = \sum_{i \in [r]} u_iv_i^\intercal + \mu \mathbb{1}\) for
some (column) vectors \(u_i \in \R_+^{m}\), \(v_i \in \R_+^{n}\) with
\(i \in [r]\) and a column vector \(\mu \in \R_{+}^{m}\).
It is a slight modification of
a nonnegative matrix factorization, disregarding
simultaneous shift of all columns by the same vector,
i.e., allowing an additional
term \(\mu \cdot \mathbb{1}\) \emph{not} contributing to the
size, so clearly, \(\LPrk \leq \nnegrk M \leq \LPrk M + 1\).

One similarly defines SDP factorizations of nonnegative matrices.
\begin{definition}
  A \emph{size-\(r\) SDP factorization}
  of \(M \in \R^{m \times n}_+\)
  is a factorization is a collection of matrices
  \(T_1, \dots, T_m \in \SDP^r\)
  and \(U_1, \dots, U_n \in \SDP^r\)
  together with \(\mu \in \R_{+}^{m \times 1}\)
  so that \(M_{ij} = \tr[T_i U_j] + \mu(i)\).
  The \emph{SDP rank \(\psdrk{M}\) of \(M\)}
  is the minimum \(r\)
  such that there exists a size-\(r\) SDP factorization of \(M\).
\end{definition}

For the next theorem,
we need the folklore formulation of
linear duality using affine functions,
see e.g., \cite[Corollary~7.1h]{schrijver86._LP}.

\begin{lemma}[Affine form of Farkas's Lemma]
  \label{lem:Farkas}
  Let \(P \coloneqq \set{x}{A_{j} x \leq b_{j}, j \in [r]}\)
  be a non-empty polyhedron.
  An affine function \(\varPhi\) is nonnegative on \(P\)
  if and only if
  there are nonnegative multipliers \(\lambda_{j}\), \(\lambda_0\) with
  \begin{equation*}
    \varPhi(x) \equiv \lambda_{0}
    + \sum_{j \in [r]} \lambda_{j} (b_{j} - A_{j}x)
    .
  \end{equation*}
\end{lemma}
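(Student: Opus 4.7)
The \emph{if} direction is immediate: for any \(x \in P\) we have \(b_{j} - A_{j} x \geq 0\) by definition of \(P\), and all \(\lambda_{j}\), \(\lambda_{0}\) are nonnegative, so the right-hand side is a nonnegative combination of nonnegative quantities; hence \(\varPhi(x) \geq 0\) on \(P\). So the work is entirely in the \emph{only if} direction.

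For the converse, write \(\varPhi(x) = c^{\intercal} x + d\) with \(c \in \R^{n}\), \(d \in \R\). The assumption that \(\varPhi\) is nonnegative on the non-empty polyhedron \(P\) is equivalent to saying that the linear program
\begin{equation*}
  m \coloneqq \min \set{c^{\intercal} x}{A_{j} x \leq b_{j},\ j \in [r]}
\end{equation*}
satisfies \(m \geq -d\). Since \(P \neq \emptyset\) and the LP is bounded below, by the fundamental theorem of linear programming it has a finite optimum and LP duality applies. The dual problem is \(\max \set{-\lambda^{\intercal} b}{A^{\intercal} \lambda = -c,\ \lambda \geq 0}\), so there exist nonnegative multipliers \(\lambda_{1}, \dotsc, \lambda_{r} \geq 0\) with \(\sum_{j} \lambda_{j} A_{j} = -c\) and \(-\sum_{j} \lambda_{j} b_{j} = m\).

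Given these \(\lambda_{j}\), the plan is to verify the claimed identity by direct computation and to read off the correct constant term. Expanding,
\begin{equation*}
  \sum_{j \in [r]} \lambda_{j} (b_{j} - A_{j} x)
  = \sum_{j} \lambda_{j} b_{j} - \Bigl(\sum_{j} \lambda_{j} A_{j}\Bigr) x
  = -m + c^{\intercal} x,
\end{equation*}
so setting \(\lambda_{0} \coloneqq d + m\) yields \(\lambda_{0} + \sum_{j} \lambda_{j}(b_{j} - A_{j} x) = c^{\intercal} x + d = \varPhi(x)\). Finally, \(\lambda_{0} = d + m \geq 0\) because \(m \geq -d\), which was exactly the hypothesis that \(\varPhi \geq 0\) on \(P\). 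This produces the required nonnegative certificate.

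The main (and only) subtlety is ensuring that the LP admits an optimal dual solution rather than a mere infimum: this uses non-emptiness of \(P\) together with the lower bound \(-d\) on the objective to rule out unboundedness, after which strong LP duality applies. Everything else is algebraic bookkeeping, which is why the result is folklore and can simply be quoted from \cite[Corollary~7.1h]{schrijver86._LP}.
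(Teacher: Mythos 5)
The paper does not actually prove this lemma; it only cites it as folklore from \cite[Corollary~7.1h]{schrijver86._LP}. Your proof is correct and is exactly the standard LP strong-duality derivation that sits behind that citation: nonemptiness of \(P\) plus the lower bound \(-d\) gives a finite primal optimum, strong duality yields an attained dual optimizer \(\lambda \geq 0\), and then the algebra pins down \(\lambda_0 = d + m \geq 0\). The only thing worth flagging is a small notational slip — \(A_j\) is a row of \(A\), so the dual feasibility condition should read \(\sum_j \lambda_j A_j = -c^{\intercal}\) (equivalently \(A^{\intercal}\lambda = -c\)); your subsequent expansion uses this correctly, so it has no effect on the argument.
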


We are ready for the factorization theorem for optimization
problems.

\begin{theorem}[Factorization theorem for formulation complexity]
  \label{thm:factorization}
  Consider an optimization problem \(\oprob = (\mathcal{S},
  \mathcal{F}, \val)\) with \((C, S)\)-approximate slack matrix \(M\).
  Then we have
  \begin{equation*}
    \fc(\oprob, C, S) = \LPrk M, \qquad \text{and} \qquad
    \fcp(\oprob, C, S)= \SDPrk M.
  \end{equation*}
  for linear formulations and semidefinite formulations, respectively.
\end{theorem}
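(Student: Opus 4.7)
The plan is to prove both equalities by showing each complexity and rank bound each other in both directions. I will focus on the LP case; the SDP case follows by analogous reasoning with the positive orthant replaced by the positive semidefinite cone and Farkas's Lemma replaced by conic duality over \(\SDP^{r}\).

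\textbf{Formulation \(\Rightarrow\) factorization (\(\fc(\oprob, C, S) \geq \LPrk M\)).} Take an \((C, S)\)-approximate LP formulation \(A x \leq b\) with realizations \(\{x^{s}\}_{s \in \mathcal{S}}\) and \(\{w^{f}\}_{f \in \mathcal{F}^{\mathcal{S}}}\) of size \(r\), i.e., \(A \in \R^{r \times d}\). For a maximization problem, consider the affine function \(\varPhi_{f}(x) \coloneqq C(f) - w^{f}(x)\). By \eqref{eq:LP-approx}, \(\varPhi_{f}\) is nonnegative on the polyhedron \(\{A x \leq b\}\), which is nonempty since it contains every \(x^{s}\). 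Lemma~\ref{lem:Farkas} then yields nonnegative scalars \(\lambda^{f}_{0}, \lambda^{f}_{1}, \dotsc, \lambda^{f}_{r}\) with
\begin{equation*}
  C(f) - w^{f}(x) \;\equiv\; \lambda^{f}_{0} + \sum_{j \in [r]} \lambda^{f}_{j} (b_{j} - A_{j} x).
\end{equation*}
Evaluating at \(x = x^{s}\) and using \eqref{eq:LP-linear} gives
\(M(f, s) = \lambda^{f}_{0} + \sum_{j} \lambda^{f}_{j} (b_{j} - A_{j} x^{s})\).
Assembling \(T \in \R_{+}^{\mathcal{F}^{\mathcal{S}} \times r}\) with rows \((\lambda^{f}_{j})_{j}\), \(U \in \R_{+}^{r \times \mathcal{S}}\) with columns \((b_{j} - A_{j} x^{s})_{j}\) (nonnegative by \eqref{eq:LP-contain}), and \(\mu = (\lambda^{f}_{0})_{f}\), produces a size-\(r\) LP factorization \(M = T U + \mu \mathbb{1}\). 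The minimization case is handled by swapping the sign; \(\val_{f}(s) - C(f) \geq 0\) on the feasible region by \eqref{eq:LP-approx} for minimization, and the same Farkas argument applies.

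\textbf{Factorization \(\Rightarrow\) formulation (\(\fc(\oprob, C, S) \leq \LPrk M\)).} Given a size-\(r\) LP factorization \(M = T U + \mu \mathbb{1}\), I reconstruct a formulation living in \(\R^{r}\). Define feasible-solution realizations as the columns of \(U\), namely \(x^{s} \coloneqq U_{\cdot s} \in \R^{r}_{+}\). Take the linear program to consist of the \(r\) inequalities \(-x \leq 0\) (so \eqref{eq:LP-contain} holds because \(U\) is nonnegative). For an instance \(f\) in the maximization case, define the affine function
\begin{equation*}
  w^{f}(x) \coloneqq C(f) - T_{f \cdot}\, x - \mu(f).
\end{equation*}
Then \(w^{f}(x^{s}) = C(f) - M(f, s) = \val_{f}(s)\), proving \eqref{eq:LP-linear}. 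Moreover, since \(T_{f \cdot} \geq 0\) and the feasible region is the nonnegative orthant, \(w^{f}\) attains its maximum at \(x = 0\), giving \(C(f) - \mu(f) \leq C(f)\), which verifies \eqref{eq:LP-approx}. This yields an LP formulation of size exactly \(r\). Again, the minimization case is symmetric: set \(w^{f}(x) \coloneqq C(f) + T_{f \cdot} x + \mu(f)\) and minimize over \(\{x \geq 0\}\), attaining the minimum \(C(f) + \mu(f) \geq C(f)\) at \(x = 0\).

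\textbf{SDP case.} Replace the orthant by \(\SDP^{r}\) throughout. The upper bound construction sets \(X^{s} \coloneqq U_{s}\), uses the trivial constraint system \(\mathcal{A}(X) = 0\) with \(\mathcal{A} \equiv 0\) (so the feasible region is all of \(\SDP^{r}\)), and \(w^{f}(X) = C(f) - \tr[T_{f} X] - \mu(f)\); positive semidefiniteness of \(T_{f}\) forces the maximum over \(\SDP^{r}\) to occur at \(X = 0\). For the lower bound, I replace Lemma~\ref{lem:Farkas} with conic duality for the psd cone applied to the affine function \(C(f) - w^{f}(X)\), which is nonnegative on the feasible spectrahedron; the resulting Lagrange multiplier for the psd constraint is itself a psd matrix, which serves as the row \(T_{f}\) of the SDP factorization.

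\textbf{Main obstacle.} The subtlety to watch is the role of the additive shift \(\mu\): unlike in classical Yannakakis-style factorization, we work with affine rather than linear objectives and relaxations, so the shift is essential and must be threaded carefully through both directions, and explains precisely why \(\LPrk\) (not \(\nnegrk\)) is the right quantity. A secondary point to verify is that the construction applied to \emph{every} \(f \in \mathcal{F}^{\mathcal{S}}\) simultaneously produces an \emph{honest} formulation (the same polyhedron handles all instances), which is immediate here because the inequality system depends only on \(U\) and \(\mu\), not on the row of \(T\) used for \(f\).
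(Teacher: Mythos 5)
Your proof matches the paper's argument essentially line for line in the LP case: the lower bound on the rank via Lemma~\ref{lem:Farkas} applied to \(C(f)-w^f(x)\) with the slack variables \(b_j-A_jx^s\) forming \(U\), and the upper bound by reconstructing the program as the nonnegative orthant \(\{x\geq 0\}\) with \(x^s = U_s\) and \(w^f(x)=C(f)-\mu(f)-T_f x\). The only point you skim over is the SDP lower bound: unlike Farkas, psd conic duality requires a constraint qualification, and the paper handles the case where \(\set{X\in\SDP^r}{\mathcal{A}(X)=b}\) lacks an interior point by observing it then lies in a proper face of \(\SDP^r\), hence in a smaller psd cone to which one restricts before invoking strong duality; your proposal should add this remark to make the SDP direction airtight.
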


\begin{remark}
  The factorization theorem for polyhedral pairs (see
  \cite{bfps2012,Pashkovich12,bfps2012jour}) states that the
  nonnegative rank and extension complexity might differ by \(1\),
  which was slightly elusive. Theorem~\ref{thm:factorization}
  clarifies, that this is the difference between the LP rank and
  nonnegative rank, i.e., formulation complexity is a
  \emph{property of
  the slack matrix}.  Note also that for the slack matrix of a
  polytope, every row contains a \(0\)
  entry, and hence the \(\mu \mathbb{1}\)
  term in any LP factorization must be \(0\).
  Therefore the nonnegative rank and LP rank coincide for polytopes.
  Similar remarks apply to SDP factorizations.
\end{remark}

\begin{proof}[Proof of Theorem~\ref{thm:factorization}—the
linear case]
We will confine ourselves to the case of \(\oprob\) being a
maximization problem.
For minimization problems, the proof is analogous.

To prove \(\LPrk M \leq \fc(\oprob, C, S)\),
let \(A x \leq b\) be an arbitrary \((C, S)\)-approximate,
size-\(r\) LP formulation of \(\oprob\),
with realizations \(\set{w^{f}}{f \in \mathcal{F}^{\mathcal{S}}}\)
of instances
and \(\set{x^{s}}{s \in \mathcal{S}}\)
of feasible solutions.
We shall construct a size-\(r\)
nonnegative factorization of \(M\).
As \(\max_{x : Ax \leq b} w^{f}(x) \leq C(f)\)
by Condition~\eqref{eq:LP-approx},
via the affine form of Farkas's lemma,
Lemma~\ref{lem:Farkas}
we have
\[
C(f) - w^{f}(x)
= \sum_{j=1}^r T(f,j) \left( b_j - \inp{A_j}{x} \right) + \mu(f)
\]
for some nonnegative multipliers $T(f,j), \mu(f) \in \R_+$
with \(1 \leq j
\leq r\).
By taking $x = x^s$,
we obtain
\begin{align}
  \label{eq:1}
  M(f,s) &= \sum_{j=1}^r T(f,j) U(j,s) + \mu(f),
  &
  \text{with} \quad
  U(j,s) &\coloneqq
    b_j - \inp{A_j}{x^s} & \text{for } j > 0.
\end{align}
i.e., $M = TU + \mu \mathbb{1}$.
By construction, $T$ and \(\mu\) are nonnegative.
By Condition~\eqref{eq:LP-contain} we also obtain that $U$ is nonnegative.
Therefore $M = TU + \mu \mathbb{1}$
is a size-$r$ LP factorization of $M$.

For the converse, i.e., \(\LPrk \geq \fc(\oprob, C, S)\),
let \(M = TU + \mu \mathbb{1}\) be a size-\(r\) LP factorization.
We shall construct an LP formulation of size \(r\).
Let \(T_{f}\) denote the \(f\)-row of \(T\)
for \(f \in \mathcal{F}^{\mathcal{S}}\),
and \(U_{s}\) denote
the \(s\)-column of \(U\) for \(s \in \mathcal{S}\).
We claim
that the linear system
\(x \geq 0\)
with representations
\begin{align*}
  w^{f}(x) &\coloneqq C(f) - \mu(f) - T_{f} x
  \qquad \forall f \in \mathcal{F}^{\mathcal{S}}
  &&\text{and}&
  x^{s} &\coloneqq U_{s} \qquad \forall s \in \mathcal{S}
\end{align*}
satisfies the requirements of
Definition~\ref{def:LP-formulation}.
Condition~\eqref{eq:LP-linear} is implied by
the factorization \(M = T U + \mu \mathbb{1}\):
\begin{equation*}
  w^{f}(x^{s}) = C(f) - \mu(f) - T_{f} U_{s} = C(f) - M(f, s) =
 \val_{f}(s).
\end{equation*}
Moreover, \(x^{s} \geq 0\),
because \(U\) is
nonnegative, so that Condition~\eqref{eq:LP-contain} is fulfilled.
Finally, Condition~\eqref{eq:LP-approx} also follows readily:
\begin{equation*}
  \max \set{w^{f}(x)}{x \geq 0}
  = \max \set{C(f) - \mu(f) - T_{f} x}{x \geq 0} = C(f) - \mu(f)
  \leq C(f),
\end{equation*}
as the nonnegativity of \(T\) implies \(T_{f} x \geq 0\); equality
holds e.g., for \(x = 0\).
Recall also that \(\mu(f) \geq 0\).
Thus we have constructed an LP formulation with \(r\) inequalities,
as claimed.
\end{proof}

\begin{remark}
  It is counter-intuitive that \(0\) is always a maximizer,
  and, actually, it is an artifact of the construction.
  At a conceptual level,
  the polyhedron \(A x \leq b\) containing
  \(\conv{x^s \mid s \in \mathcal{S}}\)
  is represented as the intersection of the nonnegative cone
  with an affine subspace in the slack space.
  The affine functions \(w^{f}\) are extended to attain
  their optimum value on this intersection in the nonnegative cone,
  and thus also at \(0\), the apex of the cone.
  In particular, intersecting with the affine subspace is no longer
  needed.
  See Figure~\ref{fig:optimal-LP} for an illustration.
\end{remark}

\begin{remark}[Solution structure] 
Observe that the obtained LP
  formulation via the LP-factorization of the slack matrix also
  separates solutions \(s \in S\) into two disjoint classes \(S =
  S_{o} \cupdot S_n\), where \(S_o\) contains those solutions that
  potentially can be optimal for some function, i.e., it is the set of
  coordinate-wise minimal points in \(S\). The set \(S_n\) is
   the set of solution that are \emph{never} optimal for any \(f \in
  F\) as they are coordinate-wise dominated by at least one point in
  \(S_o\), i.e., for any \(s_n \in S_n\)
  there exists \(s_o \in S_o\) with \(s_o \leq s_n\) coordinate-wise. 

This might have applications in the context of inverse optimization
(see e.g., \cite{ahuja2001inverse}), where we would like to decide whether for a
given solution \(s \in \mathcal S\), there exists an \(f \in \mathcal
F\), that is maximized at \(s\). This can now be read off the factorization.
\end{remark}

\begin{figure}
  \centering
  \begin{tikzpicture}[scale=.9, >=latex,
    generatrix/.style={dashed}]
    \draw[->] (-1,0) -- (10, 0);
    \draw[->] (0, -1) -- (0, 10);
    \draw[->] (-.4, -.5) -- (2, 2.5);
    \draw[fill=gray]
    (1, 9/1) edge[generatrix] (0,0) --
    (2, 9/2) edge[generatrix] (0,0) --
    (3, 9/3) edge[generatrix] (0,0) --
    (5, 9/5) edge[generatrix] (0,0) --
    (9, 9/9) edge[generatrix] (0,0)
    --(7,7) node [right=2mm] (solution) {\(x^{s}\)} -- (5,9) --cycle;
    \node (conv) at (4.5,6) {\(\conv{x^{s}}[s \in \mathcal{S}]\)};
    \filldraw (9.5,4) circle[radius=4pt]
    edge (1, 9/1)
    edge (2, 9/2)
    edge (3, 9/3)
    edge (5, 9/5)
    edge (9, 9/9)
    edge (7, 7)
    edge (5, 9)
    node[right=2mm] {\(S_{n}\)};
    \draw[very thick] (-.5, 8) -- (4,-.5);
    \draw[->]
    (1.75, 3.75) -- +(-.85, -.45) node[below=1mm]{\(w^{f}\)};
  \end{tikzpicture}
  \caption{Linear program from an LP factorization. The LP is the
    positive orthant \(x \geq 0\). The point \(0\) is a maximizer for
    all linearizations \(w^f = C(f) - \mu(f) - T_f x\)
    of objective functions \(\val_{f}\) for all instances \(f\).
    The normals of all objective functions point in
    nonpositive direction as \(T_f \geq 0\).}
  \label{fig:optimal-LP}
\end{figure}
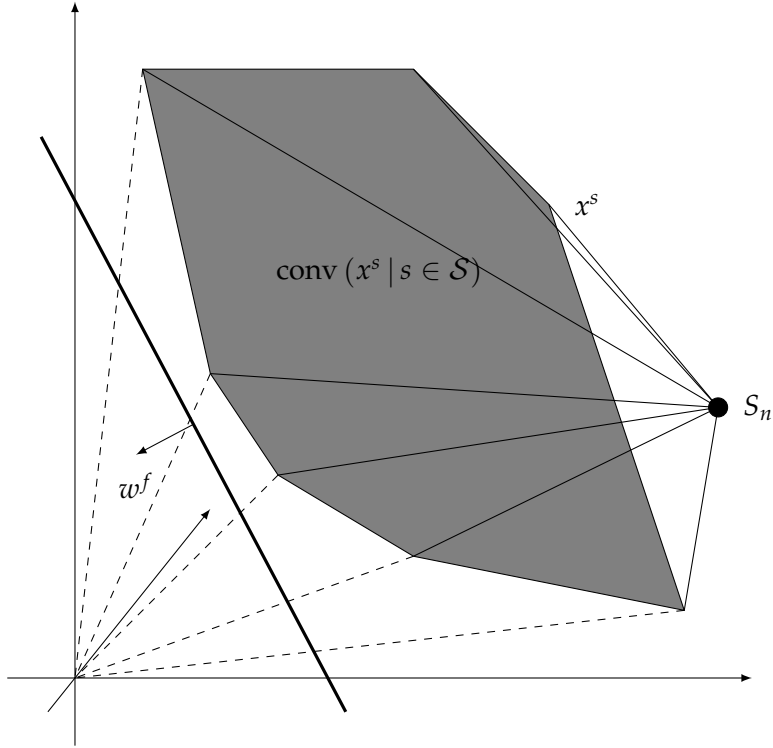

\begin{proof}[Proof of Theorem~\ref{thm:factorization}—the
  semidefinite case]
As before we confine ourselves to the case of \(\oprob\) being a
maximization problem.
The proof is analogous to the linear case,
but for the sake of completeness, we provide a full proof.

To prove \(\SDPrk M \leq \fcp(\oprob, C, S)\),
let \(\mathcal{A}(X) = b, X \in \SDP^{r}\) be an arbitrary size-\(r\)
SDP formulation of \(\oprob\),
with realizations \(\set{w^{f}}{f \in \mathcal{F}^{\mathcal{S}}}\)
of instances
and \(\set{X^{s}}{s \in \mathcal{S}}\)
of feasible solutions.
To apply strong duality,
we may assume that the convex set
\(\set{ X \in \SDP^{r}}{\mathcal{A}(X) = b}\)
has an interior point because otherwise
it would be contained in a proper face of \(\SDP^{r}\),
which is an SDP cone of smaller size.
We shall construct a size-\(r\)
SDP factorization of \(M\).
As \(\max_{X \in \SDP^{r} \colon \mathcal{A}(X) = b }
w^{f}(X) \leq C(f)\)
by Condition~\eqref{eq:SDP-approx},
via the affine form of strong duality,
we have
\[
C(f) - w^{f}(X)
 = \inp{T_{f}}{X} + \inp{y^{f}}{b - \mathcal{A}(X)} + \lambda_{f}
\]
for all \(f \in \mathcal F\)
with some $T_{f} \in \SDP^{r}$, \(y^{f} \in \R^k\) and
\(\lambda_{f} \in \R_{+}\).
By substituting \(X^{s}\) into \(X\),
we obtain
\begin{equation}
  \label{eq:SDP1}
  M(f,s)
  = C(f) - w^{f}(X^{s})
  = \inp{T_{f}}{X^{s}} + \inp{y^{f}}{b - \mathcal{A}(X^{s})}
  + \lambda_{f}
  = \inp{T_{f}}{X^{s}} + \lambda_{f},
\end{equation}
which is an SDP factorization of size \(r\).

For the converse,
i.e., \(\fcp(\oprob, C, S) \leq \SDPrk M\),
let \(M(f, s) = \inp{T_{f}}{U_{s}} + \mu(f)\) be
a size-\(r\) SDP factorization.
We shall construct an SDP formulation of size \(r\).
We claim that the SDP formulation: 
\begin{equation}
  \label{eq:SDP-factor-SDP}
  X \in \SDP^{r}
\end{equation}
with representations
\begin{align*}
  w^{f}(X) &\coloneqq
  C(f) - \mu(f) - \inp{T_{f}}{X}
  \qquad \forall f \in \mathcal{F}^{\mathcal{S}}
  &&\text{and}&
  X^{s} &\coloneqq U_{s}
  \qquad \forall s \in \mathcal{S}
\end{align*}
satisfies the requirements of
Definition~\ref{def:SDP-formulation}.
Condition~\eqref{eq:SDP-linear} follows by:
\begin{equation*}
  w^{f}(X^{s}) =  C(f) - \mu(f) - \inp{T_{f}}{U_{s}}
  = C(f) - M(f, s) = \val_{f}(s).
\end{equation*}
Moreover, the \(X^{s} = U_{s}\) are psd,
hence clearly satisfy the system
\eqref{eq:SDP-factor-SDP},
so that Condition~\eqref{eq:SDP-contain} is fulfilled.
Finally, Condition~\eqref{eq:SDP-approx} also follows readily.
\begin{equation*}
  \max \set{w^{f}(X)}{X \in \SDP^{r}}
  =
  \max \set{C(f) - \mu(f) - \inp{T_{f}}{X}}
  {X \in \SDP^{r}}
  = C(f) - \mu(f) \leq C(f),
\end{equation*}
as \(T_f\) and \(X\) being psd implies \(\inp{T_{f}}{X} \geq 0\);
equality holds e.g., for \(X = 0\).
Thus we have constructed an SDP formulation of size \(r\)
as claimed.
\end{proof}

\subsection{Examples}
\label{sec:examples}

In the context of optimization problems we typically differentiate two
types of formulations.  The \emph{uniform model} asks for a
formulation \emph{for a whole family of instances}.
Our Examples~\ref{ex:matching}, \ref{ex:Hamiltonian} and
\ref{ex:STAB-unif} are all uniform models.  The \emph{non-uniform
  model} asks for a formulation for the weighted version of
\emph{a specific problem},
where the instances differ only in the weighting.
Lower bounds or
inapproximability factors for non-uniform models are usually stronger
statements, as in the non-uniform case the formulation potentially
could adapt to the instance resulting in potentially smaller
formulations; see \cite{VertexCover2015} for such an example in the context of stable
sets. We refer the reader to \cite{CLRS13,BFP2013jour} for an
in-depth discussion.

The difference between uniform and non-uniform sometimes depends on
the point of view.
For graph problems, often the non-uniform model
for a graph $G$ induces a uniform model for the family of all of its
(induced) subgraphs by choosing 0/1 weights (see e.g.,
Definition~\ref{def:vertex-cover}).
In other words, the non-uniform model is actually a uniform model
for the class of all subinstances of \(G\).
Examples~\ref{ex:matching} and \ref{ex:Hamiltonian}
with \(0/1\) weights demonstrate this: they are uniform models  for all
subgraphs \(G \subseteq K_{n}\), but can also be viewed
as non-uniform models for \(K_{n}\).

Note that these models can also be used for studying
average case complexity.
For example, one might consider the complexity of the problem
for a randomly selected large class of instances using the uniform
model, or one might consider the non-uniform model for a randomly
selected instance.
For the maximum stable set problem, both random versions were examined
in \cite{BFP2013jour}.

\subsubsection{The matching problem revisited}
\label{sec:tsp-problem-matching}

The lower bounds in \cite{extform4,Rothvoss13} are concerned with
specific polytopes, namely the TSP polytope as well as the
matching polytope.
We obtain, as a slight generalization,
the same lower bounds for the Hamiltonian cycle problem (which is
captured by the
TSP problem with appropriate weights) as
well as the matching \emph{problem}, independent of choice of the
specific polytope that represents the encoding. 
Here we present only the matching problem.
The lower bound for the Hamiltonian cycle
problem will be obtained in Section~\ref{sec:face-reduction} via a
reduction. 

\begin{example}[Maximum matching problem]
  \label{ex:matching}
  The maximum matching problem \(\oprob_{\text{Match}}\) asks for the maximum size of matchings
  in a given graph.  While it can be solved in polynomial time, the
  matching polytope  has
  exponential extension complexity as shown in
  \cite{Rothvoss13}. Using the framework from above, we immediately
  obtain that the matching problem has high LP formulation complexity,
  reusing the lower bound on the nonnegative rank of the slack matrix
  of the matching polytope. 

  For the natural formulation of the problem in our framework,
  let \(n\) be fixed, and
  let the set \(\mathcal{S}\) of feasible solutions
  consist of all perfect matchings \(M\)
  of the complete graph on \(2n\) vertices,
  and the instances be all (simple) graphs \(G\)
  on \([2n]\).
  The value \(\val_{G}(M)\) for a graph \(G\) and
  a perfect matching \(M\)
  is defined to be
  \begin{equation*}
    \val_{G}(M) \coloneqq \size{M \cap E(G)}
  \end{equation*}
  the number of edges shared by \(M\) and \(G\),
  i.e., the size of the matching \(M \cap E(G)\) of \(G\).
  Clearly, all maximum matchings of \(G\) can be obtained this way
  (via extension to any perfect matching on \([2n]\)).
  Thus \(\max \val_{G}\)
  is the matching number \(\nu(G)\) of \(G\),
  the size of the maximum matchings of \(G\).

  Inspired by the description of the facets of the matching polytope
  in \cite{Edmonds65}, we only consider complete subgraphs
  on odd-sized subsets \(U\). For such a complete graph \(K_U\) on the
  odd-sized set \(U\), we have
  \(\max \val_{K_U} = \frac{\card{U}-1}{2}\).
  Let \(\delta(U)\) denote the set of all edges
  between \(U\) and its complement \([2n] \setminus U\).
  We have the identity \(\size{U} = 2 \size{M \cap E(K_{U})} + \size{M
    \cap \delta(U)}\) and 
thus obtain the slack matrix 
  for the exact problem
  (i.e., \(C(G) = \max \val_{G}\))
  \begin{equation*}
    S(K_{U}, M) \coloneqq C(K_{U})  - \val_{K_{U}}
    =  \frac{\size{U} - 1}{2} - \size{M \cap E(K_{U})}
    = \frac{\size{M \cap \delta(U)} - 1}{2}.
  \end{equation*}
  This submatrix has nonnegative rank \(2^{\Omega(n)}\) by
  \cite{Rothvoss13} and hence the LP formulation complexity of the
  maximum matching problem is \(2^{\Omega(n)}\), i.e.,
  \(\fc(\oprob_{\text{Match}}) = 2^{\Omega(n)}\). 

The result can be extended to the approximate case
with an approximation factor \((1 + \varepsilon / n)^{-1}\),
by invoking the lower bound for the resulting slack matrix from
\cite{BP2014matching}, showing that the maximum matching problem does
not admit any fully-polynomial size relaxation scheme.
Note that this is not unexpected as for the maximum matching problem
an approximation factor of about $1 - \frac{\varepsilon}{n}$
corresponds to an error
of less than one edge in the unweighted case for small $\varepsilon$,
so that the decision problem could be decided via the
approximation. This is a behavior similar to FPTAS and strong
NP-hardness for combinatorial optimization problems that are mutually
exclusive (under standard assumptions).
\end{example}

\subsubsection{Independent set problem}
\label{sec:stable-set-problem}

We provide an example for maximum independent sets in a uniform model; see
\cite{BFP2013jour} for more details as well as an average case analysis. 
Here there is no bound on the maximum degree of graphs,
unlike in Theorem~\ref{thm:VertexCover}.

\begin{example}[Maximum independent set problem (uniform model)]
  \label{ex:STAB-unif}
Let us consider the maximum
independent set problem \(\oprob\)
    over some family \(\mathcal{G}\) of graphs \(G\)
    where \(V(G) \subseteq [n]\) with aim to estimate the maximum size
    \(\alpha(G)\) of
    independent sets in each \(G \in \mathcal{G}\).

    A natural choice is to let the
    \emph{feasible solutions} be all subsets \(S\)
    of \([n]\), and the \emph{instances} be all
    \(G \in \mathcal{G}\).
    The objective function is
    \begin{equation*}
      \val_{G}(S) \coloneqq \size{V(G) \cap S} - \size{E(G(S))}.
    \end{equation*}
    Here \(\val_{G}(S)\) can be easily seen to lower bound
    the size of an independent set,
    obtained from \(S\) by removing vertices not in \(G\),
    and also removing one end point of every edge
    with both end points in \(S\).
    Clearly, \(\val_{G}(S) = \size{S}\) for independent sets \(S\) of
    \(G\), i.e., in this case our choice is exact.
    Thus \(\alpha(G) = \max_{S \subseteq [n]} \val_{G}(S)\).

Let us consider the special case when \(\mathcal{G}\)
is the set of all simple graphs with \(V(G) \subseteq [n]\).
We shall use guarantees
\(S(G) \coloneqq \max \val_{G} = \alpha(G)\)
and \(C(G) \coloneqq \rho^{-1} \val_{G}\)
for an approximation factor \(0 < \rho \leq 1\).
Restricting to complete graphs \(K_{U}\) with \(U \subseteq [n]\),
the obtained slack matrix is
a \((\rho^{-1} - 1)\)-shift of the (partial) unique disjointness
matrix, hence for approximations within a factor of \(\rho\), we
obtain the lower bound \(\fc(\oprob, C, \max) \geq 2^{\frac{n \rho}{8}}\)
with \cite{braverman2012information,BP2013}.   

See \cite{BFP2013jour} for other choices of \(\mathcal{G}\), such as
e.g., randomly choosing the graphs.
\end{example}

\subsubsection{$k$-juntas via LPs}
\label{sec:k-juntas-via}

It is well-known that the level-\(k\) Sherali–Adams hierarchy
captures all nonnegative \(k\)-juntas,
i.e., functions \(f \colon \binSet^n \rightarrow
\R_+\) that depend only on \(k\) coordinates of the input (see e.g.,
\cite{CLRS13}) and it can be written as a linear program using
\(O(n^k)\) inequalities. We will now show that this is essentially
optimal for \(k\) small. 

\begin{example}[\(k\)-juntas] We consider the problem of maximizing nonnegative
  \(k\)-juntas over the \(n\)-dimensional hypercube.
  Let the set of instances \(\mathcal{F}\)
  be the family of all nonnegative \(k\)-juntas and
  let the set of feasible solutions be \(\mathcal{S} = \binSet^{n}\),
  with \(\val_{f}(s) \coloneqq \val_{f}(s)\).
  We put \(C(f) = S(f) = \max_{s \in \mathcal S} \val_{f}(s)\).

As we are interested in a lower bound we will confine ourselves to a
specific subfamily of functions \(\mathcal F' \coloneqq \set{f_a}{a
  \in \binSet^n, \card{a} = k} \subseteq \mathcal F\) with 
\[f_a(b) \coloneqq a^\intercal b - 2 \binom{a^\intercal b}{2},\]
and hence \(C(f_{a}) = 1\).
Clearly \(\card{\mathcal F'} = \binom{n}{k}\), so that the nonnegative
rank of the slack matrix 
\[S_{a,b} \coloneqq C(f_{a}) - f_a(b) = 1 - a^\intercal b + 2
\binom{a^\intercal b}{2}  = (1 - a^\intercal b)^2,\]
with \(a, b \in \binSet^n\) and \(\card{a} = k\) is at most
\(\binom{n}{k}\).

Now for each \(f_a \in \mathcal F'\) we have that \(C(f_{a}) - f_a(b) =
(1 - a^\intercal b)^2 = 1\) if \(a \cap b = \emptyset\) and there are
\(2^{n-k}\) such choices for \(b\) for a given \(a\). Thus the matrix \(S\) has 
\(\binom{n}{k} 2^{n-k}\) entries \(1\) arising from disjoint pairs
\(a,b\). However in \cite{2013arXiv1307.3543K} it was shown that any
nonnegative rank-1 matrix can cover at most \(2^n\) of such
pairs. Thus the nonnegative rank of \(S\) is at least
\[\frac{\binom{n}{k}2^{n-k}}{2^n} = \frac{\binom{n}{k}}{2^k}.\]
The latter is \(\Omega(n^k)\) for \(k\) constant and at least
\(\Omega(n^{k-\alpha})\) for \(k = \alpha \log n\) with \(\alpha \in
\N\) constant and \(k > \alpha\). Thus the LP formulation for
\(k\)-juntas derived from the level-\(k\) Sherali-Adams hierarchy is
essentially optimal for small \(k\).
\end{example}

\section{Affine Reductions for LPs and SDPs}
\label{sec:reductions}
We will now introduce natural reductions
between problems, with control on approximation
guarantees that translate to the underlying LP and SDP level. 

\begin{definition}[Reductions between problems]
  \label{def:reduction}
  Let $\oprob_{1} = (\mathcal{S}_{1}, \mathcal{F}_{1}, \val)$ and
  $\oprob_{2} = (\mathcal{S}_{2}, \mathcal{F}_{2}, \val)$ be
  maximization problems. 
  Let \(C_{1}, S_{1}\) and \(C_{2}, S_{2}\) be guarantees
  for \(\oprob_{1}\) and \(\oprob_{2}\) respectively.
  A \emph{reduction} from $\oprob_{1}$ to $\oprob_{2}$
  respecting these guarantees
  consist of
  two maps:
\begin{enumerate}
\item $\beta \colon \mathcal{F}_{1}^{\mathcal{S}_{1}}
  \to \cone{\mathcal{F}_{2}^{\mathcal{S}_{2}}} + \R$
  rewriting instances
  as formal nonnegative combinations:
  $\beta(f_{1}) \coloneqq
  \sum_{f \in \mathcal{F}_{2}^{\mathcal{S}_{2}}} b_{f_1,f} \cdot f
  + \mu(f_1)$
  with $b_{f_1,f} \geq 0$
  for all $f \in \mathcal F_{2}^{\mathcal{S}_{2}}$;
  the term
  \(\mu(f_{1})\) is called the \emph{affine shift}
\item $\gamma \colon \mathcal{S}_1 \to \conv{\mathcal{S}_2}$
  rewriting solutions as
  formal convex combination of \(\mathcal{S}_{2}\):
  $\gamma(s_{1}) \coloneqq
  \sum_{s \in \mathcal S_2} a_{s_1,s} \cdot s$
  with $a_{s_1,s} \geq 0$ for all $s \in \mathcal S_2$ and
  $\sum_{s \in \mathcal S_2} a_{s_1,s} = 1$;
\end{enumerate}
  subject to
  \begin{equation}
    \label{eq:red-exact}
    \val_{f_{1}}(s_{1}) =
    \sum_{\substack{f \in \mathcal{F}_{2}^{\mathcal{S}_{2}} \\
        s \in \mathcal{S}_{2}}}
    b_{f_1,f} a_{s_1,s} \cdot \val_{f}(s)
    + \mu(f_1)
    ,
    \qquad s_{1} \in \mathcal{S}_{1},\
    f_{1} \in \mathcal{F}_{1}^{\mathcal{S}_{1}}
    ,
  \end{equation}
  expressing representation of the objective function of
  \(\oprob_{1}\)
  by that of \(\oprob_{2}\),
  and additionally
  \begin{equation}
    \label{eq:red-guarantee}
    C_{1}(f_{1}) \geq \sum_{f \in \mathcal{F}_{2}^{\mathcal{S}_{2}}}
    b_{f_1,f} \cdot C_{2}(f)
    + \mu(f_1)
    ,
    \qquad f_{1} \in \mathcal{F}_{1}^{\mathcal{S}_{1}},
  \end{equation}
  ensuring feasibility of the completeness guarantee.
\end{definition}

Observe that the role of soundness guarantees of
\(\oprob_1\) and \(\oprob_2\) in the definition is to
restrict the instances considered:
the map \(\beta\) involves only the instances
whose optimum value is bounded by these guarantees.
One can analogously define reductions
involving minimization problems.
E.g., for a reduction from a maximization problem \(\oprob_{1}\)
to a minimization problem \(\oprob_{2}\),
the formulas are
\begin{align*}
  \beta(f_{1}) &\coloneqq \mu(f_1) -
  \sum_{f \in \mathcal{F}_{2}^{\mathcal{S}_{2}}} b_{f_1,f} \cdot f
  \\
  \val_{f_{1}}(s_{1}) &= \mu(f_1) -
  \sum_{\substack{f \in \mathcal{F}_{2}^{\mathcal{S}_{2}} \\
      s \in \mathcal{S}_{2}}}
  b_{f_1,f} a_{s_1,s} \cdot \val_{f}(s)
  \\
  C_{1}(f_{1}) &\geq \mu(f_1) - \sum_{\val_{f} \in
  \mathcal{F}_{2}^{\mathcal{S}_{2}}} b_{f_1,f} \cdot C(f).
\end{align*}

Note that elements in $\mathcal S_2$ are obtained as convex
combinations, while elements in $\mathcal F_2$ are obtained as
nonnegative combinations and a shift.
The additional freedom for instances allows
scaling and shifting the function
values.

In a first step we will verify that a reduction between
optimization problems
$\oprob_{1}$ to $\oprob_{2}$
naturally extends to potential LP and SDP
formulations.

\begin{proposition}[Reductions of formulations]
\label{prop:probRed}
Consider a
reduction from an optimization problem $\oprob_{1}$
to another one $\oprob_{2}$
respecting completion and soundness guarantees
\(C_{1}, S_{1}\) and \(C_{2}, S_{2}\).
Then \(\fc (\oprob_{1}, C_{1}, S_{1})
\leq \fc (\oprob_{2}, C_{2}, S_{2})\)
and \(\fcp (\oprob_{1}, C_{1}, S_{1})
\leq \fcp (\oprob_{2}, C_{2}, S_{2})\).
\begin{proof}
We will use the notation from Definition~\ref{def:reduction}
for the reduction.
We only prove the claim for LP formulations
and for two maximization problems,
as the proof is analogous for SDP formulations and when
either or both problems are minimization problems.
Let us choose an LP formulation \(A x \leq b\) of \(\oprob_{2}\)
with \(x^{s}\) realizing \(s \in \mathcal{S}_{2}\)
and \(w^{f}\) realizing \(f \in \mathcal{F}_{2}^{\mathcal{S}_{2}}\).
For \(\oprob_{1}\) we shall use the same linear program
\(A x \leq b\)
with the following realizations \(y^{s_{1}}\) of
feasible solutions \(s_{1} \in \mathcal{S}_{1}\),
and \(u^{f_{1}}\) of instances
\(f_{1} \in \mathcal{F}_{1}^{\mathcal{S}_{1}}\), where
\begin{align*}
  y^{s_{1}} &\coloneqq
  \sum_{s \in \mathcal S_2} a_{s_1,s} \cdot x^{s},
  &
  u^{f_{1}}(x) &\coloneqq
  \sum_{f \in \mathcal{F}_{2}^{\mathcal{S}_{2}}} b_{f_1,f} \cdot w^{f}(x)
  + \mu(f_1)
  .
\end{align*}
As \(y^{s_{1}}\) is a convex combination of the \(x^{s}\),
obviously \(A y^{s_{1}} \leq b\).
The \(u^{f_{1}}\) are clearly affine functions with
\begin{equation*}
  u^{f_{1}}(y^{s_{1}})
  =
  \sum_{f \in \mathcal{F}_{2}^{\mathcal{S}_{2}}} b_{f_1,f} \cdot
  w^{f} \left(
    \sum_{s \in \mathcal S_2} a_{s_1,s} \cdot x^{s}
  \right)
  + \mu(f_1)
  =
  \sum_{f \in \mathcal{F}_{2}^{\mathcal{S}_{2}}} b_{f_1,f}
  \sum_{s \in \mathcal S_2} a_{s_1,s} \cdot w^{f}(x^{s})
  + \mu(f_1)
  = \val_{f_{1}}(s_{1})
\end{equation*}
by Eq.~\eqref{eq:red-exact}.
Moreover,
by Eq.~\eqref{eq:red-guarantee}.
\begin{equation*}
  \max_{A x \leq b} u^{f_{1}}(x) \leq
  \sum_{f \in \mathcal{F}_{2}^{\mathcal{S}_{2}}} b_{f_1,f} \cdot
  \max_{A x \leq b} w^{f}(x)
  + \mu(f_1)
  \leq
  \sum_{f \in \mathcal{F}_{2}^{\mathcal{S}_{2}}} b_{f_1,f} \cdot
  C_{2}(f)
  + \mu(f_1)
  \leq C_{1}(f_{1}).
  \qedhere
\end{equation*}
\end{proof}
\end{proposition}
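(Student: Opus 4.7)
The plan is to directly transfer any LP (resp.\ SDP) formulation of $\oprob_2$ with guarantees $(C_2, S_2)$ into one for $\oprob_1$ with guarantees $(C_1, S_1)$ of the same size, using the two maps $\beta, \gamma$ that comprise the reduction. This gives $\fc(\oprob_1, C_1, S_1) \leq \fc(\oprob_2, C_2, S_2)$ and analogously for $\fcp$. Alternatively, one could route the argument through Theorem~\ref{thm:factorization} by constructing an LP factorization of the slack matrix of $\oprob_1$ from one of $\oprob_2$, but the direct construction on formulations is cleaner.

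First, I would fix an LP formulation $A x \leq b$ of $\oprob_2$ with realizations $\{x^s\}_{s \in \mathcal{S}_2}$ and $\{w^f\}_{f \in \mathcal{F}_2^{\mathcal{S}_2}}$ witnessing $(C_2, S_2)$-approximation, and reuse the same inequality system for $\oprob_1$. The obvious candidate realizations are
\begin{align*}
  y^{s_1} &\coloneqq \sum_{s \in \mathcal{S}_2} a_{s_1, s}\, x^{s},
  &
  u^{f_1}(x) &\coloneqq \sum_{f \in \mathcal{F}_2^{\mathcal{S}_2}} b_{f_1, f}\, w^{f}(x) + \mu(f_1),
\end{align*}
where the $a_{s_1, s}$ and $b_{f_1, f}$ are the coefficients from $\gamma$ and $\beta$. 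Each $y^{s_1}$ is a convex combination of the $x^{s}$, hence automatically satisfies $A y^{s_1} \leq b$ (Condition~\eqref{eq:LP-contain}), and each $u^{f_1}$ is affine as a nonnegative combination of affine functions plus a constant shift.

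Next, I would check the remaining two conditions of Definition~\ref{def:LP-formulation}. Linearization exactness \eqref{eq:LP-linear} is a direct computation: expanding $u^{f_1}(y^{s_1})$, using linearity of $w^f$, and collecting into the double sum $\sum_{f, s} b_{f_1, f} a_{s_1, s} w^f(x^s)$ turns the claim into Eq.~\eqref{eq:red-exact}. For the completeness bound \eqref{eq:LP-approx}, I would push the maximum through the sum using nonnegativity of the $b_{f_1, f}$:
\begin{equation*}
  \max_{A x \leq b} u^{f_1}(x)
  \leq \sum_f b_{f_1, f} \max_{A x \leq b} w^f(x) + \mu(f_1)
  \leq \sum_f b_{f_1, f} C_2(f) + \mu(f_1)
  \leq C_1(f_1),
\end{equation*}
where the middle inequality uses the guarantee assumed on the formulation of $\oprob_2$ and the last is precisely Eq.~\eqref{eq:red-guarantee}.

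The SDP case is entirely parallel: replace $x^{s} \in \R^d$ by $X^{s} \in \SDP^d$, replace $A x \leq b$ by $\mathcal{A}(X) = b,\ X \in \SDP^d$, and observe that convex combinations of psd matrices are psd, so $Y^{s_1} \coloneqq \sum_s a_{s_1, s} X^s$ is feasible. For minimization on either side, the same construction works after swapping the sign of the $\mu(f_1)$ shift and the direction of inequality in \eqref{eq:red-guarantee}. The only non-cosmetic step is the bound~\eqref{eq:LP-approx}, and its only content is that nonnegativity of the reduction coefficients permits pulling the maximum inside the sum; this is exactly why the definition of reduction insists on nonnegative combinations of instances rather than arbitrary linear combinations.
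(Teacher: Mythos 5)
Your proposal is correct and follows exactly the same route as the paper's own proof: reuse the linear system of the $\oprob_2$-formulation, push the solution realizations through $\gamma$ as convex combinations and the instance realizations through $\beta$ as nonnegative combinations plus the affine shift, then verify the three conditions of Definition~\ref{def:LP-formulation} using Eqs.~\eqref{eq:red-exact} and~\eqref{eq:red-guarantee}. The only cosmetic imprecision is describing the step $w^f\bigl(\sum_s a_{s_1,s} x^s\bigr) = \sum_s a_{s_1,s} w^f(x^s)$ as ``linearity'' of $w^f$; since $w^f$ is affine rather than linear, this identity actually relies on the coefficients $a_{s_1,s}$ summing to one, but that is exactly what the reduction guarantees.
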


\begin{remark}
  \label{rem:reduction-slack-matrix}
  At the level of matrices,
  Proposition~\ref{prop:probRed} can be equivalently formulated as
  follows.
  Whenever \(M_{1} = R \cdot M_{2} \cdot C + t \mathbb{1}\)
  with \(M_{1}\), \(M_{2}\), \(R\), \(C\) nonnegative matrices,
  and \(t\) a nonnegative vector,
  such that \(\mathbb{1} C = \mathbb{1}\),
  then \(\LPrk M_{1} \leq \LPrk M_{2}\) and
  \(\SDPrk M_{1} \leq \SDPrk M_{2}\).
  Note that given
  a reduction of \(\oprob_{1}\)
  to \(\oprob_{2}\)
  with the notation as in Definition~\ref{def:reduction},
  one chooses \(M_{1}\) and \(M_{2}\) to be the slack matrices
  of \(\oprob_{1}\) and \(\oprob_{2}\), respectively,
  together with matrices \(R, C\) and a vector \(t\)
  with the following entries:
  \begin{align}
    R(f_{1}, f) &= b_{f_{1}, f},
    \\
    C(s,s_{1}) &= a_{s_{1}, s},
    \\
    t(f)& = C_{2}(f_{1}) + \mu(f_1)
    - \sum_{\substack{f \in \mathcal F_2}} b_{f_1,f} \cdot C(f),
  \end{align}
  all nonnegative,
  satisfying \(M_{1} = R \cdot M_{2} \cdot C + t \mathbb{1}\).
  Now we briefly indicate a proof for this alternative formulation
  of Proposition~\ref{prop:probRed}.

  First, we prove the LP case.
  Given a size-\(r\) LP factorization
  \(M_{2} = \sum_{i \in [r]} u_{i} v_{i} + \mu \mathbb{1}\) of
  \(M_{2}\),
  one gets the size-\(r\) LP factorization
  \(M_{1} =  \sum_{i \in [r]} R u_{i} \cdot v_{i} C
  + (R \mu + t) \mathbb{1}\)
  of \(M_{1}\).

  For the SDP case,
  let \(M_{2}(i, j) = \tr[T_{i}  U_{j}] + \mu(i)\)
  be an SDP factorization of size \(r\).
  Then one can construct the following SDP factorization of \(M_{1}\)
  of size \(r\):
  \begin{equation*}
    \begin{split}
      M_{1}(f, s)
      &
      = \sum_{i,j} R(f,i) M_{2}(i,j) C(j,s) + t(f)
      = \sum_{i,j} R(f,i) (\tr[T_{i}  U_{j}] + \mu(i)) C(j,s) + t(f)
      \\
      &
      = \tr \left[
        \underbrace{\left(
            \sum_{i} R(f,i) T_{i}
          \right)}_{\widehat{T}_{f}}
        \cdot
        \underbrace{\left(
            \sum_{j} U_{j} C(j,s)
          \right)}_{\widehat{U}_{s}}
      \right]
      +
      \underbrace{\left(
          \sum_{i} R(f, i)\mu(i)  + t(f)
        \right)}_{\widehat{\mu}(f)},
   \end{split}
  \end{equation*}
  using \(\sum_{j} C(j,s) = 1\), i.e., \(\mathbb{1} C = \mathbb{1}\).
  Using the nonnegativity of \(R\), \(C\), \(\mu\), and \(t\),
  the \(\widehat{T}_{f}\) and \(\widehat{U}_{s}\) are psd,
  and the \(\widehat{\mu}(f)\) are nonnegative.
\end{remark}

For most problems, there is a natural \emph{size} \(\size{f}\)
of an instance \(f\), which is a nonnegative number,
and guarantees are often proportional to it.
In many cases there is no need to consider linear combinations
in reductions,
leading to a lightweight version of reduction of the form
\(\beta \colon \mathcal{F}_{1} \to \mathcal{F}_{2}\)
and \(\gamma \colon \mathcal{S}_{1} \to \mathcal{S}_{2}\).
For convenience, the objective function \(\val\) of
\(\oprob_{2}\) now appears on the left-hand side by rearranging.
\begin{corollary}[Inapproximability via simple reductions]
\label{cor:simpleReductions}
  Let \(\oprob_{1}\) and \(\oprob_{2}\) be two optimization problems.
  Let \(\oprob_{1}\) the completeness guarantee of \(\oprob_{1}\)
  have the form
  \begin{align*}
    C_{1}(f) &= \tau_{1} \size{f}, & f &\in \mathcal{F}_{1}
  \end{align*}
  proportional to the size \(\size{f}\) of each instance \(f\)
  with \(\size{f} \geq
  0\).  Furthermore, let
  \(\gamma \colon \mathcal{S}_{1} \rightarrow \mathcal{S}_{2}\)
  and
  \(\beta \colon \mathcal{F}_{1}^{\mathcal{S}_{1}} \rightarrow
  \mathcal{F}_{2}\)
  be maps satisfying for some constants \(\alpha, \mu\)
  \begin{align*}
    \val_{\beta(f_{1})} [\gamma(s_{1})] &=
    \alpha \val_{f_{1}}(s_{1}) + \mu \size{f_{1}}
    \\
    \size{\beta(f_{1})} &=
    (\abs{\alpha} + \mu) \cdot \size{f_{1}},
  \end{align*}
where 
  \(\alpha > 0\) if \(\oprob_{2}\) is a maximization problem,
  and
  \(\alpha < 0\) if \(\oprob_{2}\) is a minimization problem.
  Let the completeness guarantee \(C_{2}\) of \(\oprob_{2}\) be given as
  \begin{align*}
    C_{2}(f) &\coloneqq (\alpha \tau_{1} + \mu) \size{f}
    &
    f &\in \mathcal{F}_{2}
    .
  \end{align*}
  Furthermore let \(\sigma_{2}\) be a nonnegative number satisfying
  for all \(f_{1} \in \mathcal{F}_{1}^{\mathcal{S}_{1}}\)
  \begin{align*}
    \max \val_{\beta (f_{1})}
    &
    \leq
    \sigma_{2} \size{\beta(f_{1})}
    &&
    \text{if \(\oprob_{2}\) is a maximization problem}
    \\
    \min \val_{\beta (f_{1})}
    &
    \geq
    \sigma_{2} \size{\beta(f_{1})}
    &&
    \text{if \(\oprob_{2}\) is a minimization problem}
  \end{align*}
  and we set
  \begin{align*}
    S_{2}(f) &= \sigma_{2} \size{f}
    &
    f &\in \mathcal{F}_{2}
  \end{align*}
  Then \(\beta\) and \(\gamma\) form a reduction
  from \(\oprob_{1}\)
  with guarantees \(C_{1}, S_{1}\) to \(\oprob_{2}\) with
  guarantees \(C_{2}, S_{2}\).
  In particular, \(\oprob_{2}\) is inapproximable within a factor of
  \(\sigma_{2} / (\alpha \tau_{1} + \mu)\)
  by LP and SDP formulations of size less than
  \(\fc(\oprob_{1}, C_{1}, S_{1})\)
  and
  \(\fcp(\oprob_{1}, C_{1}, S_{1})\),
  respectively.
\end{corollary}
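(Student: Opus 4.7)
The strategy is to recognize the corollary as the special case of Definition~\ref{def:reduction} in which both $\gamma$ and $\beta$ collapse to pointwise maps, and then invoke Proposition~\ref{prop:probRed}. First I would fill in the data of a reduction as required by Definition~\ref{def:reduction}: take $a_{s_{1},s} = 1$ for $s = \gamma(s_{1})$ and $0$ otherwise (a degenerate convex combination), and $b_{f_{1},f} = 1/\alpha$ for $f = \beta(f_{1})$ and $0$ otherwise; in the max-to-min case one uses $-1/\alpha$ instead, which is positive because $\alpha<0$, matching the alternative definition quoted in the excerpt. The affine shift is chosen to cancel the inhomogeneous term of the value relation, namely $\mu(f_{1}) \coloneqq -\mu\size{f_{1}}/\alpha$.

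Next I would verify the two defining identities of a reduction. Eq.~\eqref{eq:red-exact} follows by substituting the chosen coefficients into its right-hand side and applying the hypothesis $\val_{\beta(f_{1})}[\gamma(s_{1})] = \alpha\val_{f_{1}}(s_{1}) + \mu\size{f_{1}}$: the scaling $1/\alpha$ together with the shift $-\mu\size{f_{1}}/\alpha$ is exactly engineered so that the $\alpha$ and the $\mu\size{f_{1}}$ cancel, leaving $\val_{f_{1}}(s_{1})$. The membership $\beta(f_{1}) \in \mathcal{F}_{2}^{\mathcal{S}_{2}}$ that is implicitly needed for $\beta$ to land in the admitted instances is immediate from the soundness hypothesis $\max\val_{\beta(f_{1})} \leq \sigma_{2}\size{\beta(f_{1})} = S_{2}(\beta(f_{1}))$ (or its minimization analogue).

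The main computational step is Eq.~\eqref{eq:red-guarantee}. Expanding its right-hand side using $C_{2}(\beta(f_{1})) = (\alpha\tau_{1}+\mu)(\abs{\alpha}+\mu)\size{f_{1}}$, one checks that it reduces to a scalar multiple of $\size{f_{1}}$ bounded by $C_{1}(f_{1}) = \tau_{1}\size{f_{1}}$; the particular forms of $C_{2}$ and of the size scaling $\size{\beta(f_{1})} = (\abs{\alpha}+\mu)\size{f_{1}}$ are chosen precisely so that both sides balance after the $1/\alpha$ rescaling. A case split on the sign of $\alpha$ (equivalently, on whether $\oprob_{2}$ is a maximization or a minimization problem) handles all variants of Definition~\ref{def:reduction}.

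Once the reduction is in place, Proposition~\ref{prop:probRed} yields $\fc(\oprob_{1}, C_{1}, S_{1}) \leq \fc(\oprob_{2}, C_{2}, S_{2})$ and similarly for SDP formulations. Since the ratio $S_{2}(f)/C_{2}(f) = \sigma_{2}/(\alpha\tau_{1}+\mu)$ is independent of $f$, any LP (respectively SDP) formulation of $\oprob_{2}$ achieving approximation factor $\sigma_{2}/(\alpha\tau_{1}+\mu)$ is by definition a $(C_{2},S_{2})$-approximate formulation, and thus has size at least $\fc(\oprob_{1}, C_{1}, S_{1})$ (resp.\ $\fcp(\oprob_{1}, C_{1}, S_{1})$). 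I expect the sign bookkeeping across the max/min combinations to be the most error-prone part; once that is handled, Eq.~\eqref{eq:red-guarantee} is a routine algebraic check with no deeper content.
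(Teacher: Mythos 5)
Your overall strategy is exactly the intended one: realize Corollary~\ref{cor:simpleReductions} as a special instance of Definition~\ref{def:reduction} with degenerate (point-mass) combinations, choose $a_{s_{1},\gamma(s_{1})}=1$, $b_{f_{1},\beta(f_{1})}=1/\alpha$ (respectively $-1/\alpha$ in the max-to-min case), set the affine shift $\mu(f_{1})=-\mu\size{f_{1}}/\alpha$, and then invoke Proposition~\ref{prop:probRed}. The verification of Eq.~\eqref{eq:red-exact} and the observation that soundness guarantees ensure $\beta(f_{1})\in\mathcal{F}_{2}^{\mathcal{S}_{2}}$ are both correct.

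The snag is precisely where you wave your hands: the claim that Eq.~\eqref{eq:red-guarantee} is ``a routine algebraic check with no deeper content'' because the stated formulas were ``chosen precisely so that both sides balance.'' If you substitute the corollary's displayed formulas literally, namely $C_{2}(\beta(f_{1}))=(\alpha\tau_{1}+\mu)\size{\beta(f_{1})}$ together with $\size{\beta(f_{1})}=(\abs{\alpha}+\mu)\size{f_{1}}$, the right-hand side of Eq.~\eqref{eq:red-guarantee} (in the max-to-max case) becomes
\begin{equation}
  \frac{(\alpha\tau_{1}+\mu)(\alpha+\mu)-\mu}{\alpha}\,\size{f_{1}}
  \;=\;
  \Bigl(\tau_{1}+\tfrac{\mu}{\alpha}\Bigr)(\alpha+\mu)\,\size{f_{1}}
  -\tfrac{\mu}{\alpha}\size{f_{1}},
\end{equation}
which equals $C_{1}(f_{1})=\tau_{1}\size{f_{1}}$ only when $(1-\alpha-\mu)\bigl(\tau_{1}+\mu/\alpha\bigr)=0$, a constraint not imposed in the statement and violated in several of the paper's own applications (for instance \MaxMULTICUT{k} has $\alpha=1$, $\mu=c(k)>0$). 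The resolution, which you would have discovered had you carried out the algebra, is that the guarantee on the image of $\beta$ must be read as $C_{2}(\beta(f_{1}))=(\alpha\tau_{1}+\mu)\size{f_{1}}$, i.e., the scaling is per unit of the $\oprob_{1}$-size (equivalently, the displayed $C_{2}$ formula needs a normalization by $\abs{\alpha}+\mu$). With that reading both sides of Eq.~\eqref{eq:red-guarantee} collapse to $\tau_{1}\size{f_{1}}$ exactly, and the rest of your argument, including the invocation of Proposition~\ref{prop:probRed} and the translation of the ratio $S_{2}/C_{2}$ into an inapproximability factor, goes through. So the gap is not a wrong high-level idea but a skipped calculation that, on inspection, is not as automatic as you asserted; spelling it out is precisely how one discovers the correct normalization.
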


In many cases also the soundness guarantee of \(\oprob_{1}\)
is proportional to the size of \(f\), i.e., 
\begin{align*}
  S_{1}(f) &= \sigma_{1} \size{f}, & f &\in \mathcal{F}_{1}.
\end{align*}
Then a common choice is \(\sigma_{2} = \alpha \sigma_{1} + \mu\),
provided that the reduction is \emph{exact}, i.e.,
\begin{align*}
  \opt_{\oprob_{1}} \val_{\beta(f_{1})} &=
  \alpha \opt_{\oprob_{2}} \val_{f_{1}} + \mu \size{f_{1}},
\end{align*}
where the operator \(\opt_{\oprob_{1}}\) is \(\max\)
when \(\oprob_{1}\) is a maximization problem,
and the operator \(\opt_{\oprob_{1}}\) is \(\min\)
when \(\oprob_{1}\) is a minimization problem.
The operator \(\opt_{\oprob_{2}}\) is defined similarly
for \(\oprob_{2}\). We shall write \(\fc (\oprob, \tau, \sigma)\)
for \(\fc(\oprob, C, S)\)
with \(C(f) = \tau \size{f}\) and \(S(f) = \sigma \size{f}\).

The base problems \(\oprob_{1}\) from which we reduce will be
the CSPs
\(\MaxCUT\), \(\MaxCUT[\Delta]\) or
\(\MaxXOR{k}\)
in our examples.
For CSPs,
the size of an instance, i.e., weighting
\((w_{1}, \dotsc, w_{m})\)
is the total weight \(\sum_{i \in [m]} w_{i}\) of all clauses,
For 0/1 weightings representing a subset \(L\) of clauses,
the size is just the number of elements of \(L\).

The following lower bounds on formulation complexity
are implicit in \cite{CLRS13},
where similar results are written out explicitly for other constraint
satisfaction problems. The problems below constitute our base
problems and play the same role as e.g.,  $\MaxXOR{3}$ in Håstad's PCP theorem (see
\cite{haastad2001some}).

\begin{theorem}
  \label{thm:HalfXOR}
  For every \(k \geq 2\) and \(\varepsilon > 0\),
  we have $\fc(\MaxXOR{k}, 1 - \varepsilon, 1/2 + \varepsilon)$
  and  \(\fc(\MaxCUT, 1 - \varepsilon, 1/2 + \varepsilon)\)
  are both at least \(n^{\Omega(\frac{\log n}{\log \log n})}\)
  for infinitely many \(n\).
  Moreover, for the bounded degree case we have \(\fc(\MaxCUT[\Delta], 1 - \varepsilon, 1/2 + \varepsilon)
  = n^{\Omega(\frac{\log n}{\log \log n})}\)
  for infinitely many \(n\),
  where \(\Delta\) is large enough depending on \(\varepsilon\).
\begin{proof}
By \cite[Theorem~12]{Schoen-k-CSP},
for \(k \geq 3\)
there are instances \(L\) of \MaxXOR{k},
with at most \(\frac{1}{2} + \varepsilon\) of the clauses satisfiable,
but with a feasible Lasserre solution of value \(1\) in round \(\Omega(n)\).
This means in the language of \cite{CLRS13} that
\MaxXOR{k} is \((1, \frac{1}{2} + \varepsilon)\)-inapproximable
in the \(\Omega(n)\)-round Lasserre hierarchy,
and therefore also in the Sherali–Adams hierarchy.
Therefore by \cite[Theorem~3.2]{CLRS13},
\MaxXOR{k} is also
\((1 - \varepsilon, \frac{1}{2} + \varepsilon)\)-inapproximable
by an LP formulation of size \(n^{O(\log n / \log \log n)}\)
for infinitely many \(n\),
which is just a reformulation of the claim for \MaxXOR{k}.

For \(\MaxCUT\)
and \(\MaxXOR{2}\)
the argument is similar, however one uses \cite[Proof of
Theorem~5.3(I)]{CMM2009} to show
\((1 - \varepsilon, \frac{1}{2} + \varepsilon)\)-inapproximability
of \MaxCUT, and hence of \MaxXOR{2}. As the construction uses only
bounded degree graphs, where the bound \(\Delta\)
depends on \(\varepsilon\)
but not on \(n\), it follows that our argument also works for \MaxCUT[\Delta].
\end{proof}
\end{theorem}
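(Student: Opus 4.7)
The plan is to derive the claimed lower bounds by combining two ingredients: a high-round Lasserre (or Sherali--Adams) integrality gap for the base problem, and the general bridge from hierarchy integrality gaps to arbitrary LP formulations established in \cite{CLRS13}. That bridge, stated as Theorem~3.2 in \cite{CLRS13}, roughly says that if a CSP exhibits a \((c, s)\)-integrality gap in the \(r\)-round Sherali--Adams hierarchy on \(n\) variables, then it is \((c - \varepsilon, s + \varepsilon)\)-inapproximable by general LPs of size \(n^{\Omega(r / \log n)}\). In particular, an \(\Omega(n)\)-round hierarchy gap translates into an \(n^{\Omega(\log n / \log \log n)}\) LP formulation lower bound, which is exactly the target bound.

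For \MaxXOR{k} with \(k \geq 3\), I would invoke Schoenebeck's construction \cite{Schoen-k-CSP}, which produces instances on \(n\) variables whose maximum satisfiable fraction is at most \(\frac{1}{2} + \varepsilon\), yet admitting a feasible Lasserre pseudo-solution of value \(1\) at round \(\Omega(n)\). Since Lasserre dominates Sherali--Adams, this yields a \((1, \frac{1}{2} + \varepsilon)\)-integrality gap at \(\Omega(n)\) rounds of Sherali--Adams, and feeding this into the CLRS13 bridge gives the desired \(n^{\Omega(\log n / \log \log n)}\) lower bound on \(\fc(\MaxXOR{k}, 1 - \varepsilon, 1/2 + \varepsilon)\).

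For \MaxCUT (and hence \MaxXOR{2}), Schoenebeck's argument is not directly available, as it crucially exploits arity at least three. Instead, I would invoke the \MaxCUT integrality gap of Charikar, Makarychev and Makarychev \cite{CMM2009}, whose proof of Theorem~5.3(I) produces graphs on \(n\) vertices with a \((1 - o(1), \frac{1}{2} + \varepsilon)\) Sherali--Adams gap at \(\Omega(n)\) rounds. The CLRS13 bridge then delivers the LP lower bound for \MaxCUT. This transfers to \MaxXOR{2} because \MaxCUT is a subproblem of \MaxXOR{2} with the same feasible solutions (via \(L(G) = \{\, x_i \oplus x_j = 1 : \{i,j\} \in E(G)\}\)), so any LP formulation of \MaxXOR{2} restricts to an LP formulation of \MaxCUT of no larger size. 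Crucially, the CMM construction uses graphs of bounded maximum degree \(\Delta = \Delta(\varepsilon)\) independent of \(n\), so the same argument directly bounds \(\fc(\MaxCUT[\Delta], 1 - \varepsilon, 1/2 + \varepsilon)\) once \(\Delta\) is taken large enough in terms of \(\varepsilon\).

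The main obstacle is verifying that the CLRS13 bridge applies in precisely the form required: one must check that the slight deterioration from \((1, \frac{1}{2} + \varepsilon)\) to \((1 - \varepsilon, \frac{1}{2} + \varepsilon)\) is what the translation actually produces, that the logarithmic blow-up from \(\Omega(n)\) rounds to \(n^{\Omega(\log n / \log \log n)}\) size is the correct quantitative conclusion of Theorem~3.2 of \cite{CLRS13}, and that the bounded-degree instances from \cite{CMM2009} really survive the translation. Once these checks are made, the theorem is essentially a direct transcription of known Lasserre and Sherali--Adams integrality gaps into the encoding-independent formulation-complexity language developed earlier in the paper.
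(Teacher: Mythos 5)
Your proposal follows essentially the same route as the paper: invoke Schoenebeck's \(\Omega(n)\)-round Lasserre gap for \(k\ge 3\), the Charikar--Makarychev--Makarychev Sherali--Adams gap for \MaxCUT (hence \MaxXOR{2}), and then push both through Theorem~3.2 of \cite{CLRS13} to obtain the \(n^{\Omega(\log n/\log\log n)}\) LP formulation lower bound, with the bounded-degree observation handling \MaxCUT[\Delta]. Your added remark that \MaxCUT is a subproblem of \MaxXOR{2} so the bound transfers is a correct elaboration of a step the paper leaves implicit, but it does not change the substance of the argument.
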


Recall from \cite{KhotKMO07} that under the Unique Games Conjecture,
\MaxCUT cannot be approximated better than \(c_{GW}\)
by a polynomial-time algorithm.
This motivates the following conjecture, which provides our 
SDP-hard base problem. For some problems it might be
possible to also reduce from \MaxSAT{3}, which is SDP-hard
to approximate within any factor better than \(7/8\)
\cite{LRS14}.

\begin{conjecture}[SDP inapproximability of MaxCUT]
  \label{conj:SDP-base}
  For every \(\varepsilon > 0\),
  and for every constant \(\Delta\) large enough
  depending on \(\varepsilon\),
  the formulation complexity
  \(\fcp(\MaxCUT[\Delta], 1 - \varepsilon,
  c_{GW} + \varepsilon)\)
  of \MaxCUT
  is superpolynomial.
\end{conjecture}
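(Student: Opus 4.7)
The conjecture is not a theorem in the paper but a working hypothesis; still, a natural strategy to attack it parallels the proof of Theorem~\ref{thm:HalfXOR} but at the SDP level. The plan is to combine two ingredients: (a) an SDP hierarchy integrality gap for bounded-degree \MaxCUT matching the Goemans–Williamson threshold $c_{GW}$, and (b) a generic translation theorem from SDP hierarchy gaps to lower bounds on SDP formulation complexity, analogous to \cite[Theorem~3.2]{CLRS13} in the LP case. The latter is exactly the type of statement provided by \cite{LRS14,lee2014power}, which converts Lasserre-gap instances of a CSP into superpolynomial lower bounds on $\fcp$ via the factorization viewpoint (here Theorem~\ref{thm:factorization}).

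Concretely, first I would seek a family of graphs $G_{n}$ of degree bounded by some $\Delta = \Delta(\varepsilon)$ such that $\max \val_{G_{n}} \leq (c_{GW}+\varepsilon) \size{E(G_{n})}$, yet the level-$t$ Lasserre SDP relaxation of \MaxCUT on $G_{n}$ has value at least $(1-\varepsilon)\size{E(G_{n})}$ for $t$ as large as possible (ideally $t = n^{\Omega(1)}$, or at least $t = \omega(\log n / \log \log n)$). Such gap instances at the $c_{GW}$ threshold would follow, for example, from a strong enough form of the Unique Games Conjecture together with known UG$\to$Lasserre transfer arguments, or from a direct combinatorial construction along the lines of the short-code / long-code based integrality gaps used in the LP case. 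Once in hand, feeding these instances into the SDP analogue of \cite[Theorem~3.2]{CLRS13} (i.e., the Lasserre-to-$\SDPrk$ reduction of \cite{LRS14}) upgrades the $(1-\varepsilon, c_{GW}+\varepsilon)$-hierarchy gap into the bound $\fcp(\MaxCUT[\Delta], 1-\varepsilon, c_{GW}+\varepsilon) \geq n^{\Omega(\log n/\log\log n)}$, which is superpolynomial, as required.

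The bounded-degree condition is then enforced by checking that the gap instances produced can be taken to be of bounded degree: here one would imitate the argument of \cite[Theorem~5.3]{CMM2009} used in the proof of Theorem~\ref{thm:HalfXOR}, passing from an arbitrary-degree \MaxCUT gap to a bounded-degree one via a local replacement / expander-product gadget that preserves both the true optimum and the Lasserre relaxation value up to $\varepsilon$ slack. This preservation is routine on the completeness side, and needs a pseudocalibration / local-to-global argument on the soundness side to lift local Lasserre solutions through the gadget; both steps parallel the LP analysis of Theorem~\ref{thm:HalfXOR} but in the SDP setting.

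The main obstacle is clearly step (a): producing Lasserre-hierarchy integrality gaps for \MaxCUT that match $c_{GW}$ is an open problem essentially equivalent in spirit to the $2\to 2$/Unique Games program, and the best unconditional SDP formulation lower bound currently available is the $15/16$ factor from \cite{braun2015strong} (which the paper itself notes can be substituted for $c_{GW}$ to get a weaker unconditional statement). Any proof of the full conjecture at the $c_{GW}$ threshold would therefore have to either assume UGC-type hypotheses and invoke a UG$\to$Lasserre reduction, or produce a direct $c_{GW}$-tight Lasserre gap for \MaxCUT, which would itself be a breakthrough; the remaining ingredients (the factorization theorem and the CLRS/LRS-style translation) are already in place.
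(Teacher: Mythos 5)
You correctly recognize that this is a \emph{conjecture}, not a theorem; the paper offers no proof, so there is no argument to compare your sketch against. Your proposed route---construct a $c_{GW}$-tight Lasserre/SoS integrality gap for bounded-degree \MaxCUT and push it through an \cite{LRS14}-style hierarchy-to-$\fcp$ translation, mirroring how the LP bound of Theorem~\ref{thm:HalfXOR} is obtained from a Sherali--Adams gap via \cite{CLRS13}---is the natural attack and is consistent with how the paper frames the conjecture (compatibility with UGC, with the unconditional $15/16$ fallback from \cite{braun2015strong}). Your identification of step~(a) as the genuine obstacle is accurate.

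One caveat is worth flagging. The phrase ``UG$\to$Lasserre transfer'' overstates what is known. The Unique Games Conjecture is an NP-hardness hypothesis, not an integrality-gap statement, and it does not by itself yield Lasserre/SoS gap instances; in fact the state of the art runs the other way---the best candidate UG-hard instances (e.g.\ short-code constructions) are known to be refuted by constant-degree SoS, and whether SoS solves Unique Games in general is itself a central open question. So the ingredient you really need is a direct $c_{GW}$-tight SoS gap for \MaxCUT (at degree growing with $n$), which would be a substantial result independent of UGC, not a routine transfer from it. With that correction your assessment stands: the factorization machinery (Theorem~\ref{thm:factorization}), the \cite{LRS14} translation, and the bounded-degree reduction used in Theorem~\ref{thm:HalfXOR} are available, and it is precisely the $c_{GW}$-threshold SoS gap that is missing.
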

Note however that for fixed \(\Delta\),
there are algorithms achieving an approximation factor of
\(c_{GW} + \varepsilon\) by \cite{Feige2002201},
hence in the conjecture \(\Delta\) should go to infinity
as \(\varepsilon\) tends to \(0\).

Finally, we remark that by \cite[Lemma~2.9]{GoodGW1991}
there are graphs \(G\)
where the Goemans–Williamson SDP
is off by a factor of \(c_{GW} + \varepsilon\).
For simplicity of calculations, however,
we assume for the conjecture that there are also such graphs
with SDP optimum \((1 - \varepsilon) \size{E(G)}\).

\subsection{Facial reductions and formulation complexity}
\label{sec:face-reduction}

As the notion of formulation complexity does not directly deal
with polytopes, there is no direct translation of
monotonicity of extension complexity under faces and projections (see
\cite{extform4}).
Thus many reductions that have been used
in the context of extension complexity
and polytopes do not apply, such as e.g., the one from TSP to matching
in \cite{Yannakakis88,Yannakakis91}.
Often however, the facial reduction underlies a reduction
between the problems as defined in Definition~\ref{def:reduction}.
To exemplify this we provide the underlying reduction
from TSP to matching.

\begin{example}[Maximum weight Hamiltonian cycles (uniform model)]
  \label{ex:Hamiltonian}
 We want to find a Hamiltonian cycle with maximum weight in a weighted graph.
  We consider only nonnegative weights as customary.

  Therefore for a fixed \(n\),
  we choose the \emph{feasible solutions} to be all Hamiltonian cycles
  \(C\)
  of the complete graph \(K_{n}\) on \([n]\),
  and the \emph{instances} are weighted subgraphs \(G\) of \(K_{n}\)
  with nonnegative weights.
  The objective function has the form
  \begin{equation*}
    \val_{G}(C) \coloneqq \sum_{e \in C \cap E(G)} w_{e}.
  \end{equation*}
  We shall consider the exact problem,
  i.e., with guarantees
  \(C(G) = S(G) = \max \val_{G}\).

In order to have a finite family of instances,
  one could restrict the weights to e.g., \(1\),
  essentially asking
  for the maximum number of edges
  a Hamiltonian cycle can have in common with a given subgraph.
  For the following reduction, we will use weights
  \(\face{1,2}\) and we adapt Yannakakis's construction
  to reduce the
  maximum matching problem on \(K_{2n}\)
  to the maximum weight Hamiltonian cycle problem on \(K_{4n}\).
  To simplify notation, we identify \([4n]\) with
  \(\{0,1\} \times [2n]\),
  i.e., the vertices are labelled by pairs \((i, j)\) with \(i\)
  being \(0\) or \(1\) and \(j \in [2n]\).
  Given a graph \(G\) on \([2n]\),  we think of it as being
  supported on \(\{0\} \times [2n]\).

  We consider the weighted graph \(\widetilde{G}\)
  with edges and weights:
  \begin{center}
    \begin{tabular}{ccc}
      Edge & Weight \\
      \hline
      \(\{(0,j), (1, j)\}\) & \(2\) & \(j \in [2n]\) \\
      \(\{(1, j), (1, k)\}\) & \(1\) & \(j, k \in [2n]\) \\
      \(\{(0, j), (0, k)\}\) & \(1\) & \(\{j,k\} \in E(G)\)
    \end{tabular}
  \end{center}
  For every perfect matching \(M\) on \([2n]\), choose a Hamiltonian
  cycle \(C_{M}\) containing the edges
  \begin{enumerate}
  \item \(\{(0,j), (1, j)\}\) for \(j \in [2n]\),
  \item \(\{(0, j), (0, k)\}\) for \(\{j,k\} \in E(G)\),
  \item \(n\) additional edges of the form \(\{(1, j), (1, k)\}\)
    to obtain a Hamiltonian cycle.
  \end{enumerate}
  Note that
  \begin{equation}
    \label{eq:TSP-matching-weight}
    \val_{\widetilde{G}}(C_{M}) = 5n + \size{M \cap E(G)}.
  \end{equation}

  We now determine the maximum of \(\val_{\widetilde{G}}\)
  on all Hamiltonian cycles.
  Therefore let \(C\) be an arbitrary Hamiltonian cycle.
  Let us consider \(C\) restricted to \(\{0\} \times [2n]\);
  its components are (possible empty) paths.
  Let \(k\) be the number of components,
  which are non-empty paths, and contained in \(G\).
  Obviously, \(k \leq \nu(G)\), where \(\nu(G)\) is the matching
  number,
  as selecting one edge from every such component
  provides a \(k\)-matching of \(G\).

  Let \(l\) be the number of components containing
  at least one edge not in \(G\).
  Note that \(k + l \leq n\),
  because choosing one edge of all these \(k+l\)
  components, we obtain a \((k+l)\)-matching on \([2n]\),
  similarly as in the previous paragraph.

  Finally, let \(m\) be the number of single vertex components.
  Therefore \(C\) contains exactly \(2n - (k+l+m)\)
  edges on \(\{0\} \times [2n]\), of which at least \(l\)
  are not contained in \(G\).
  Hence the contribution of these edges
  to the weight \(\val_{\widetilde{G}}(C)\) is at most
  \begin{equation}
    \label{eq:wG-C-0}
    \val_{\widetilde{G}}(C \cap E(\{0\} \times [2n]))
    \leq 2n - (k+l+m) - l = 2n - k - 2l - m.
  \end{equation}
  Moreover, the cycle \(C\) contains exactly \(2n - (k+l+m)\) edges
  on \(\{1\} \times [2n]\) whose contribution to the weight is
  \begin{equation}
    \label{eq:wG-C-1}
    \val_{\widetilde{G}}(C \cap E(\{1\} \times [2n])) = 2n - (k+l+m).
  \end{equation}
  Finally, \(C\) contains \(2 (k+l+m)\) edges between the partitions
  \(\{0\} \times [2n]\) and \(\{1\} \times [2n]\),
  all of which have weight at most \(2\).
In fact, at each of the \(m\) single vertex components
  in \(\{0\} \times [2n]\), only one of the edges can be of the form
  \(\{(0, j), (1, j)\}\),
  the other edge must have weight \(0\).
  Therefore the contribution of the edges between the partitions
  is at most
  \begin{equation}
    \label{eq:wG-C-0-1}
    \val_{\widetilde{G}}(C \cap E(\{0\} \times [2n],
    \{1\} \times [2n])
    \leq
    2 [2(k+l+m) - m] = 4k + 4l + 2m.
  \end{equation}
  Summing up Eqs.~\eqref{eq:wG-C-0}, \eqref{eq:wG-C-1}
  and \eqref{eq:wG-C-0-1},
  we obtain the following upper bound on the weight of \(C\):
  \begin{equation*}
    \val_{\widetilde{G}}(C) \leq 4n + 2k + l \leq 5n + k
    \leq 5n + \nu(G).
  \end{equation*}
  Together with Eq.~\eqref{eq:TSP-matching-weight},
  this proves \(\max_{C} \val_{\widetilde{G}}(C) = 5n + \nu(G)\).
  Thus the \(\val_{\widetilde{G}}\) and \(C_{M}\)
  reduce the maximum matching problem
  to the maximum Hamiltonian cycle problem with
  \(\beta(G) = \widetilde{G}\), \(\gamma(M) = C_{M}\)
  and \(\mu(G) = 5n\).
  Hence the LP formulation complexity of
  the maximum Hamiltonian cycle problem is \(2^{\Omega(n)}\) by
  Proposition~\ref{prop:probRed}.
\end{example}

\section{Inapproximability of \VertexCover and \MaxIndep}
\label{sec:vertex-cover}

We will now establish inapproximability results for \VertexCover and
\MaxIndep via reduction from \MaxCUT, even for bounded degree
subgraphs.  These two problems are of particular interest, answering a question of
\cite{Singh10} and \cite{CLRS13} as well as a weak version of sparse
graph conjecture from \cite{BFP2013}. Moreover, \VertexCover is not of
the CSP
type, therefore the framework in \cite{CLRS13} does not apply.
Using our reduction framework,
recently these results have been further improved in
\cite{VertexCover2015} to obtain
\((2 - \varepsilon)\)-inapproximability for \VertexCover (which is optimal)
and inapproximability of \MaxIndep
within any constant factor. 

Formulation complexity depend heavily on
how a problem is formulated.
For example, the model of \MaxIndep used here is
motivated by its combinatorial counterpart,
and captures standard LPs,
like the ones coming from Sherali–Adams hierarchies.
In this model,
\MaxIndep for a given graph \(G\) is approximable within a factor of
\(2 \sqrt{n}\) with a polynomial sized LP, see \cite{VertexCover2015}.
However, the formulation complexity of \emph{another model of} the
maximum independent set
problem with an approximation factor \(n^{1-\varepsilon}\)
is subexponential, rephrasing \cite{extform4} (see also
\cite{bfps2012,braverman2012information,bfps2012jour}), see
Section~\ref{sec:stable-set-problem}.
In this model the instances come from the polytope world,
and are actually formal linear combinations of several graphs,
and this makes the difference.

The current best PCP bound for bounded degree \MaxIndep can be found
in \cite{chan13._resist}. See also
\cite{austrin2009inapproximability} for inapproximability results
assuming the Unique Games Conjecture.

The minimization problem \(\VertexCover{G}\)
of a graph \(G\)
asks for a minimum weighted vertex cover of $G$.
We consider the non-uniform model with
instances being the induced subgraphs of $G$.

\begin{definition}[\VertexCover]
  \label{def:vertex-cover}
  Given a graph \(G\), the problem \(\VertexCover{G}\)
  has all vertex covers \(S\) of \(G\) as feasible solutions,
  and instances all induced subgraphs \(H\) of \(G\).
  The problem \(\VertexCover{G}\) is the minimization problem
  with its objective function having values
  \(\val_{H}(S) \coloneqq \size{S \cap V(H)}\).
  The problem \(\VertexCover[\Delta]{G}\)
  is the restriction of instances to induced subgraphs \(H\),
  with maximum degree at most \(\Delta\).
\end{definition}
Note that for every vertex cover \(S\) of \(G\),
any induced subgraph \(H\) has \(S \cap V(H)\) as a vertex cover,
and all vertex covers of \(H\) are of this form.
In particular, \(\min \val_{H}\) is the minimum size of
a vertex cover of \(H\).

The problem \MaxIndep asks for maximum sized independent sets
in graphs.
As independent sets are exactly the complements of vertex covers,
it is natural to use a formulation similar to \VertexCover.
\begin{definition}[\MaxIndep]
  \label{def:max-indep}
  Given a graph \(G\), the maximization problem \(\MaxIndep{G}\)
  has all independent sets \(S\) of \(G\) as feasible solutions,
  and instances are all induced subgraphs \(H\) of \(G\).
  The objective function is
  \(\val_{H}(S) \coloneqq \size{S \cap V(H)}\).
  The subproblem \(\MaxIndep[\Delta]{G}\)
  is the restriction to all induced subgraphs \(H\)
  with maximum degree at most \(\Delta\).
\end{definition}

For both \VertexCover and \MaxIndep,
we shall use the following conflict graph \(G\)
for a fixed \(n\),
similar to \cite{feige91._approx_np}; we might think of \(G\)
as a \emph{universal} graph encoding all possible
instances. 
Let the vertices of \(G\) be
all partial assignments \(\sigma\) of
two variables \(x_{i}\) and \(x_{j}\)
satisfying the 2-XOR clause \(x_{i} \oplus x_{j} = 1\).
Two vertices \(\sigma_{1}\) and \(\sigma_{2}\)
are connected if and only if
the assignments \(\sigma_{1}\) and \(\sigma_{2}\)
are incompatible
(i.e., assign different truth values to some variable),
see Figure~\ref{fig:VC-graph} for an illustration.
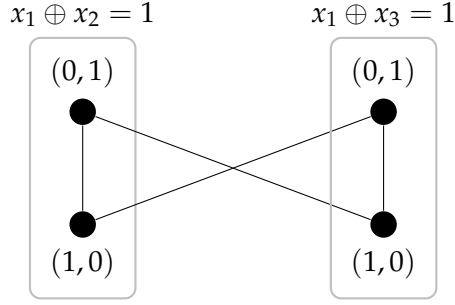
\begin{figure}
  \centering
  \begin{tikzpicture}
    \node (1=1-2=0) at (0,0)
    [vertex, label={[name={label-1=1-2=0}]below:{\((1,0)\)}}]{};
    \node (1=0-2=1) at (0,1.5)
    [vertex, label={[name={label-1=0-2=1}]above:{\((0,1)\)}}]{}
    edge (1=1-2=0);
    \node[container,
    fit = {(1=1-2=0) (label-1=1-2=0) (1=0-2=1) (label-1=0-2=1)},
    label=above:{\(x_{1} \oplus x_{2} = 1\)}]{};
    \node (1=1-3=0) at (4,0)
    [vertex, label={[name={label-1=1-3=0}]below:{\((1,0)\)}}]{}
    edge (1=0-2=1);
    \node (1=0-3=1) at (4,1.5)
    [vertex, label={[name={label-1=0-3=1}]above:{\((0,1)\)}}]{}
    edge (1=1-3=0)
    edge (1=1-2=0);
    \node[container,
    fit = {(1=1-3=0) (label-1=1-3=0) (1=0-3=1) (label-1=0-3=1)},
    label=above:{\(x_{1} \oplus x_{3} = 1\)}]{};
  \end{tikzpicture}
  \caption{Conflict graph of 2-XOR clauses. We include edges between
    all conflicted partial assignment to variables.}
  \label{fig:VC-graph}
\end{figure}
As we are considering problems for optimizing size of vertex sets,
it is natural to define the size of an instance,
i.e., a subgraph \(K\),
as the size of its vertex set \(\size{V(K)}\).

\begin{theorem}
  \label{thm:VertexCover}
  For every \(\varepsilon > 0\) there is a \(\Delta\)
  such that
  for infinitely many \(m\),
  there is a graph \(G\) with \(\size{V(G)} = m\)
  such that
  \(\fc(\VertexCover[\Delta]{G}, 1/2 + \Theta(\varepsilon),
  3/4 - \Theta(\varepsilon))
  \geq m^{\Omega(\frac{\log m}{\log \log m})}\)
  showing inapproximability within a factor of
  \(\frac{3}{2} - \Theta(\varepsilon)\),
  and also
  \(\fc(\MaxIndep[\Delta]{G}, 1/2 - \Theta(\varepsilon),
  1/4 + \Theta(\varepsilon))
  \geq m^{\Omega(\frac{\log m}{\log \log m})}\)
  establishing inapproximability factor
  \(\frac{1}{2} + \Theta(\varepsilon)\).
  Assuming Conjecture~\ref{conj:SDP-base},
  we also have
  \(\fcp(\VertexCover[\Delta]{G}, 1/2 + \Theta(\varepsilon),
  1 - c_{GW} / 2 - \Theta(\varepsilon))\)
  and
  \(\fcp(\MaxIndep[\Delta]{G},  1/2 - \Theta(\varepsilon),
  c_{GW} / 2 + \Theta(\varepsilon))\)
  are superpolynomial,
  achieving inapproximability factors
  \(2 - c_{GW} - \Theta(\varepsilon)\)
  and \(c_{GW} + \Theta(\varepsilon)\), respectively.
\begin{proof}
We shall use the graph \(G\) constructed above,
which has \(m = 2 \binom{n}{2}\) vertices.
We reduce \(\MaxCUT[\Delta]\) to \(\VertexCover[2 \Delta - 1]{G}\)
using Corollary~\ref{cor:simpleReductions} with
\(\alpha = -1\), \(\mu = 2\), \(\tau_{1} = 1 - \varepsilon\),
\(\sigma_{1} = 1/2 + \varepsilon\).
For demonstration purposes, we shall write out the explicit
guarantees below.
Recall that for a graph \(K\) on \([n]\)
with maximum degree at most \(\Delta\),
the guarantees for \MaxCUT are
\(C_{\MaxCUT}(K) = (1 - \varepsilon) \size{E(K)}\)
and \(S_{\MaxCUT}(K) = (1/2 + \varepsilon) \size{E(K)}\).
For \VertexCover[2 \Delta - 1]{G}, we have the following explicit guarantees:
\begin{align*}
  C_{\VertexCover{G}} (H) &= (1/2 + \varepsilon/2) \size{V(H)}
  ,
  \\
  S_{\VertexCover{G}} (H) &= (3/4 - \varepsilon/2) \size{V(H)}
  .
\end{align*}

Let \(H(K)\) be the induced subgraph of \(G\) on the set
\(V(H(K)) \coloneqq \set{\sigma}{\{i,j\} \in E(K),\ \dom \sigma
  =\{x_{i}, x_{j}\} }\)
of all partial assignments \(\sigma\)
which assign values to variables \(x_{i}\), \(x_{j}\)
corresponding to an edge \(\{i, j\}\) of \(K\).
In particular, \(\size{V(H(K))} = 2 \size{E(K)}\),
as there are two partial assignments per each edge \(\{i, j\}\).

Note that for every partial assignment \(\sigma\)
to \(x_{i}\) and \(x_{j}\),
there are \(2 \Delta - 1\) partial assignments incompatible with it
in \(V(H(K))\):
exactly one assignment for every edge of \(K\)
incident to \(i\) or \(j\).
Thus the maximum degree of \(H(K)\) is at most \(2 \Delta - 1\).

We now define the two maps providing the reduction. Let \(\beta(K) \coloneqq H(K)\).
For a total assignment \(s\),
let \(\gamma(s) \coloneqq \set{\sigma}{\sigma \nsubseteq s}\)
be the set of partial assignments incompatible with \(s\);
this is clearly a vertex cover. 

It remains to show that this is a reduction.
For every edge \(\{i, j\} \in K\),
there are two partial assignments \(\sigma\)
to \(x_{i}\) and \(x_{j}\)
satisfying \(x_{i} \oplus x_{j} = 1\).
If \(s\) satisfies \(x_{i} \oplus x_{j} = 1\),
i.e., \(\{i, j\}\) is in the cut induced by \(s\),
then exactly one of the \(\sigma\)
is compatible with \(s\),
otherwise both of the assignments are incompatible.
This provides
\[
\val^{\VertexCover}_{H(K)} [\gamma(s)] =
\size{\set{\sigma}{\sigma \nsubseteq s}}
= 2 \size{E(K)} - \val^{\MaxCUT}_{K}(s)
.\]
To compare optimum values,
note that for any vertex cover \(S\) of \(G\),
the partial assignments
\(\set{\sigma}{\sigma \notin S}\)
occurring in the complement of \(S\) are compatible (as the complement
forms a stable set),
hence there is a global assignment \(s\) of \(x_{1}, \dotsc, x_{n}\)
compatible with all of them.
In particular, \(\gamma(s) \subseteq S\),
hence \(\val^{\VertexCover}_{H(K)} (S)
\geq \val^{\MaxCUT}_{K} [\gamma(s)]\),
so that we obtain
\begin{equation*}
 \begin{split}
  \min \val^{\VertexCover}_{H(K)} &=
  \min_{s} \val^{\VertexCover}_{H(K)} [\gamma(s)]
  = 2 \size{E(K)} - \max \val^{\MaxCUT}_{K}
  \\
  &
  \geq 2 \size{E(K)} - (1/2 + \varepsilon) \size{E(K)}
  = (3/4 - \varepsilon/2) \size{V(H(K))}
  = S_{\VertexCover}(H(K))
  .
 \end{split}
\end{equation*}
Finally, it is easy to verify that
\(C_{\VertexCover}(H(K)) = 2 \card{E(K)} - C_{\MaxCUT}(K)\).
This finishes the proof that \(\beta\) and \(\gamma\)
define a reduction to \VertexCover[\Delta]{G}.
Hence by Corollary~\ref{cor:simpleReductions}  (or
Proposition~\ref{prop:probRed}), using \(m = 2 \binom{n}{2}\),
we have 
\(\fc(\VertexCover[2 \Delta - 1]{G},3/2 - 2
\varepsilon,1 + 2\varepsilon)
= n^{\Omega(\frac{\log n}{\log \log n})}
= m^{\Omega(\frac{\log m}{\log \log m})}\)
for infinitely many \(n\).

For \MaxIndep,
we apply
a similar reduction from \MaxCUT[\Delta]
to \MaxIndep[2 \Delta - 1]{G}. 
We define \(\beta(K) \coloneqq H(K)\) as above and we
set 
\(\gamma(s) \coloneqq \set{\sigma}{\sigma \subseteq s}\)
to be the set of partial assignments compatible with
the total assignment \(s\), this is clearly an independent set,
containing exactly one vertex per satisfied clause.
In particular, \(\val_{H(K)} [\gamma(s)] = \val_{K}(s)\).
The rest of the argument is analogous to the case of
\(\VertexCover[\Delta]{G}\), and hence omitted. 
Now the parameters for Corollary~\ref{cor:simpleReductions}
are
\(\alpha = 1\), \(\mu = 0\), \(\tau_{1} = 1 - \varepsilon\),
and
\(\sigma_{1} = 1/2 + \varepsilon\).

The SDP
inapproximability factors follow similarly, by replacing
\(\frac{1}{2}\) with \(c_{GW}\).
\end{proof}
\end{theorem}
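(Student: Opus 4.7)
The plan is to set up a reduction from \MaxCUT[\Delta] to \VertexCover[2\Delta-1]{G} (and analogously for \MaxIndep[2\Delta-1]{G}) via the universal conflict graph \(G\) described just before the theorem statement, and then to invoke Corollary~\ref{cor:simpleReductions} with Theorem~\ref{thm:HalfXOR} as the base hardness. Recall that \(V(G)\) consists of all partial assignments \(\sigma\) to pairs \(\{x_i, x_j\}\) satisfying \(x_i \oplus x_j = 1\), with edges between mutually incompatible assignments. The graph has \(m = 2\binom{n}{2}\) vertices, so \(n = \Theta(\sqrt{m})\); this is what converts a \(\MaxCUT\)-style \(n^{\Omega(\log n / \log \log n)}\) bound into the desired \(m^{\Omega(\log m / \log \log m)}\) bound.

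The first step is to associate, to each graph \(K\) on \([n]\) of maximum degree \(\Delta\), the induced subgraph \(H(K)\) of \(G\) spanned by all partial assignments to variable pairs \(\{x_i, x_j\}\) with \(\{i,j\} \in E(K)\). A short count then shows \(\size{V(H(K))} = 2 \size{E(K)}\) and that each vertex of \(H(K)\) has at most \(2\Delta - 1\) neighbors (one per edge incident to either endpoint, minus the common edge). I then set \(\beta(K) \coloneqq H(K)\), and define the solution maps as follows: for \VertexCover, \(\gamma(s) \coloneqq \{\sigma : \sigma \nsubseteq s\}\) (the set of partial assignments incompatible with the total assignment \(s\), clearly a vertex cover); and for \MaxIndep, \(\gamma(s) \coloneqq \{\sigma : \sigma \subseteq s\}\) (the set of partial assignments compatible with \(s\), clearly independent).

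The key identity is the per-edge bookkeeping: an edge \(\{i,j\} \in E(K)\) contributes two vertices to \(H(K)\), and if \(s\) satisfies \(x_i \oplus x_j = 1\) (i.e., the edge is cut) then exactly one of these is compatible with \(s\); otherwise both are incompatible. This gives \(\val^{\VertexCover}_{H(K)}[\gamma(s)] = 2\size{E(K)} - \val^{\MaxCUT}_K(s)\) and \(\val^{\MaxIndep}_{H(K)}[\gamma(s)] = \val^{\MaxCUT}_K(s)\), matching the affine-functional form required by Corollary~\ref{cor:simpleReductions} with parameters \((\alpha,\mu) = (-1,2)\) and \((\alpha,\mu) = (1,0)\), respectively, and \(\tau_1 = 1-\varepsilon\), \(\sigma_1 = 1/2+\varepsilon\). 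To translate bounds on \(\val^{\MaxCUT}_K\) into bounds on \(\min \val^{\VertexCover}_{H(K)}\) (respectively \(\max \val^{\MaxIndep}_{H(K)}\)), the crucial observation is that every vertex cover \(S\) of \(H(K)\) has the property that the partial assignments outside \(S\) form an independent set in \(G\), hence are pairwise compatible and extend to a single global assignment \(s\); this forces \(\gamma(s) \subseteq S\) (resp.\ the analogous containment for \MaxIndep), so the optimization over covers/independent sets is realized by some \(\gamma(s)\).

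The main obstacle is this last point — ensuring that optima on \(H(K)\) are attained on the image of \(\gamma\) — because without it the \(\sigma_2\) term in Corollary~\ref{cor:simpleReductions} would not be inherited from the \MaxCUT soundness guarantee. Once this consistency is established, I plug the numbers into Corollary~\ref{cor:simpleReductions} and read off the LP inapproximability factors \(\frac{3/4 - \Theta(\varepsilon)}{1/2 + \Theta(\varepsilon)} = \frac{3}{2} - \Theta(\varepsilon)\) for \VertexCover and \(\frac{1/4 + \Theta(\varepsilon)}{1/2 - \Theta(\varepsilon)} = \frac{1}{2} + \Theta(\varepsilon)\) for \MaxIndep, combined with the base bound \(\fc(\MaxCUT[\Delta], 1-\varepsilon, 1/2+\varepsilon) = n^{\Omega(\log n / \log \log n)}\) from Theorem~\ref{thm:HalfXOR}. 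For the SDP statements, the entire argument is structurally identical — one merely replaces the soundness \(1/2+\varepsilon\) of \MaxCUT[\Delta] by \(c_{GW}+\varepsilon\) from Conjecture~\ref{conj:SDP-base}, propagating to \(1 - c_{GW}/2 - \Theta(\varepsilon)\) for \VertexCover and \(c_{GW}/2 + \Theta(\varepsilon)\) for \MaxIndep, yielding inapproximability factors \(2 - c_{GW} - \Theta(\varepsilon)\) and \(c_{GW} + \Theta(\varepsilon)\), respectively.
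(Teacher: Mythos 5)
Your proposal is correct and follows essentially the same route as the paper: same universal conflict graph, same maps \(\beta(K) = H(K)\) and \(\gamma(s)\) (incompatible/compatible partial assignments), same per-edge bookkeeping, same parameters \((\alpha,\mu,\tau_1,\sigma_1)\) in Corollary~\ref{cor:simpleReductions}, and the same observation that complements of vertex covers are compatible and hence extend to a total assignment, ensuring optima are attained in the image of \(\gamma\). The SDP step by substituting \(c_{GW}\) for \(1/2\) also matches the paper.
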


\section{Inapproximability of CSPs}
\label{sec:futh-hardn-appr}

In this section we present example reductions for minimum and maximum
constraint satisfaction problems.  The results for \emph{binary
  \problem{Max-CSP}s},
(for CSPs as defined in Definition~\ref{def:maxCSP})
 could also be obtained in the LP case from
\cite{CLRS13} by combination with the respective
Sherali–Adams/Lasserre gap instances \cite{James2014}.  For simplicity
of exposition, we reduce from \(\MaxXOR{2}\),
or sometimes \MaxCUT, however by reducing from the subproblem
\MaxCUT[\Delta], we immediately obtain the results for bounded
occurrence of literals, with \(\Delta\)
depending on the approximation factor.

\subsection{\MaxMULTICUT{k}: a non-binary CSP}
\label{sec:problemmax-multi-k}

The \MaxMULTICUT{k} problem is interesting on its own
being a CSP over a non-binary alphabet, thus the framework in
\cite{CLRS13} does not readily apply.
Note that \MaxMULTICUT{k} is APX-hard, as it contains \MaxCUT.
The current best PCP inapproximability bound
\(1 - 1/(34k) + \varepsilon\)
is given by \cite{MaxkCUT1997}.

Here we omit the definition of non-binary CSPs,
where the feasible solutions are no longer two-valued
assignments, and restrict to \MaxMULTICUT{k}.

\begin{definition}[\MaxMULTICUT{k}]
	For fixed positive integers \(n\) and \(k\), the problem 
	\MaxMULTICUT{k} has
   \begin{enumerate}
   \item \textbf{feasible solutions:}
      all partitions of \([n]\) into \(k\) sets;
   \item \textbf{instances:}
		all graphs \(G\) with \(V(G) \subseteq [n]\).
  \item \textbf{objective function:}
    for a graph \(G\) and a partition \(p\) of \([n]\),
    let \(\val_{G}(p)\) be the number of edges of \(G\)
    whose end points lie in different cells of \(p\).
   \end{enumerate}
\end{definition}
This differs from a binary CSP only by having a different kind of
feasible solutions.
Hence it is still natural to define the size of an instance,
i.e., graph \(G\), as the number of clauses, i.e,
number of edges \(\size{E(G)}\).

\begin{corollary}
	Let \(k\ge 3\) be a fixed integer. Then for infinitely 
	many \(n\),
  \[\fc(\MaxMULTICUT{k}, c(k) + 1 - \varepsilon,
  c(k) + 1/2 + \varepsilon) \ge
	n^{\Omega \left( \frac{\log n}{\log\log n} \right)}\]
  achieving inapproximability factor
  \(\frac{2 c(k) + 1}{2 c(k) + 2} + \Theta(\varepsilon)\),
  where
  \[c(k) \coloneqq
	\binom{k-2}{2}\left(\binom{k+2}{2}-3\right) + 
	2(k-2)\left(\binom{k+2}{2}-3\right).\]
      Assuming Conjecture~\ref{conj:SDP-base},
      we also have that
      \(\fcp(\MaxMULTICUT{k}, c(k) + 1 - \varepsilon,
      c(k) + c_{GW} + \varepsilon)\)
      is superpolynomial,
      showing inapproximability factor
      \[\frac{c(k) + c_{GW}}{c(k) + 1} + \Theta(\varepsilon).\]
\begin{proof}
We reduce \(\MaxCUT\) to \MaxMULTICUT{k}.
The reduction is essentially
        identical to 
	\cite{papadimitriou1991optimization}, however we have to verify its
        compatibility with our reduction mechanism.  To this end it
        will suffice to define the reduction maps \(\beta\) and
        \(\gamma\).

  Given a graph \(G\),
  we construct a new graph \(\beta(G)\)
  as illustrated in Figure~\ref{fig:MultiCUT}.
  \begin{figure}
    \centering
    \begin{tikzpicture}[vertex1/.style={vertex, fill=MidnightBlue},
      vertex2/.style={vertex, fill=ForestGreen},
      vertex3/.style={vertex, fill=Maroon},
      x=5em, y=5em, cell/.style={fill=lightgray}, very thick]
      \node(1)[vertex1, big vertex, label={[name=label-1]above:{1}},
        at={(0,2)}]{};
      \node(2)[vertex2, big vertex, label={[name=label-2]below:{2}},
        at={(0,0)}]{}
	edge[dashed] (1);
      \node[container, fit=(1) (2) (label-1) (label-2),
        label=below:{\(G\)}] {};
      \node(3)[vertex3, label=left:{-3}, at={(1,1)}]{};
      \node(1-2-1)[vertex1, at={(-2,0)},
      label=below:{\(( 1, 2; 1; (1))\)}]{};
      \node(1-2-2)[vertex2, at={(1-2-1 |- 1)},
      label=above:{\(( 1, 2; 1; (2))\)}]{};
      \node(1-2-3)[vertex3, at={(-1,1)},
      label=left:{\(( 1, 2; 1; 1)\)}]{};
      \foreach \x/\y in {1/2,1/3, 2/3}
      {\draw(1-2-\x) -- (1-2-\y);
        \draw (\x) -- (1-2-\y);}
      \draw (2) -- (1-2-1);
      \begin{scope}[on background layer]
      \fill[cell] (1.center) -- (2.center) -- (1-2-1.center)
      -- (1-2-2.center) --cycle;
      \end{scope}
      \node(1-3-1)[vertex1, at={(2,2)},
      label=right:{\(( 1, 3; 1; (1))\)}]{};
      \node(1-3-2)[vertex2, at={(1 -| 3)},
      label=below:{\(( 1, 3; 1; 1)\)}]{};
      \node(1-3-3)[vertex3, at={(1,3)},
      label={\(( 1, 3; 1; (-3))\)}]{};
      \foreach \x/\y in {1/2,1/3, 3/2}
      {\draw(1-3-\x) -- (1-3-\y);
        \draw (\x) -- (1-3-\y);}
      \draw (3) -- (1-3-1);
      \begin{scope}[on background layer]
      \fill[cell] (1.center) -- (3.center) -- (1-3-1.center)
      -- (1-3-3.center) --cycle;
      \end{scope}
      \node(2-3-1)[vertex1, at={(1,0)},
      label=above:{\((2, 3; 1; 1)\)}]{};
      \node(2-3-2)[vertex2, at={(2,0)},
      label=right:{\((2, 3; 1; (2))\)}]{};
      \node(2-3-3)[vertex3, at={(1,-1)},
      label=below:{\((2, 3; 1; (-3))\)}]{};
      \foreach \x/\y in {2/1,3/1, 3/2}
      {\draw(2-3-\x) -- (2-3-\y);
        \draw (\x) -- (2-3-\y);}
      \draw (2) -- (2-3-3);
      \begin{scope}[on background layer]
      \fill[cell] (2.center) -- (3.center) -- (2-3-2.center)
      -- (2-3-3.center) --cycle;
      \end{scope}
    \end{tikzpicture}
    \caption{\label{fig:MultiCUT}
      Reduction between \MaxCUT and \MaxMULTICUT{k}.
      Here \(k = 3\) and \(G = K_{2}\).
      The dashed edge denotes the edge of \(G\),
      which is \emph{not} contained in the reduction.
      The squares are the copies of almost complete graphs added.
      The partition is represented by coloring the vertices:
      blue and green are the original cells on \([2]\),
      and red is an additional color.}
  \end{figure}
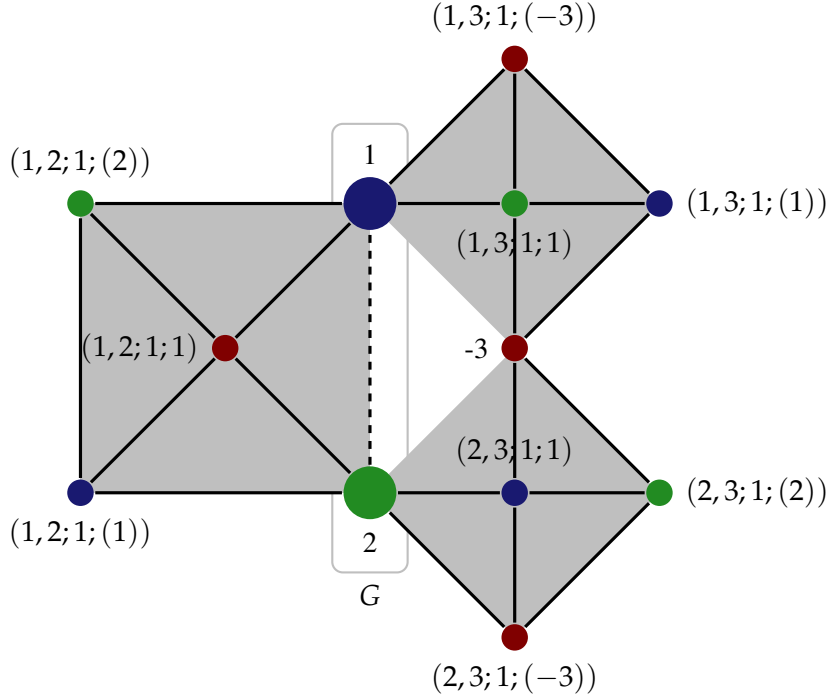
  Consider the vertices of \(G\)
  together with \(k - 2\) new vertices \(-k, \dotsc, -3\).
  For every pair of vertices \(i, j\)
  we add \(n_{ij}\) copies of an almost complete graph
  on \(k + 2\) vertices, two of which are \((i), (j)\), as follows.

  First, let us determine the number \(n_{ij}\) of copies added.
  For \(i, j \in [n]\), we add one copy if \(i\) and \(j\)
  are connected in \(G\), and no copies otherwise.
  For \(i, j \in \{-k, \dotsc, -3\}\)
  we add \(\size{E(G)}\) many copies.
  Finally, for \(i \in [n]\) and \(j \in \{-k, \dotsc, -3\}\)
  we add \(\deg_{G}(i)\) many copies.

  Now let us describe the copies themselves.
  Let us fix \(i, j\) and let \(x \in [n_{ij}]\)
  be an index of the copy.
  We add \(k\) new vertices \((i,j; x; (i))\), \((i,j; x; (j))\),
  and \((i,j; x; t)\) for \(t = 1, \dotsc, k-2\).
  We connect every pair of the \(k +2 \) vertices
  \(i, j, (i,j; x; (i)), (i,j; x; (j)), (i, j; x; t)\)
  with an edge except the pairs
  \(\{i, (i,j; x, (i))\}\),
  \(\{i, j\}\),
  and
  \(\{(i,j; x, j), (j)\}\).
  This is done for all \(i, j, x\), and
  we let \(\beta(G)\) be the graph so obtained.

  By construction, \(\beta(G)\) has
  \begin{equation*}
    \size{E(\beta(G))} =
    \left[
      \binom{k + 2}{2} - 3
    \right]
    \cdot
    \sum_{ij} n_{ij}
    =
    \left[
      \binom{k + 2}{2} - 3
    \right]
    \left(
      1 + 2 (k - 2) + \binom{k - 2}{2}
    \right)
    \size{E(G)}
  \end{equation*}
  many edges, and its vertex set
  \(V(\beta(G))\) is contained in the set
  \begin{equation*}
    [n] \cupdot \{-k, \dotsc, -3\} \cupdot
    \set{(i, j; x; t)}{-k \leq i < j \leq 3, x \in [n_{ij}],
      t \in [k]}
  \end{equation*}
  having size polynomial in \(n\):
  \begin{equation*}
    m \coloneqq
    n + (k - 2)
    + k \left[\binom{k-2}{2} + 2 (k-2) + 1\right] \binom{n}{2}.
  \end{equation*}
  vertices.

  We define \(\gamma\) to map every \(2\)-partition \(p\) of \([n]\)
  into a \(k\)-partition of \(\beta(G)\) by extending it as follows.
  Let \(p_{1}\) and \(p_{2}\) denote the cells of \(p\).
  Elements of \(p_{1}\) and \(p_{2}\) go to the first and second
  cell of the \(\gamma(p)\), respectively.
  The new vertices \(-i\) added for \(i = -k, \dotsc, -3\)
  go to the \(i\)-th cell.
  Vertices \((i,j; x; (i))\) and \((i,j; x; (j))\)
  go to the cell of \(i\) and \(j\), respectively.
  For fixed \(i, j, x\)
  the vertices \((i,j; x; t)\) for \(t = 1, \dotsc, k - 2\)
  are put into \(k - 2\) different cells,
  which do not contain \((i,j; x; (i))\) or \((i,j; x; (j))\).
  This is possible as there are \(k\) different cells.

        By \cite{papadimitriou1991optimization},
	\begin{align*}
    \val^{\MaxMULTICUT{k}}_{\beta(G)} [\gamma(p)]
    &= \val^{\MaxCUT}_{G}(p) + \mu(G), \\
    \max \val^{\MaxMULTICUT{k}}_{\beta(G)}
    &= \max \val^{\MaxCUT}_{G} + \mu(G),
    \intertext{where}
    \mu(G) &=
    \size{E(\beta(G))} - \size{E(G)} =
    \underbrace{\left[
        \binom{k-2}{2} + 2(k-2) \right]
      \left[ \binom{k+2}{2}-3 \right]}_{c(k)}
              \size{E(G)}
    .
	\end{align*}
  Therefore we obtain a reduction from \MaxCUT on \(n\) vertices
  to \MaxMULTICUT{k} on \(m\) vertices,
  with \(m\) polynomially bounded in \(n\). Combining
  Corollary~\ref{cor:simpleReductions} with Theorem~\ref{thm:HalfXOR}
  in the LP case
  with parameters 
  \(\alpha = 1\), \(\mu = c(k)\), \(\tau_{1} = 1 - \varepsilon\),
  \(\sigma_{1} = 1/2 + \varepsilon\),
  we obtain that
	\(\fc(\MaxMULTICUT{k}, c(k) + 1 - \varepsilon,
  c(k) + 1/2 + \varepsilon)\)
  and
  is \(m^{\Omega(\frac{\log m}{\log\log m})}
  = n^{\Omega(\frac{\log n}{\log\log n})}\).
The SDP case follows analogously from Conjecture~\ref{conj:SDP-base}
using \(\sigma_{1} = c_{GW}/2 + \varepsilon\).
\end{proof}
\end{corollary}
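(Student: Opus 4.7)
The plan is to reduce from \MaxCUT to \MaxMULTICUT{k} using (essentially verbatim) the gadget construction of Papadimitriou–Yannakakis, and then to package it through Corollary~\ref{cor:simpleReductions} combined with Theorem~\ref{thm:HalfXOR} in the LP case and Conjecture~\ref{conj:SDP-base} in the SDP case. Since the gadget is already well-established in the PCP literature, the substantive work is purely about confirming that it fits the algebraic requirements of our reduction mechanism, namely a linear identity of the form \(\val_{\beta(G)}[\gamma(p)] = \val_{G}(p) + \mu(G)\) and a matching identity on optima, together with sizes proportional to \(\size{E(G)}\).

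The map \(\beta\) sends a graph \(G\) to the larger graph \(\beta(G)\) obtained by adjoining \(k-2\) auxiliary vertices \(-k,\dotsc,-3\) and, for each pair \(\{i,j\}\) of vertices in \([n]\cupdot\{-k,\dotsc,-3\}\), attaching \(n_{ij}\) disjoint copies of the almost complete graph on \(k+2\) vertices described in the excerpt, where \(n_{ij}=1\) if \(\{i,j\}\in E(G)\), \(n_{ij}=\size{E(G)}\) if both indices are auxiliary, and \(n_{ij}=\deg_{G}(i)\) if exactly one is auxiliary. The map \(\gamma\) sends a \(2\)-partition \(p=(p_{1},p_{2})\) of \([n]\) to the \(k\)-partition of \(V(\beta(G))\) that places \(p_{1},p_{2}\) into the first two cells, auxiliary vertex \(-i\) into cell \(i\), each gadget vertex \((i,j;x;(i))\) (resp.\ \((i,j;x;(j))\)) into the cell of \(i\) (resp.\ \(j\)), and the \(k-2\) remaining gadget vertices \((i,j;x;t)\) into the \(k-2\) cells distinct from those two. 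Both maps are well-defined deterministic assignments, as required for the simple reduction form.

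The identity \(\val_{\beta(G)}[\gamma(p)] = \val_{G}(p) + \mu(G)\) and the optimum identity \(\max\val_{\beta(G)} = \max\val_{G}+\mu(G)\) follow from \cite{papadimitriou1991optimization}, where the shift
\begin{equation*}
\mu(G) = \size{E(\beta(G))} - \size{E(G)} = \Bigl[\binom{k-2}{2}+2(k-2)\Bigr]\Bigl[\binom{k+2}{2}-3\Bigr]\size{E(G)} = c(k)\size{E(G)}
\end{equation*}
counts precisely the edges in \(\beta(G)\) coming from gadget pairs not in \(E(G)\): each such pair contributes \(\binom{k+2}{2}-3\) edges, and the number of such pairs is \(\bigl[\binom{k-2}{2}+2(k-2)\bigr]\size{E(G)}\) after using the handshake identity \(\sum_{i}\deg_{G}(i)=2\size{E(G)}\). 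The step I expect to rely on most heavily from the literature — and therefore the only conceptual subtlety — is the upper-bound half of the optimum identity, i.e.\ that no \(k\)-partition of \(\beta(G)\) can exceed the value achieved by \(\gamma(p)\) for some cut \(p\) of \(G\); this is a direct exchange argument on the gadgets, which is already worked out in \cite{papadimitriou1991optimization}, so I simply invoke it.

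With the reduction verified, I apply Corollary~\ref{cor:simpleReductions} with \(\alpha=1\), \(\mu=c(k)\), \(\tau_{1}=1-\varepsilon\), \(\sigma_{1}=1/2+\varepsilon\), taking \(\size{G}=\size{E(G)}\). This yields completeness guarantee \(C_{2}(H)=(c(k)+1)\size{E(H)}\) and soundness guarantee \(S_{2}(H)=(c(k)+1/2)\size{E(H)}\) on \(\beta(G)\), and the corollary transfers the LP lower bound \(n^{\Omega(\log n/\log\log n)}\) from Theorem~\ref{thm:HalfXOR} for \MaxCUT to \MaxMULTICUT{k} on the polynomially larger vertex set \(m=n+(k-2)+k[\binom{k-2}{2}+2(k-2)+1]\binom{n}{2}\), i.e.\ still \(m^{\Omega(\log m/\log\log m)}\). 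The announced inapproximability factor is then \(\sigma_{2}/(\alpha\tau_{1}+\mu)=(c(k)+1/2)/(c(k)+1)+\Theta(\varepsilon)\), which equals \((2c(k)+1)/(2c(k)+2)+\Theta(\varepsilon)\). The SDP case is identical, replacing \(1/2\) by \(c_{GW}\) throughout and invoking Conjecture~\ref{conj:SDP-base} in place of Theorem~\ref{thm:HalfXOR}.
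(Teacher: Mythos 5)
Your proposal matches the paper's proof essentially verbatim: same Papadimitriou--Yannakakis gadget for \(\beta\), same \(k\)-partition extension for \(\gamma\), the identical shift \(\mu(G)=c(k)\size{E(G)}\) read off from \(\size{E(\beta(G))}-\size{E(G)}\), and the same instantiation of Corollary~\ref{cor:simpleReductions} with \(\alpha=1\), \(\mu=c(k)\), \(\tau_{1}=1-\varepsilon\), \(\sigma_{1}=1/2+\varepsilon\) (resp.\ \(\sigma_{1}=c_{GW}/2+\varepsilon\) for the SDP case under Conjecture~\ref{conj:SDP-base}). The only trivial discrepancy is that your written completeness guarantee omits the \(-\varepsilon\) term (it should read \(C_{2}(H)=(c(k)+1-\varepsilon)\size{E(H)}\)), but your final factor \((2c(k)+1)/(2c(k)+2)+\Theta(\varepsilon)\) and the relation \(m=\operatorname{poly}(n)\) are both correct, so the argument goes through exactly as in the paper.
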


\subsection{Inapproximability of general 2-CSPs}
\label{sec:other-2-csps}

First we consider general CSPs with no restrictions on constraints,
for
which the exact approximation factor can be easily established. We
present the hardness of LP approximation here. The LP with matching
factor can be found in \cite{trevisan98._parallelLP}.

\begin{definition}[\MaxCSP{2} and \MaxCONJSAT{2}]
  The problem \(\MaxCSP{2}\) is the CSP on variables
  \(x_1,\dots,x_n\) with constraint family
  \(\mathcal{C}_{2CSP}\)
  consisting of all possible constraints
  depending on at most two variables.
  The problem \(\MaxCONJSAT{2}\) is the CSP
  with constraint family consisting of all
  possible conjunctions of two literals.
\end{definition}

\begin{corollary}
  \label{cor:MaxCSP}
  For every \(\varepsilon > 0\) and infinitely many \(n\),
  we have
  \(\fc(\MaxCSP{2}, 1 - \varepsilon, 1/2 + \varepsilon) \ge
  n^{\Omega(\frac{\log n}{\log\log n})}\) achieving inapproximability
  factor \(\frac{1}{2} + \Theta(\varepsilon)\),
  where \(n\) is the number of variables
  of \(\MaxCSP{2}\).
  Similarly,
  \(\fc(\MaxCONJSAT{2}, 1/2 - \varepsilon, 1/4 + \varepsilon)
  \geq n^{\Omega(\frac{\log n}{\log\log n})}\)
  establishing inapproximability
  factor \(\frac{1}{2} + \Theta(\varepsilon\))
  for infinitely many \(n\). Moreover, in the SDP case, assuming Conjecture~\ref{conj:SDP-base}, we have 
  \(\fcp(\MaxCSP{2}, 1 - \varepsilon, c_{GW} + \varepsilon)\)
  and
  \(\fcp(\MaxCONJSAT{2}, 1/2 - \varepsilon, c_{GW}/2 + \varepsilon)\)
  are superpolynomial
  showing inapproximability factor \(c_{GW} + \Theta(\varepsilon)\).
\begin{proof}
We identify \(\MaxXOR{2}\) as a subproblem of \(\MaxCSP{2}\):
Every \(2\)-XOR clause is evidently a boolean function of \(2\)
variables.
So restricting the instances
of \(\MaxCSP{2}\) to \(2\)-XOR clauses
with 0/1 weights gives \(\MaxXOR{2}\).
Now with Theorem~\ref{thm:HalfXOR},
the result follows.

The claim about \MaxCONJSAT{2} follows via the reduction
from \MaxCSP{2} to \MaxCONJSAT{2}
in \cite{trevisan98._parallelLP}.
We prefer to reduce from \MaxXOR{2} instead
for easier control over the approximation guarantees.
The idea is to write each clause \(C\)
in disjunctive normal form,
and replace \(C\) with the set \(S(C)\) of conjunctions
in its normal form,
one conjunction for every assignment satisfying \(C\).
In particular, for 2-XOR clauses
\(S(x_{i} \oplus x_{j} = 1) =
\{x_{i} \land \lnot x_{j}, \lnot x_{i} \land x_{j}\}\)
and \(S(x_{i} \oplus x_{j} = 0) =
\{x_{i} \land x_{j}, \lnot x_{i} \land \lnot x_{j}\}\).
Therefore formally,
a set of clauses \(L\)
is mapped to
\(\beta(L) = \bigcup_{C \in L} S(C)\).
Every assignment of variables is mapped to themselves,
i.e., \(\gamma\) is the identity.
We have \(\val_{\beta(L)} (s) = \val_{L}(s)\)
and \(\size{\beta(L)} = 2 \size{L}\).
Now the claim follows.
\end{proof}
\end{corollary}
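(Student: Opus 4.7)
The plan is to reduce from $\MaxXOR{2}$ to each of the two problems, and then invoke Theorem~\ref{thm:HalfXOR} for the LP lower bounds; the SDP lower bounds follow the same reductions combined with Conjecture~\ref{conj:SDP-base}, which via the subproblem relation $\MaxCUT \subseteq \MaxXOR{2}$ immediately gives superpolynomial $\fcp(\MaxXOR{2}, 1-\varepsilon, c_{GW}+\varepsilon)$.

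For $\MaxCSP{2}$ the reduction is vacuous: every 2-XOR clause $x_i \oplus x_j = b$ is a boolean function of at most two variables, so $\MaxXOR{2}$ is literally a subproblem of $\MaxCSP{2}$ obtained by restricting the instances (with 0/1 weights) to XOR clauses, with identical feasible solutions and objective. In the language of Definition~\ref{def:reduction} one takes $\beta$ and $\gamma$ to both be the identity with $\alpha=1$, $\mu=0$; Eqs.~\eqref{eq:red-exact} and \eqref{eq:red-guarantee} hold trivially, so Proposition~\ref{prop:probRed} yields $\fc(\MaxCSP{2}, 1-\varepsilon, 1/2+\varepsilon) \geq \fc(\MaxXOR{2}, 1-\varepsilon, 1/2+\varepsilon)$ and similarly for $\fcp$, and Theorem~\ref{thm:HalfXOR} closes the LP case (the SDP case is analogous).

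For $\MaxCONJSAT{2}$ I would apply the classical DNF expansion. Each 2-XOR clause $C$ has exactly two satisfying assignments of its two variables, so its DNF is a disjunction of two conjunctions of two literals; let $S(C)$ denote that pair, e.g., $S(x_i \oplus x_j = 1) = \{x_i \land \lnot x_j,\ \lnot x_i \land x_j\}$, and symmetrically for $b=0$. Define $\beta(L) \coloneqq \bigcup_{C \in L} S(C)$ and let $\gamma$ be the identity on assignments. For any assignment $s$ and any 2-XOR $C$, the clause $C$ is satisfied iff exactly one element of $S(C)$ is satisfied, hence $\val_{\beta(L)}(s) = \val_L(s)$ while $\size{\beta(L)} = 2\size{L}$. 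Because the size doubles and the value is preserved, the $\MaxXOR{2}$ guarantees $(1-\varepsilon)\size{L}$ and $(1/2+\varepsilon)\size{L}$ translate to $(1/2 - \varepsilon/2)\size{\beta(L)}$ and $(1/4 + \varepsilon/2)\size{\beta(L)}$ on the $\MaxCONJSAT{2}$ side, matching the claimed inapproximability factor of $1/2 + \Theta(\varepsilon)$; the SDP variant replaces $1/2$ by $c_{GW}$ throughout.

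The one subtlety worth flagging, though it is minor, is that this reduction doubles the instance size with $\alpha = 1, \mu = 0$, so $\abs{\alpha} + \mu = 1 \neq 2$ and Corollary~\ref{cor:simpleReductions} does not apply in its stated form. I would resolve this either by verifying Definition~\ref{def:reduction} directly (only Eq.~\eqref{eq:red-exact} needs checking and was just done), or by putting weight $1/2$ on each of the two conjunctions arising from a single XOR, in which case $\alpha = 1/2, \mu = 0$, the total instance weight is preserved, and Corollary~\ref{cor:simpleReductions} applies verbatim. Either route reduces the proof to a mechanical unwinding.
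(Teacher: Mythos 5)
Your proof follows the paper's route essentially step for step: $\MaxXOR{2}$ as a literal subproblem of $\MaxCSP{2}$, and the DNF expansion $\beta(L) = \bigcup_{C \in L} S(C)$ with $\gamma = \mathrm{id}$ for $\MaxCONJSAT{2}$, together with the same bookkeeping $\val_{\beta(L)}(s) = \val_L(s)$, $\size{\beta(L)} = 2\size{L}$ and applying Theorem~\ref{thm:HalfXOR} (resp.\ Conjecture~\ref{conj:SDP-base}). You are right that Corollary~\ref{cor:simpleReductions} does not apply verbatim, and the clean resolution is indeed to check Definition~\ref{def:reduction} directly as you do; however, your proposed alternative fix with weight $1/2$ on each conjunction does not actually repair the mismatch, since with $\alpha = 1/2$, $\mu = 0$ the corollary would require $\size{\beta(L)} = \tfrac{1}{2}\size{L}$, not $\size{\beta(L)} = \size{L}$ — indeed no choice of $\alpha, \mu$ satisfies both $\val_{\beta(L)}(s) = \alpha \val_L(s) + \mu\size{L}$ and $\size{\beta(L)} = (\abs{\alpha}+\mu)\size{L}$ when the value is preserved exactly and the size doubles, so only the direct Definition~\ref{def:reduction} route (and Proposition~\ref{prop:probRed}) goes through here.
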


\subsection{\MaxSAT{2} and \MaxSAT{3} inapproximability}

We now establish an LP-inapproximability factor of
\(\frac{3}{4}+ \varepsilon\) for \MaxSAT{2} via a direct reduction
from \MaxCUT and an SDP-inapproximability factor of about
\(0.93928+\varepsilon\) assuming Conjecture~\ref{conj:SDP-base}.
Note that
\cite{goemans94._maxSAT} show the existence of an LP that achieves a factor of
\(\frac{3}{4}\), so that our estimation is tight in the LP
case. Moreover, in \cite{feige1995approximating} it is shown that
\MaxSAT{2} can be approximated with a small SDP within a factor of
0.931 leaving a (conditional) gap of about 0.08.

Obviously, the same factor applies for \(\MaxSAT{k}\) with \(k
\geq 2\), too.
Note that we allow clauses with less than \(k\) literals in
\(\MaxSAT{k}\), which is
in line with the definition in \cite{Schoen-k-CSP} to maintain
compatibility.
Note that \cite[Theorem~1.5]{LRS14} establishes
\(7/8 + \varepsilon\) inapproximability for \MaxSAT{3}
even in the SDP case.

\begin{definition}[\MaxSAT{k}]
	For fixed \(n,k \in \N\), the problem 
	\(\MaxSAT{k}\) is the CSP on the set of
	variables \(\{x_1,\dots,x_n\}\), where 
	the constraint family \(\mathcal{C}\)
	is the set of all sat clauses which consist
	of at most \(k\) literals.
\end{definition}

\begin{corollary}
  \label{cor:MaxSAT}
	For infinitely many \(n\),
  \(\fc(\MaxSAT{2}, 1 - \varepsilon, 3/4 + \varepsilon) \ge
	n^{\Omega(\frac{\log n}{\log\log n})}\) and
  \(\fc(\MaxSAT{3}, 1 - \varepsilon, 3/4 + \varepsilon) \ge
	n^{\Omega(\frac{\log n}{\log\log n})}\)
  achieving inapproximability
	factor \(\frac{3}{4} + \Theta(\varepsilon)\),
        where \(n\) is the number of variables. In the case of SDPs,
        assuming Conjecture~\ref{conj:SDP-base}, we have
  \(\fcp(\MaxSAT{2}, 1 - \varepsilon, (1 + c_{GW}) / 2)\) and
  \(\fcp(\MaxSAT{3}, 1 - \varepsilon, (1 + c_{GW}) / 2)\)
        are both superpolynomial
        establishing inapproximability factor
        \(\frac{1+c_{GW}}{2} + \Theta(\varepsilon)
        \approx 0.93928 + \Theta(\varepsilon)\).
\begin{proof}
  We reduce \(\MaxCUT\) to \(\MaxSAT{2}\).
  For a 2-XOR
  clause \(l = (x_i \oplus x_j = 1)\) with \(i,j\in [n]\),
  we define two auxiliary constraints
  \(C_1(l) = (x_i \vee x_j)\) and \(C_2(l)
  = (\bar{x}_i \vee \bar{x}_j)\).
  Let \(\beta(L) \coloneqq \set{C_{1}(l), C_{2}(l)}{l \in L}\)
  for a set of 2-XOR clauses \(L\).
  We
  choose \(\gamma\) to be the identity map.
  Observe that whenever
  \(l\) is satisfied by a partial assignment \(s\)
  then both \(C_{1}(l)\)
  and \(C_{2}(l)\) are also satisfied by \(s\),
  otherwise exactly one of
  \(C_{1}(l)\) and \(C_{2}(l)\) is satisfied.
  Hence
  we obtain a reduction from \(\MaxCUT\) to \MaxSAT{2}.
  Together with
  Theorem~\ref{thm:HalfXOR} and Proposition~\ref{prop:probRed} the
  result follows. The statement for \(\MaxSAT{3}\) follows, as
  \(\MaxSAT{2}\) is a subproblem of \(\MaxSAT{3}\). 
\end{proof}
\end{corollary}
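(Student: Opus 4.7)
The plan is to reduce \MaxCUT (equivalently the \MaxXOR{2} subproblem, whose LP inapproximability is supplied by Theorem~\ref{thm:HalfXOR} and whose SDP inapproximability is postulated by Conjecture~\ref{conj:SDP-base}) directly to \MaxSAT{2}, apply the simple-reduction machinery of Corollary~\ref{cor:simpleReductions} to transfer the formulation-complexity lower bound, and then observe that \MaxSAT{3} contains \MaxSAT{2} as a subproblem (since clauses are allowed to have fewer than \(k\) literals), so the bound for \MaxSAT{3} comes for free.

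The reduction itself is the classical gadget that replaces each 2-XOR equation \(l = (x_i \oplus x_j = 1)\) with the pair of 2-SAT clauses \(C_1(l) = (x_i \lor x_j)\) and \(C_2(l) = (\lnot x_i \lor \lnot x_j)\). Formally, set \(\beta(L) \coloneqq \set{C_1(l), C_2(l)}{l \in L}\) and let \(\gamma\) be the identity on total assignments. A one-line case analysis shows that the pair \(\{C_1(l), C_2(l)\}\) contributes \(2\) to the SAT value whenever the assignment satisfies \(l\) and exactly \(1\) when it does not. This yields the clean identity \(\val_{\beta(L)}(s) = \size{L} + \val_L(s)\) together with \(\size{\beta(L)} = 2 \size{L}\), which matches the hypotheses of Corollary~\ref{cor:simpleReductions} with \(\alpha = 1\), \(\mu = 1\), and base-problem guarantees \(\tau_1 = 1 - \varepsilon\), \(\sigma_1 = 1/2 + \varepsilon\).

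Plugging these parameters into the corollary (using the hardness of \MaxCUT from Theorem~\ref{thm:HalfXOR}) yields the derived completeness coefficient \(\alpha \tau_1 + \mu\) and soundness coefficient \(\alpha \sigma_1 + \mu\), which normalized by \(\size{\beta(L)} = 2 \size{L}\) give exactly \(1 - \Theta(\varepsilon)\) and \(3/4 + \Theta(\varepsilon)\), hence the claimed LP inapproximability factor of \(3/4 + \Theta(\varepsilon)\) together with the superpolynomial formulation-complexity lower bound. For the SDP case the entire argument carries over unchanged, substituting the conjectural base soundness \(\sigma_1 = c_{GW} + \varepsilon\) from Conjecture~\ref{conj:SDP-base} in place of \(1/2 + \varepsilon\); the derived soundness coefficient becomes \((1 + c_{GW})/2 + \Theta(\varepsilon) \approx 0.93928 + \Theta(\varepsilon)\). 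The only nontrivial step is the two-line verification of the combinatorial identity \(\val_{\beta(L)} = \size{L} + \val_L\); the remainder is a mechanical substitution into Corollary~\ref{cor:simpleReductions}, so there is no real obstacle beyond picking the right gadget.
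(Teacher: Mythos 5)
Your proof is correct and follows essentially the same route as the paper's: the identical gadget mapping each clause \(x_i \oplus x_j = 1\) to the pair \(\{x_i \vee x_j,\ \bar{x}_i \vee \bar{x}_j\}\), with \(\gamma\) the identity, and the same case analysis (both clauses satisfied exactly when the XOR is). You invoke Corollary~\ref{cor:simpleReductions} with explicit parameters \(\alpha = 1, \mu = 1\) rather than citing Proposition~\ref{prop:probRed} directly as the paper does, but this is just a more explicit instantiation of the same machinery, and your handling of the \MaxSAT{3} and SDP cases matches the paper.
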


\subsection*{\MaxDicut{} inapproximability}

Problem \MaxDicut asks for a maximum sized cut
in a directed graph \(G\),
i.e., partitioning the vertex set \(V(G)\)
into two parts \(V_{0}\) and \(V_{1}\),
such that the number of directed edges \((i, j) \in E(G)\)
going from \(V_{0}\) to \(V_{1}\),
i.e., \(i \in V_{0}\) and \(j \in V_{1}\)
are maximal.
We use a formulation similar to \MaxCUT.

\begin{definition}[Directed Cut]
	For a fixed \(n\in \N\), the problem 
	\(\MaxDicut\) is the CSP with constraint 
	family \(\mathcal{C}_{DICUT} =
	\{\lnot x_i \land x_j \mid i,j \in [n], i\neq j\}\).
\end{definition}

We obtain \((1/2 + \varepsilon)\)-inapproximability
via the standard reduction from undirected graphs, by replacing every
edge with two, namely, one edge in either direction.  The
inapproximability factor is tight as the LP in \cite[Page~84,
Eq.~(DI)]{trevisan98._parallelLP}, is \(\frac{1}{2}\)-approximate
for maximum weighted directed cut.  In the SDP case we obtain 
\(c_{GW} + \varepsilon\)-inapproximability assuming  Conjecture~\ref{conj:SDP-base}
and in \cite{feige1995approximating} it is shown that \(\MaxDicut\)
can be approximated with a small SDP within a factor of 0.859 leaving
a (conditional) gap of about 0.02.

\begin{corollary}
  \label{cor:MaxDICUT}
  For infinitely many \(n\), we have
  \(\fc(\MaxDicut, 1/2 - \varepsilon, 1/4 + \varepsilon ) \ge
  n^{\Omega(\frac{\log n}{\log\log n})}\)
  achieving inapproximability factor \(1/2 + \Theta(\varepsilon)\).
        Assuming
        Conjecture~\ref{conj:SDP-base},
  in the SDP case
  \(\fcp(\MaxDicut, 1/2 - \varepsilon,
  c_{GW}/2 + \varepsilon)\)
  is superpolynomial
  establishing inapproximability factor
  \(c_{GW} + \Theta(\varepsilon)\).
\begin{proof}
The proof is analogous to that of Corollaries~\ref{cor:MaxCSP}
and~\ref{cor:MaxSAT},
hence we point out only the differences
We reduce \(\MaxCUT\) to \(\MaxDicut\),
but now replace every clause \(l = (x_i \oplus x_j = 1)\)
with \(C_{1}(l) = \lnot x_i \land x_j\)
and \(C_{2}(l) = x_i\land \lnot x_j\).
Now observe that whenever \(x_i\oplus x_j = 1\) is satisfied by 
	a partial assignment \(s\) then 
	exactly one of \(\lnot x_i \land x_j\) and 
	\(x_i\land \lnot x_j\) is also satisfied by \(s\). 
	If on the other hand \(x_i\oplus x_j\) is not fulfilled
	by \(s\), then neither \(\lnot x_i \land x_j\) nor
	\(x_i \land \lnot x_j\) are fulfilled.
The remainder of the proof is the same as in Corollary~\ref{cor:MaxSAT}.
\end{proof}
\end{corollary}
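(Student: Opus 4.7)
The plan is to reduce \MaxCUT (equivalently \MaxXOR{2}) to \MaxDicut in the same spirit as Corollaries~\ref{cor:MaxCSP} and~\ref{cor:MaxSAT}. For each 2-XOR clause $l = (x_i \oplus x_j = 1)$ in an instance $L$ of \MaxCUT, I would introduce two directed-cut constraints $C_1(l) = \lnot x_i \land x_j$ and $C_2(l) = x_i \land \lnot x_j$, corresponding to the two orientations of the undirected edge $\{i,j\}$. Set $\beta(L) \coloneqq \set{C_1(l), C_2(l)}{l \in L}$ and let $\gamma$ be the identity on assignments.

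The combinatorial core is an immediate four-case truth-table check: an assignment $s$ that satisfies $x_i \oplus x_j = 1$ satisfies exactly one of $C_1(l), C_2(l)$, namely the one whose orientation matches the cut; if $x_i$ and $x_j$ agree under $s$, then neither $C_1(l)$ nor $C_2(l)$ is satisfied. Hence $\val^{\MaxDicut}_{\beta(L)}(s) = \val^{\MaxCUT}_L(s)$ for every $s$, while $\size{\beta(L)} = 2\size{L}$. This verifies that $(\beta, \gamma)$ is a reduction in the sense of Definition~\ref{def:reduction}.

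Appealing to Proposition~\ref{prop:probRed}, I would translate the guarantees by accounting for the doubled instance size: completeness $(1-\varepsilon)\size{L}$ for \MaxCUT becomes $\tfrac{1-\varepsilon}{2}\size{\beta(L)}$ for \MaxDicut, and soundness $(\tfrac{1}{2}+\varepsilon)\size{L}$ becomes $(\tfrac{1}{4}+\tfrac{\varepsilon}{2})\size{\beta(L)}$. Combined with Theorem~\ref{thm:HalfXOR}, this yields the claimed lower bound $n^{\Omega(\log n / \log\log n)}$ and the inapproximability factor $\tfrac{1}{2}+\Theta(\varepsilon)$. The SDP case is identical, with the base-problem soundness $\tfrac{1}{2}+\varepsilon$ replaced by $\tfrac{c_{GW}}{2}+\varepsilon$ from Conjecture~\ref{conj:SDP-base}, giving inapproximability factor $c_{GW}+\Theta(\varepsilon)$. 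No real obstacle is present; the mild subtlety is only that because the objective is preserved while the size doubles, Corollary~\ref{cor:simpleReductions} does not apply with the naive parameters $\alpha=1,\mu=0$, so it is cleanest to invoke Proposition~\ref{prop:probRed} directly and rescale the guarantees by hand.
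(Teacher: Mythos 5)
Your reduction is exactly the paper's: split each $2$-XOR clause $x_i\oplus x_j=1$ into the two directed-cut conjunctions $\lnot x_i\land x_j$ and $x_i\land\lnot x_j$, note that exactly one is satisfied when the XOR holds and neither when it fails, so $\val^{\MaxDicut}_{\beta(L)}(s)=\val^{\MaxCUT}_L(s)$ with $\size{\beta(L)}=2\size{L}$, and then appeal to Proposition~\ref{prop:probRed} with Theorem~\ref{thm:HalfXOR} (or Conjecture~\ref{conj:SDP-base}) to halve the guarantee constants. Your side remark that Corollary~\ref{cor:simpleReductions} does not literally fit with $\alpha=1,\mu=0$ because $\size{\beta(L)}=2\size{L}\neq(|\alpha|+\mu)\size{L}$ is a fair observation (the same issue is present in the \MaxCONJSAT{2} case), but this is precisely why the paper, too, falls back on Proposition~\ref{prop:probRed}; your handling and resulting factors agree with the paper up to the harmless $\varepsilon$ vs.\ $\varepsilon/2$ relabeling that $\Theta(\varepsilon)$ absorbs.
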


\subsection{Minimum constraint satisfaction}
\label{sec:minim-constr-satisf}

In this section we examine minimum constraint satisfaction problems,
a variant of constraint satisfaction problems,
where the objective is not to maximize the number of
\emph{satisfied} constraints,
but to minimize the number of \emph{unsatisfied} constraints.
This is equivalent to maximizing the number of satisfied
constraints, however, the changed objective function
yields different approximation factors
due to the change in the magnitude of the optimum value; this is in
analogy to the algorithmic world.
We consider only \MinCNF{2} and \MinUnCUT{} from
\cite{agarwal2005log}, which are complete in their class in the
algorithmic hierarchy; our technique applies to many more problems in
\cite{papadimitriou1991optimization}. The problem \(\MinCNF{2}\) is of
particular interest here,
as it is considered to be the hardest minimum
CSP with nontrivial approximation guarantees (see
\cite{agarwal2005log}). 
We start with the general definition of minimum CSPs. 

\begin{definition}
  The \emph{minimum Constraint Satisfaction Problem}
  on variables \(x_{1}, \dotsc, x_{n}\)
  with constraint family \(\mathcal{C} = \{C_{1}, \dotsc, C_{m}\}\)
  is the minimization problem with
  \begin{enumerate}
  \item \textbf{feasible solutions}
    all 0/1 assignments to \(x_1,\dots,x_n\);
  \item \textbf{instances}
    all nonnegative weightings \(w_{1}, \dotsc, w_{m}\)
    of the constraints \(C_{1}, \dotsc, C_{m}\);
  \item \textbf{objective functions}
    weighted sum of negated constraints, i.e.
    \(\val_{w_{1}, \dotsc, w_{m}}(x_1,\dots,x_n) =
    \sum_i w_i [1 - C_i(x_{1},\dots,x_{n})]\).
  \end{enumerate}
  The goal is to minimize the objective function,
  i.e., the weight of unsatisfied constraints.
\end{definition}

As mentioned above, we consider two examples.
\begin{example}[Minimum CSPs]
  The problem \(\MinCNF{2}\)
  is the minimum CSP with constraint family
  consisting of all disjunction of two literals,
  as in \MaxSAT{2}.
  The problem \(\MinUnCUT\)
  is the minimum CSP with constraint family
  consisting of all equations \(x_{i} \oplus x_{j} = b\)
  with \(b \in \{0,1\}\),
  as in \MaxXOR{2}.
\end{example}

We are ready to prove LP inapproximability bounds for these problems. 
Instead of the reductions in \cite{chlebik2004approximation}, we use
direct, simpler reductions from \MaxCUT{} and here we provide reductions for general weights. Note that the current best
known algorithmic inapproximability
for \(\MinCNF{2}\) is \(8\sqrt{5} - 15 - \varepsilon
\approx 2.88854 - \varepsilon\) by
\cite{chlebik2004approximation}.
Assuming the Unique Games Conjecture,
\cite{chawla2006hardness} establish that \(\MinCNF{2}\) cannot be
approximated within any constant factor and our LP inapproximability
factor coincides with this one. The problem \(\MinUnCUT\) is known to
be SNP-hard (see \cite{papadimitriou1991optimization}) however the
authors are not aware of the strongest known factor. We refer the
reader to \cite{khanna2001approximability} for a classification of all
minimum CSPs.

\begin{theorem}
  \label{thm:minCSP}
  For every \(\varepsilon > 0\) and infinitely many \(n\),
  we have
  \(\fc(\MinCNF{2}, \varepsilon, 1/4 - \varepsilon)
  = n^{\Omega(\frac{\log n}{\log \log n})}\) and
  \(\fc(\MinUnCUT, \varepsilon, 1/2 - \varepsilon)
  = n^{\Omega(\frac{\log n}{\log \log n})}\)
  establishing inapproximability
  within any constant factor,
  where \(n\) is the number of variables.
  Assuming Conjecture~\ref{conj:SDP-base},
  even in the SDP case we have that
  \(\fcp(\MinCNF{2}, \varepsilon, (1 - c_{GW}) / 2 - \varepsilon)\)
  and \(\fcp(\MinUnCUT, \varepsilon, 1 - c_{GW} - \varepsilon)\)
  are superpolynomial
  showing inapproximability within any constant factor.
\begin{proof}
We reduce from \MaxCUT to \MinCNF{2}
similar to the previous reductions:
assignments are mapped to themselves,
i.e., \(\gamma\) is the identity.
Under \(\beta\)
every clause \(C_{\ell}\) is replaced with two disjunctive clauses
\(C_{\ell}(1)\) and \(C_{\ell}(2)\),
both inheriting the weight \(w_{\ell}\) of \(C_{\ell}\),
i.e.,
\(\val^{\MinCNF{2}}_{\beta(w_{1}, \dotsc, w_{m})}
(x_{1}, \dotsc, x_{n}) =
\sum_{\ell} w_{\ell} [(1 - C_\ell(1)) + (1 - C_\ell(2))]\).

For \(C_\ell = (x_{i} \oplus x_{j} = 1)\),
we let \(C_\ell(1) \coloneqq  x_{i} \lor \lnot x_{j}\)
and \(C_\ell(2) \coloneqq \lnot x_{i} \lor x_{j}\).
Note that if \(C_\ell\) is unsatisfied,
then both of \(C_\ell(1)\) and \(C_\ell(2)\) are satisfied,
and if \(C_\ell\) is satisfied,
then exactly one of \(C_\ell(1)\) and \(C_\ell(2)\) is satisfied.
Therefore
\(\val^{\MinCNF{2}}_{\beta(w_{1}, \dotsc, w_{m})}
= \sum_{\ell} w_{\ell}
- \val^{\MaxCUT}_{w_{1}, \dotsc, w_{m}} w_{\ell}\).
This provides the desired lower bound.

For \MinUnCUT the reduction is similar but simpler,
as we replace every clause \(C\) with itself.
\end{proof}
\end{theorem}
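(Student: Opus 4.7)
The plan is to obtain both statements as direct applications of the minimization form of Corollary~\ref{cor:simpleReductions}, reducing from \MaxCUT (whose inapproximability is provided by Theorem~\ref{thm:HalfXOR} in the LP case and Conjecture~\ref{conj:SDP-base} in the SDP case). The key observation that drives "inapproximability within any constant factor" is the following: for a minimization problem we choose $\alpha<0$ and a positive shift $\mu$ so that the completeness $C_2(f_2) = (\alpha\tau_1+\mu)\size{f_2}$ can be pushed to $\Theta(\varepsilon)\size{f_2}$ by picking $\tau_1$ close to $\mu/\abs{\alpha}$, while the soundness $S_2(f_2) = (\alpha\sigma_1+\mu)\size{f_2}$ stays bounded away from zero. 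The ratio $S_2/C_2$ then grows like $\Theta(1/\varepsilon)$, which exceeds every constant.

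For \MinUnCUT the reduction is the trivial one: $\gamma$ is the identity on assignments and $\beta$ leaves each \MaxXOR{2} clause unchanged, interpreting it now as a \MinUnCUT clause. For a single clause $x_i\oplus x_j = b$ the contributions to $\val^{\MaxCUT}$ and $\val^{\MinUnCUT}$ are complementary, so $\val^{\MinUnCUT}_{\beta(L)}(s) = \size{L} - \val^{\MaxCUT}_L(s)$. This places us in the setting of Corollary~\ref{cor:simpleReductions} with $\alpha=-1$, $\mu=1$, $\size{\beta(L)}=\size{L}$. Plugging in $\tau_1 = 1-\varepsilon$ and $\sigma_1 = 1/2+\varepsilon$ from Theorem~\ref{thm:HalfXOR} yields the guarantees $C_2 = \varepsilon\size{f}$ and $S_2 = (1/2-\varepsilon)\size{f}$, as required.

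For \MinCNF{2} I would use the standard pairing: each clause $x_i \oplus x_j = 1$ is replaced by the two disjunctions $x_i \lor x_j$ and $\lnot x_i \lor \lnot x_j$ (and analogously for $x_i\oplus x_j = 0$), while $\gamma$ remains the identity. A straightforward case analysis on the four assignments to $x_i,x_j$ shows that when the XOR clause is satisfied both disjunctions are satisfied, and when it is unsatisfied exactly one of the two disjunctions fails. Consequently the number of unsatisfied disjunctions in $\beta(L)$ under assignment $s$ equals $\size{L}-\val^{\MaxCUT}_L(s)$, which is the reduction identity of Corollary~\ref{cor:simpleReductions} with $\alpha=-1$, $\mu=1$ and $\size{\beta(L)}=2\size{L}$. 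Again inserting the hard \MaxCUT guarantees gives $C_2=\varepsilon\size{f}$ and $S_2 = (1/4-\varepsilon)\size{f}$ (after absorbing the factor of two into the size), delivering the stated bound.

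The SDP statements follow by the same reductions combined with Conjecture~\ref{conj:SDP-base}, merely replacing $\sigma_1 = 1/2+\varepsilon$ by $\sigma_1 = c_{GW}+\varepsilon$; the arithmetic of $\alpha\sigma_1+\mu$ then produces the claimed $1-c_{GW}-\varepsilon$ and $(1-c_{GW})/2-\varepsilon$ soundness guarantees. The only mild subtlety—and arguably the main conceptual point—is that although our \MaxCUT base only gives a factor-$2$ (or factor-$1/c_{GW}$) hardness, the \emph{affine shift} $\mu$ permitted in Definition~\ref{def:reduction} lets the completeness collapse to $\Theta(\varepsilon)$ on the minimization side; verifying that the parameters fit Corollary~\ref{cor:simpleReductions} is then a routine calculation.
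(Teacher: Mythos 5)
Your reduction is correct and follows essentially the same route as the paper: reduce from \MaxCUT, use the identity on assignments, and map each XOR clause to two disjunctions; the same arithmetic with $\tau_1 = 1-\varepsilon$, $\sigma_1 = 1/2+\varepsilon$ (resp.\ $c_{GW}+\varepsilon$) yields the claimed guarantees.

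Two points are worth flagging. First, your pairing $S(x_i\oplus x_j=1)=\{x_i\lor x_j,\ \lnot x_i\lor\lnot x_j\}$ is actually the \emph{correct} one: with your disjunctions the XOR clause satisfied means zero unsatisfied disjunctions, and XOR unsatisfied means exactly one, giving $\val^{\MinCNF{2}}_{\beta(L)}(s)=\size{L}-\val^{\MaxCUT}_L(s)$. The paper instead writes $C_\ell(1)=x_i\lor\lnot x_j$, $C_\ell(2)=\lnot x_i\lor x_j$; that pairing makes the count of unsatisfied disjunctions equal to the number of \emph{satisfied} XOR clauses, which contradicts the value identity the paper itself then asserts. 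You have silently fixed an inconsistency in the source.

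Second, for \MinUnCUT you announce $\alpha=-1$, $\mu=1$ together with $\size{\beta(L)}=\size{L}$, and present this as an instance of Corollary~\ref{cor:simpleReductions}. But that corollary's hypothesis forces $\size{\beta(f_1)}=(\abs\alpha+\mu)\size{f_1}=2\size{f_1}$, which is violated when $\beta$ is the identity; and mechanically plugging into the corollary would produce $\sigma_2 = (1-\sigma_1)/(\abs\alpha+\mu)=1/4-\varepsilon/2$, not the claimed $1/2-\varepsilon$. The fix is to bypass the corollary and check Proposition~\ref{prop:probRed} (or, equivalently, Definition~\ref{def:reduction}) directly: with $\gamma=\beta=\mathrm{id}$ and $\mu(L)=\size{L}$ one verifies $\val^{\MaxCUT}_L(s)=\size{L}-\val^{\MinUnCUT}_L(s)$, and then $C_2(L)\geq\varepsilon\size{L}$ and $\min\val^{\MinUnCUT}_L\geq(1/2-\varepsilon)\size{L}$ follow, giving the stated $(\varepsilon,\ 1/2-\varepsilon)$ guarantees. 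Your numbers are right, but the cited tool is misapplied; the paper itself argues via the general reduction rather than the corollary for exactly this reason.
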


\section{From matrix approximation to problem approximations}
\label{sec:round-pert}

We will now explain how a nonnegative matrix with small nonnegative
rank (or semidefinite rank) that is close to a slack matrix of a
problem \(\oprob\) of interest can be rounded to an actual slack
matrix with a moderate increase in nonnegative rank (or semidefinite
rank) and error. This argument is implicitly contained in
\cite{Rothvoss11,BDP2013jour} for the linear and semidefinite case
respectively. In some sense we might want to think of this approach
as an interpolation between a slack matrix (which corresponds to
\(\oprob\)) and a close-by matrix of low nonnegative rank (or
semidefinite rank) that does not correspond to any optimization
problem. The result is a low nonnegative rank (or semidefinite rank) approximation of \(\oprob\)
with small error. 

We will need the following simple lemma. Recall
that the exterior algebra of a vector space \(V\)
is the \(\R\)-algebra generated by \(V\)
subject to the relations \(v^{2} = 0\)
for all \(v \in V\).
As is customary, the product in this algebra is denoted by \(\wedge\).
The subspace of homogeneous degree-\(k\) elements
(i.e., linear combination of elements of the form
\(v_{1} \wedge \dotsb \wedge v_{k}\)
with \(v_{1}, \dotsc, v_{k} \in V\))
is denoted by \(\bigwedge^{k} V\).
Recall that for \(k = \dim V\),
the space \(\bigwedge^{k} V\) is one dimensional 
and is generated by \(v_{1} \wedge \dotsb \wedge v_{k}\)
for any basis \(v_{1}, \dotsc, v_{k}\) of \(V\).

\begin{lemma}
  \label{lem:small-matrix-factorzation}
  Let \(M \in \R^{m \times n}\) be a real matrix of rank \(r\).
  Then there are column vectors \(a_{1}, \dotsc, a_{r} \in \R^{m}\)
  and row vectors
  \(b_{1}, \dotsc, b_{r} \in \R^{n}\)
  with \(M = \sum_{i \in [r]} a_{i} b_{i}\).
  Moreover, \(\maxnorm{a_{i}} \leq 1\) and
  \(\maxnorm{b_{i}} \leq \maxnorm{M}\)
  for all \(1 \leq i \leq r\).
\begin{proof}
Consider the \(r\) dimensional vector space \(V\) spanned
by all the rows \(M_{1}, \dotsc, M_{m}\) of \(M\) and 
identify the one dimensional exterior product
\(\bigwedge^{r} V\) with \(\R\).
Now choose \(r\) rows \(M_{i_{1}}, \dotsc, M_{i_{r}}\)
for which \(M_{i_{1}} \wedge \dotsb \wedge M_{i_{r}}\)
is the largest in absolute value in \(\R\).
As the \(M_{i}\) together span \(V\) it follows that the largest value is non-zero. Hence \(M_{i_{1}}, \dotsc, M_{i_{r}}\) form a basis of \(V\).
Therefore any row \(M_{k}\) can be uniquely written as a linear
combination of the basis elements:
\begin{equation}
  \label{eq:small-matrix-factorization}
  M_{k} = \sum_{j \in [r]} a_{k,j} M_{i_{j}}.
\end{equation}
Fixing \(j \in [r]\)
and taking exterior products with the \(M_{i_{l}}\) where \(l \neq j\)
and both side we obtain
\begin{equation*}
  M_{i_{1}} \wedge \dotsc \wedge M_{i_{j-1}} \wedge
  M_{k}
   \wedge M_{i_{j+1}} \wedge \dotsc \wedge M_{i_{r}}
  =  a_{k,j} \cdot M_{i_{1}} \wedge \dotsc \wedge M_{i_{r}},
\end{equation*}
using the vanishing property of the exterior product.
By maximality of \(M_{i_{1}} \wedge \dotsc \wedge M_{i_{r}}\),
it follows that \(\abs{a_{k,j}} \leq 1\).
We choose \(a_{k} \coloneqq
\left[\begin{smallmatrix}
  a_{k,1} \\
  \dots \\
  a_{k,r}
\end{smallmatrix}\right]
\), and thus we have \(\maxnorm{a_{k}} \leq 1\).  Moreover choose
\(b_{j} \coloneqq M_{i_{j}}\), so that \(\maxnorm{b_{j}} \leq
\maxnorm{M}\) holds.  Finally, Eq.~\eqref{eq:small-matrix-factorization} can
be rewritten to \(M = \sum_{j \in [r]} a_{j} b_{j}\), finishing the
proof.
\end{proof}
\end{lemma}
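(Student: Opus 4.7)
The plan is to select $r$ rows of $M$ that form a basis of the row space in a volume-maximizing way, and then use a Cramer-style expansion to bound the expansion coefficients by $1$. The rows themselves will serve as the $b_j$, automatically giving $\maxnorm{b_j} \leq \maxnorm{M}$.

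Concretely, I would first observe that the row space $V \subseteq \R^{n}$ of $M$ has dimension $r$, so $\bigwedge^{r} V$ is one-dimensional and can be identified with $\R$ once a generator is chosen. I would then pick indices $i_{1}, \dotsc, i_{r}$ so that the scalar $M_{i_{1}} \wedge \dotsb \wedge M_{i_{r}}$ is as large as possible in absolute value among all $\binom{m}{r}$ choices; since the rows of $M$ span $V$, this maximum is nonzero and the chosen rows form a basis of $V$. I would set $b_{j} \coloneqq M_{i_{j}}$, which are rows of $M$, so the bound $\maxnorm{b_{j}} \leq \maxnorm{M}$ is immediate.

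Next, for any row $M_{k}$, basis expansion yields unique scalars $a_{k,j}$ with $M_{k} = \sum_{j \in [r]} a_{k,j} M_{i_{j}}$. To bound $\abs{a_{k,j}}$, I would wedge both sides with $M_{i_{1}} \wedge \dotsb \wedge M_{i_{j-1}} \wedge M_{i_{j+1}} \wedge \dotsb \wedge M_{i_{r}}$. By the alternating property of $\wedge$, all terms on the right vanish except the $j$-th, giving a Cramer-type identity that expresses $a_{k,j}$ as the ratio of two top-degree wedge products. The denominator is the one we maximized, so $\abs{a_{k,j}} \leq 1$. Assembling the $a_{k,j}$ into column vectors $a_{j}$ indexed by $j$ (with $k$ ranging over rows of $M$), I obtain $M = \sum_{j \in [r]} a_{j} b_{j}$ with $\maxnorm{a_{j}} \leq 1$.

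This is essentially routine linear algebra once the right framing is chosen, so there is no real obstacle; the only conceptual step is recognizing that a maximum-volume basis gives an $L^{\infty}$ bound on expansion coefficients, which is the exact analogue of how Cramer's rule bounds solutions of linear systems by ratios of determinants. I would favor the exterior-algebra formulation over an explicit determinant identity because it keeps the argument coordinate-free and avoids choosing a particular $r \times r$ submatrix of $M$.
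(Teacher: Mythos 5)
Your proposal follows the paper's proof essentially verbatim: same maximum-volume choice of $r$ rows via the top exterior power, same Cramer-type wedge identity to bound the expansion coefficients by $1$, and same choice of the selected rows as the $b_{j}$. Your final assembly of the $a_{j}$ (indexed by $j$, with entries over $k$) is in fact slightly cleaner than the paper's, which briefly defines $a_{k}$ as a vector in $\R^{r}$ before switching to the $a_{j} \in \R^{m}$ actually needed in the conclusion.
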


For a vector \(a\) we can decompose it into its positive and negative part so that \(a = a^+ - a^-\) with \(a^+ a^- = 0\).
Let \(\abs{a}\) denote the vector obtained from \(a\)
by replacing every entry with its absolute value.
Note that
\(a^{+}\), \(a^{-}\), and \(\abs{a}\) are nonnegative vectors
and \(\abs{a} = a^{+} + a^{-}\).
Furthermore their \(\ell_{\infty}\)-norm is at most \(\maxnorm{a}\).

\begin{theorem}
  \label{thm:round-LP}
  Let \(\oprob\)
  be an optimization problem with \((C, S)\)-approximate
  slack matrix \(M\)
  and let \(\widetilde{M}\) be a nonnegative matrix.
  Then for the adjusted guarantee \(C'\) for \(\oprob\)
  defined as
  \begin{align}
      C'(f) &\coloneqq C(f) +
      (\rank M + \rank \widetilde{M}) \maxnorm{\widetilde{M} - M}
      &&\text{if \(\oprob\) is a maximization problem, and} \\
      C'(f) &\coloneqq C(f) - (\rank M + \rank \widetilde{M})
      \maxnorm{\widetilde{M} - M}
      &&\text{if \(\oprob\) is a minimization problem,}
  \end{align}
  we have
  \begin{align}
      \fc(\oprob, C', S) &\leq \LPrk \widetilde{M}
      + 2 (\rank M + \rank \widetilde{M})
      &&\text{and} \\
      \fcp(\oprob, C', S) &\leq \SDPrk \widetilde{M}
      + 2 (\rank M + \rank \widetilde{M}).
  \end{align}
\begin{proof}
We prove the statement for maximization problems; the minimization case follows similarly. The proof is based on the vector identity
\begin{equation}
  \label{eq:rank-k-nonneg-factor}
  \sum_{i \in [k]} \abs{a_{i}} b
  - \sum_{i \in [k]} a_{i} b_{i}
  =
  \sum_{i \in [k]} a_{i}^{+}  (b - b_{i})
  + \sum_{i \in [k]}  a_{i}^{-} (b + b_{i})
  .
\end{equation}

In our setting, the \(a_{i}\), \(b_{i}\) with $i \in [k]$ 
will arise from the (not necessarily nonnegative) factorization
of \(\widetilde{M} - M\),
obtained by applying Lemma~\ref{lem:small-matrix-factorzation}, i.e., we have
\begin{equation}
  \label{eq:round-LP-factor}
  \widetilde{M} - M
  =
  \sum_{i \in [k]} a_{i} b_{i},
\end{equation}
where \(\maxnorm{a_{i}} \leq 1\) and
\(\maxnorm{b_{i}} \leq \maxnorm{M}\)
for \(i \in [k]\) with
\(k \leq \rank (\widetilde{M} - M)
\leq \rank M + \rank \widetilde{M}\).
Furthermore, define
\(b \coloneqq \maxnorm{\widetilde{M} - M} \mathbb{1}\)
to be the row vector with all entries equal to \(\maxnorm{\widetilde{M} - M} \mathbb{1}\).

Substituting these values into~\eqref{eq:rank-k-nonneg-factor},
 using Eq.~\eqref{eq:round-LP-factor}
we obtain after rearranging
\begin{equation}
  \label{eq:2}
  N \coloneqq \sum_{i \in [k]} \abs{a_{i}} b
  + M
  =
  \widetilde{M}
  +
  \sum_{i \in [k]} a_{i}^{+}  (b - b_{i})
  +
  \sum_{i \in [k]} a_{i}^{-} (b + b_{i})
  ,
\end{equation}
so that we can conclude
\begin{align}
  \label{eq:4}
  \LPrk
    N
  &
  \leq \LPrk \widetilde{M}
  + 2 k \leq
  \LPrk \widetilde{M}
  + 2 (\rank M + \rank \widetilde{M}),
\\  \intertext{and similarly}
  \label{eq:3}
  \SDPrk N
  &
  \leq \SDPrk \widetilde{M}
  + 2 k \leq \SDPrk \widetilde{M}
  + 2 (\rank M + \rank \widetilde{M}).
\end{align}
It remains to relate $N$ to the \((C', S)\)-approximate slack matrix
of \(\oprob\).
By definition, the entries of \(N\) are
\begin{equation*}
  N(f, s) =
  \underbrace{\sum_{i \in [k]} \abs{a_{i}(f)} \cdot \maxnorm{\widetilde{M} - M}
  + C(f)}_{\coloneqq f^{*} \leq C'(f)} - \val_{f}(s),
\end{equation*}
where \(a_{i}(f)\) is the \(f\)-entry of \(a_{i}\).
Furthermore,
as \(\maxnorm{a_{i}} \leq 1\) and
\(k \leq \rank M + \rank \widetilde{M}\),
we have \(f^{*} \leq C'(f)\).
Thus the \((C', S)\)-approximate slack matrix \(M'\) of \(\oprob\)
looks like
\begin{equation*}
  M'(f, s) = N(f, s) + (C'(f) - f^{*}),
\end{equation*}
and as \(f^{*} \leq C'(f)\),
we have \(\LPrk M' \leq \LPrk N\) and \(\SDPrk M' \leq \SDPrk N\),
establishing the claimed complexity bounds
due to \eqref{eq:4} in the LP case and \eqref{eq:3} in the SDP case.
\end{proof}
\end{theorem}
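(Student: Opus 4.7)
The plan is to build a factorization of the $(C', S)$-approximate slack matrix $M'$ of $\oprob$ by perturbing $\widetilde{M}$ with a small number of extra rank-one nonnegative (or psd) factors together with a row-dependent affine shift absorbed into the $\mu \mathbb{1}$ slot of the factorization. I describe only the maximization case; the minimization case is symmetric with reversed signs.

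First, I would apply Lemma~\ref{lem:small-matrix-factorzation} to the real matrix $\widetilde{M} - M$, which has rank at most $k \coloneqq \rank M + \rank \widetilde{M}$, to obtain $\widetilde{M} - M = \sum_{i \in [k]} a_i b_i$ with $\maxnorm{a_i} \leq 1$ and $\maxnorm{b_i} \leq \maxnorm{\widetilde{M} - M}$. Decompose each $a_i = a_i^+ - a_i^-$ into its nonnegative and nonpositive parts, and set $b \coloneqq \maxnorm{\widetilde{M} - M} \cdot \mathbb{1}$. The crucial quantitative bound from the lemma is that $b \pm b_i \geq 0$ entrywise, so each $a_i^{\pm}(b \mp b_i)$ is an outer product of two nonnegative vectors (and in particular a rank-one psd matrix once realized as diagonal matrices in the SDP setup).

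Next, I would use the identity
\begin{equation*}
  \sum_{i \in [k]} \abs{a_{i}} \cdot b - \sum_{i \in [k]} a_i b_i
  = \sum_{i \in [k]} a_i^+ (b - b_i) + \sum_{i \in [k]} a_i^- (b + b_i),
\end{equation*}
which rearranges to
\begin{equation*}
  N \coloneqq M + \sum_{i \in [k]} \abs{a_{i}} \cdot b
  = \widetilde{M} + \sum_{i \in [k]} a_i^+ (b - b_i) + \sum_{i \in [k]} a_i^- (b + b_i).
\end{equation*}
The right-hand side exhibits $N$ as a nonnegative rank-$2k$ perturbation of $\widetilde{M}$, yielding $\LPrk N \leq \LPrk \widetilde{M} + 2k$ and $\SDPrk N \leq \SDPrk \widetilde{M} + 2k$ directly from the definitions of LP and SDP rank.

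Finally, I would match $N$ to $M'$. The $(f,s)$ entry of $N$ equals $f^*(f) - \val_f(s)$, where $f^*(f) \coloneqq C(f) + \sum_{i \in [k]} \abs{a_i(f)} \cdot \maxnorm{\widetilde{M} - M}$; the bounds $\maxnorm{a_i} \leq 1$ and $k \leq \rank M + \rank \widetilde{M}$ give $f^*(f) \leq C'(f)$. Hence $M'(f,s) = N(f,s) + (C'(f) - f^*(f))$ writes $M'$ as $N$ plus a nonnegative row-dependent constant, which is exactly the $\mu \mathbb{1}$ summand permitted for free in both the LP and SDP factorization definitions. Thus $\LPrk M' \leq \LPrk N$ and $\SDPrk M' \leq \SDPrk N$, and Theorem~\ref{thm:factorization} closes the argument.

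The main subtlety will be the bookkeeping that keeps every intermediate vector genuinely nonnegative: the whole scheme hinges on the quantitative norm bound in Lemma~\ref{lem:small-matrix-factorzation}, which is precisely calibrated so that $b$ dominates each $b_i$ entrywise. The second minor point is recognizing that the "slack" $C'(f) - f^*(f)$ is a column-constant shift per row, matching the $\mu \mathbb{1}$ term exactly rather than forcing an extra rank-one contribution; without the affine shift built into the definition of LP/SDP rank, the bound would deteriorate by one.
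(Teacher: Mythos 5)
Your proposal is correct and follows the same route as the paper: apply Lemma~\ref{lem:small-matrix-factorzation} to \(\widetilde{M} - M\), split each \(a_i\) into positive and negative parts, dominate the \(b_i\) by the constant row \(b = \maxnorm{\widetilde{M}-M}\mathbb{1}\), use the identity to write \(N\) as a nonnegative perturbation of \(\widetilde{M}\) of rank increment \(2k\), and then absorb the row-constant slack \(C'(f) - f^*(f) \geq 0\) into the \(\mu\mathbb{1}\) term to pass from \(N\) to the \((C',S)\)-slack matrix \(M'\). You in fact state the bound on \(\maxnorm{b_i}\) more accurately than the printed proof (relative to \(\maxnorm{\widetilde{M}-M}\) rather than \(\maxnorm{M}\), a harmless typo in the source), and you make explicit why \(b \pm b_i \geq 0\) and why the rank-one nonnegative factors lift to psd factors -- details the paper leaves implicit.
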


A possible application of Theorem~\ref{thm:round-LP} is to
\lq{}thin-out\rq{} a given factorization of a slack matrix to obtain
an approximation with low nonnegative rank. The idea is that if a
nonnegative matrix factorization contains a large number of factors
that contribute only very little to each of the entries, then we can
simply drop those factors, significantly reduce the nonnegative rank,
and obtain a very good approximation of the original optimization
problem. Theorem~\ref{thm:round-LP} is then used to turn the
approximation of the matrix into an approximation of the original
problem of interest.

Also, it is possible to obtain low rank approximations of
combinatorial problems sampling rank-1 factors proportional to their
\(\ell_1\)-weight as done in the context of information-theoretic
approximations in \cite{BJLP2013}. However, the obtained
approximations tend to be too weak to be of interest.

\section{Final Remarks}
\label{sec:final-remarks}

We conclude with the following remarks:
\begin{enumerate}[leftmargin=1.5em, itemsep=-\parsep]
\item Our reduction mechanism works for both LPs and SDPs alike.
  Unfortunately, there are few inapproximability results for SDPs
  (such as the one in \cite{LRS14}), so
  that our SDP inapproximability factors are conditional on
  Conjecture~\ref{conj:SDP-base}, which serves as a blackbox here. 
\item We expect that via an appropriate reduction one can establish LP
  inapproximability of the TSP, however
  the current reductions, e.g., in \cite[§2]{TSPinapprox2001} and
  \cite[§6]{TSPinapprox2013}, cannot be used as they
  translate feasible solutions depending on the objective functions.
\item Our reduction could also establish high
  formulation complexity for problems in \(P\) by reducing from the
  matching problem.
\end{enumerate}

\section*{Acknowledgements}
\label{sec:acknowledgements}

Research reported in this paper was partially supported
by NSF grant CMMI-1300144 and NSF CAREER grant CMMI-1452463. The authors would like to thank James Lee
for the helpful discussions regarding max-CSPs.
We are indebted to Siu On Chan
for some of the PCP inapproximability bounds as well as Santosh
Vempala for the helpful discussions.

\bibliographystyle{abbrvnat}
\bibliography{bibs}

\begin{thebibliography}{57}
\providecommand{\natexlab}[1]{#1}
\providecommand{\url}[1]{\texttt{#1}}
\expandafter\ifx\csname urlstyle\endcsname\relax
  \providecommand{\doi}[1]{doi: #1}\else
  \providecommand{\doi}{doi: \begingroup \urlstyle{rm}\Url}\fi

\bibitem[Agarwal et~al.(2005)Agarwal, Charikar, Makarychev, and
  Makarychev]{agarwal2005log}
A.~Agarwal, M.~Charikar, K.~Makarychev, and Y.~Makarychev.
\newblock {$O(\sqrt{\log n})$} approximation algorithms for {M}in {UnCut},
  {M}in {2CNF} {D}eletion, and directed cut problems.
\newblock In \emph{Proceedings of {STOC}}, pages 573--581. ACM, 2005.

\bibitem[Ahuja and Orlin(2001)]{ahuja2001inverse}
R.~K. Ahuja and J.~B. Orlin.
\newblock Inverse optimization.
\newblock \emph{Operations Research}, 49\penalty0 (5):\penalty0 771--783, 2001.

\bibitem[Austrin et~al.(2009)Austrin, Khot, and
  Safra]{austrin2009inapproximability}
P.~Austrin, S.~Khot, and M.~Safra.
\newblock Inapproximability of vertex cover and independent set in bounded
  degree graphs.
\newblock In \emph{Computational Complexity, CCC}, pages 74--80. IEEE, 2009.

\bibitem[Avis and Tiwary(2013)]{2013arXiv1302.2340A}
D.~Avis and H.~R. Tiwary.
\newblock On the extension complexity of combinatorial polytopes.
\newblock \emph{ArXiv:1302.2340}, Feb. 2013.

\bibitem[Bazzi et~al.(2015)Bazzi, Fiorini, Pokutta, and
  Svensson]{VertexCover2015}
A.~Bazzi, S.~Fiorini, S.~Pokutta, and O.~Svensson.
\newblock No small linear program approximates vertex cover within a factor
  {$2-\varepsilon$}.
\newblock \emph{arXiv:1503.00753v1 [cs.CC]}, 2015.

\bibitem[Braun and Pokutta(2013)]{BP2013}
G.~Braun and S.~Pokutta.
\newblock Common information and unique disjointness.
\newblock In \emph{Proceedings of {FOCS}}, pages 688--697, 2013.
\newblock \doi{10.1109/FOCS.2013.79}.
\newblock \url{http://eccc.hpi-web.de/report/2013/056/}.

\bibitem[Braun and Pokutta(2015)]{BP2014matching}
G.~Braun and S.~Pokutta.
\newblock The matching polytope does not admit fully-polynomial size relaxation
  schemes.
\newblock \emph{to appear in Processings of {SODA} / preprint available at
  \url{http://arxiv.org/abs/1403.6710}}, 2015.

\bibitem[Braun et~al.(2012)Braun, Fiorini, Pokutta, and Steurer]{bfps2012}
G.~Braun, S.~Fiorini, S.~Pokutta, and D.~Steurer.
\newblock Approximation limits of linear programs (beyond hierarchies).
\newblock In \emph{Proceedings of {FOCS}}, pages 480--489, 2012.
\newblock ISBN 978-1-4673-4383-1.
\newblock \doi{10.1109/FOCS.2012.10}.

\bibitem[Braun et~al.(2013)Braun, Jain, Lee, and Pokutta]{BJLP2013}
G.~Braun, R.~Jain, T.~Lee, and S.~Pokutta.
\newblock Information-theoretic approximations of the nonnegative rank.
\newblock \emph{submitted / preprint available at ECCC
  \url{http://eccc.hpi-web.de/report/2013/158.}}, 2013.

\bibitem[Braun et~al.(2014{\natexlab{a}})Braun, Fiorini, and Pokutta]{BFP2013}
G.~Braun, S.~Fiorini, and S.~Pokutta.
\newblock Average case polyhedral complexity of the maximum stable set problem.
\newblock \emph{Proceedings of {RANDOM} / arXiv:1311.4001}, 2014{\natexlab{a}}.

\bibitem[Braun et~al.(2014{\natexlab{b}})Braun, Fiorini, and
  Pokutta]{BFP2013jour}
G.~Braun, S.~Fiorini, and S.~Pokutta.
\newblock Average case polyhedral complexity of the maximum stable set problem.
\newblock \emph{submitted}, 2014{\natexlab{b}}.

\bibitem[Braun et~al.(2014{\natexlab{c}})Braun, Fiorini, Pokutta, and
  Steurer]{bfps2012jour}
G.~Braun, S.~Fiorini, S.~Pokutta, and D.~Steurer.
\newblock Approximation limits of linear programs (beyond hierarchies).
\newblock \emph{Mathematics of Operations Research}, 2014{\natexlab{c}}.
\newblock \doi{10.1287/moor.2014.0694}.
\newblock DOI: 10.1287/moor.2014.0694.

\bibitem[Braun et~al.(2015{\natexlab{a}})Braun, Brown-Cohen, Huq, Pokutta,
  Raghavendra, Roy, Weitz, and Zink]{SDPmatching2015}
G.~Braun, J.~Brown-Cohen, A.~Huq, S.~Pokutta, P.~Raghavendra, A.~Roy, B.~Weitz,
  and D.~Zink.
\newblock The matching problem has no small symmetric {SDP}.
\newblock \emph{arXiv:1504.00703 [cs.CC]}, April 2015{\natexlab{a}}.

\bibitem[Braun et~al.(2015{\natexlab{b}})Braun, Ostrowski, Pokutta, and
  Zink]{GraphIso2015}
G.~Braun, J.~Ostrowski, S.~Pokutta, and D.~Zink.
\newblock {LP} hardness of approximation for (approximate) graph isomorphisms.
\newblock submitted, 2015{\natexlab{b}}.

\bibitem[Braun et~al.(2015{\natexlab{c}})Braun, Pokutta, and
  Roy]{braun2015strong}
G.~Braun, S.~Pokutta, and A.~Roy.
\newblock Strong reductions for extended formulations.
\newblock \emph{arXiv preprint arXiv:1512.04932}, 2015{\natexlab{c}}.

\bibitem[Braverman and Moitra(2013)]{braverman2012information}
M.~Braverman and A.~Moitra.
\newblock An information complexity approach to extended formulations.
\newblock In \emph{Proceedings of {STOC}}, pages 161--170, 2013.

\bibitem[Bri{\"e}t et~al.(2014)Bri{\"e}t, Dadush, and Pokutta]{BDP2013jour}
J.~Bri{\"e}t, D.~Dadush, and S.~Pokutta.
\newblock On the existence of 0/1 polytopes with high semidefinite extension
  complexity.
\newblock \emph{to appear in Mathematical Programming B}, 2014.

\bibitem[Chan et~al.(2013)Chan, Lee, Raghavendra, and Steurer]{CLRS13}
S.~Chan, J.~Lee, P.~Raghavendra, and D.~Steurer.
\newblock Approximate constraint satisfaction requires large {LP} relaxations.
\newblock In \emph{Proceedings of {FOCS}}, pages 350--359, 2013.
\newblock \doi{10.1109/FOCS.2013.45}.

\bibitem[Chan(2013)]{chan13._resist}
S.~O. Chan.
\newblock Approximation resistance from pairwise independent subgroups.
\newblock In \emph{Proceedings of the Forty-fifth Annual ACM Symposium on
  Theory of Computing}, {STOC}, pages 447--456, New York, NY, USA, 2013. ACM.
\newblock ISBN 978-1-4503-2029-0.
\newblock \doi{10.1145/2488608.2488665}.

\bibitem[Charikar et~al.(2009)Charikar, Makarychev, and Makarychev]{CMM2009}
M.~Charikar, K.~Makarychev, and Y.~Makarychev.
\newblock Integrality gaps for {S}herali-{A}dams relaxations.
\newblock In \emph{Proceedings of {STOC}}, pages 283--292, 2009.

\bibitem[Chawla et~al.(2006)Chawla, Krauthgamer, Kumar, Rabani, and
  Sivakumar]{chawla2006hardness}
S.~Chawla, R.~Krauthgamer, R.~Kumar, Y.~Rabani, and D.~Sivakumar.
\newblock On the hardness of approximating multicut and sparsest-cut.
\newblock \emph{computational complexity}, 15\penalty0 (2):\penalty0 94--114,
  2006.

\bibitem[Chleb{\'\i}k and Chleb{\'\i}kov{\'a}(2004)]{chlebik2004approximation}
M.~Chleb{\'\i}k and J.~Chleb{\'\i}kov{\'a}.
\newblock On approximation hardness of the minimum 2sat-deletion problem.
\newblock In \emph{Mathematical Foundations of Computer Science 2004}, pages
  263--273. Springer, 2004.

\bibitem[Edmonds(1965)]{Edmonds65}
J.~Edmonds.
\newblock Maximum matching and a polyhedron with 0, 1 vertices.
\newblock \emph{Journal of Research National Bureau of Standards},
  69B:\penalty0 125--130, 1965.

\bibitem[Engebretsen and Karpinski(2001)]{TSPinapprox2001}
L.~Engebretsen and M.~Karpinski.
\newblock Approximation hardness of {TSP} with bounded metrics.
\newblock In \emph{Automata, Languages and Programming}, volume 2076 of
  \emph{Lecture Notes in Computer Science}, pages 201--212. Springer Berlin
  Heidelberg, 2001.
\newblock ISBN 978-3-540-42287-7 (print) and 978-3-540-48224-6 (online).
\newblock \doi{10.1007/3-540-48224-5_17}.
\newblock URL
  \url{http://theory.cs.uni-bonn.de/ftp/reports/cs-reports/2000/85220-CS.pdf}.

\bibitem[Feige and Goemans(1995)]{feige1995approximating}
U.~Feige and M.~Goemans.
\newblock Approximating the value of two power proof systems, with applications
  to max 2sat and max dicut.
\newblock In \emph{Theory of Computing and Systems, 1995. Proceedings., Third
  Israel Symposium on the}, pages 182--189. IEEE, 1995.

\bibitem[Feige et~al.(1991)Feige, Goldwasser, Lov{\'a}sz, Safra, and
  Szegedy]{feige91._approx_np}
U.~Feige, S.~Goldwasser, L.~Lov{\'a}sz, S.~Safra, and M.~Szegedy.
\newblock Approximating clique is almost {NP}-complete.
\newblock In \emph{Proceedings of {FOCS}}, pages 2--12. IEEE Comput. Soc.
  Press., October 1991.
\newblock ISBN 0-8186-2445-0.
\newblock \doi{10.1109/SFCS.1991.185341}.

\bibitem[Feige et~al.(2002)Feige, Karpinski, and Langberg]{Feige2002201}
U.~Feige, M.~Karpinski, and M.~Langberg.
\newblock Improved approximation of {Max-Cut} on graphs of bounded degree.
\newblock \emph{Journal of Algorithms}, 43\penalty0 (2):\penalty0 201--219,
  2002.
\newblock ISSN 0196-6774.
\newblock \doi{10.1016/S0196-6774(02)00005-6}.
\newblock URL \url{http://www.paradise.caltech.edu/~mikel/papers/deg_cut.ps}.

\bibitem[Fiorini et~al.(2012)Fiorini, Massar, Pokutta, Tiwary, and
  {de}~Wolf]{extform4}
S.~Fiorini, S.~Massar, S.~Pokutta, H.~R. Tiwary, and R.~{de}~Wolf.
\newblock Linear vs. semidefinite extended formulations: Exponential separation
  and strong lower bounds.
\newblock \emph{Proceedings of {STOC}}, 2012.

\bibitem[Goemans and Williamson(1994)]{goemans94._maxSAT}
M.~X. Goemans and D.~P. Williamson.
\newblock New {$3/4$}-approximation algorithms for the maximum satisfiability
  problem.
\newblock \emph{SIAM J. Disc. Math.}, 7\penalty0 (4):\penalty0 656--666, Nov.
  1994.

\bibitem[Goemans and Williamson(1995)]{GoemansWilliamson95}
M.~X. Goemans and D.~P. Williamson.
\newblock Improved approximation algorithms for maximum cut and satisfiability
  problems using semidefinite programming.
\newblock \emph{J. Assoc. Comput. Mach.}, 42:\penalty0 1115--1145, 1995.
\newblock \doi{10.1145/227683.227684}.

\bibitem[Gouveia et~al.(2013)Gouveia, Parrilo, and
  Thomas]{GouveiaParriloThomas2011}
J.~Gouveia, P.~A. Parrilo, and R.~Thomas.
\newblock Lifts of convex sets and cone factorizations.
\newblock \emph{Mathematics of Operations Research}, 38\penalty0 (2):\penalty0
  248--264, 2013.

\bibitem[H{\aa}stad(1999)]{haastad1999clique}
J.~H{\aa}stad.
\newblock Clique is hard to approximate within {$1-\varepsilon$}.
\newblock \emph{Acta Mathematica}, 182\penalty0 (1):\penalty0 105--142, 1999.

\bibitem[H{\aa}stad(2001)]{haastad2001some}
J.~H{\aa}stad.
\newblock Some optimal inapproximability results.
\newblock \emph{Journal of the ACM (JACM)}, 48\penalty0 (4):\penalty0 798--859,
  2001.

\bibitem[Kaibel and Weltge(2013)]{2013arXiv1307.3543K}
V.~Kaibel and S.~Weltge.
\newblock A short proof that the extension complexity of the correlation
  polytope grows exponentially.
\newblock \emph{ArXiv e-prints}, July 2013.

\bibitem[Kaibel et~al.(2013)Kaibel, Walter, and Weltge]{KWW2013}
V.~Kaibel, M.~Walter, and S.~Weltge.
\newblock {LP} formulations that induce extensions.
\newblock private communication, 2013.

\bibitem[Kann et~al.(1997)Kann, Khanna, Lagergren, and Panconesi]{MaxkCUT1997}
V.~Kann, S.~Khanna, J.~Lagergren, and A.~Panconesi.
\newblock On the hardness of approximating {Max-$k$-CUT} and its dual.
\newblock \emph{Chicago Journal of Theoretical Computer Science}, 1997\penalty0
  (2):\penalty0 1--18, June 1997.

\bibitem[Karloff(1999)]{GoodGW1991}
H.~Karloff.
\newblock How good is the {G}oemans--{W}illiamson {MAX} {CUT} algorithm?
\newblock \emph{{SIAM} Journal on Computing}, 29\penalty0 (1):\penalty0
  336--350, 1999.
\newblock \doi{10.1137/S0097539797321481}.

\bibitem[Karpinski et~al.(2013)Karpinski, Lampis, and Schmied]{TSPinapprox2013}
M.~Karpinski, M.~Lampis, and R.~Schmied.
\newblock New inapproximability bounds for {TSP}.
\newblock In \emph{Algorithms and Computation}, volume 8283 of \emph{Lecture
  Notes in Computer Science}, pages 568--578. Springer Berlin Heidelberg, 2013.
\newblock ISBN 978-3-642-45029-7 (print) and 978-3-642-45030-3 (online).
\newblock \doi{10.1007/978-3-642-45030-3_53}.

\bibitem[Khanna et~al.(2001)Khanna, Sudan, Trevisan, and
  Williamson]{khanna2001approximability}
S.~Khanna, M.~Sudan, L.~Trevisan, and D.~P. Williamson.
\newblock The approximability of constraint satisfaction problems.
\newblock \emph{SIAM Journal on Computing}, 30\penalty0 (6):\penalty0
  1863--1920, 2001.

\bibitem[Khot et~al.(2007)Khot, Kindler, Mossel, and O'Donnell]{KhotKMO07}
S.~Khot, G.~Kindler, E.~Mossel, and R.~O'Donnell.
\newblock Optimal inapproximability results for {MAX-CUT} and other 2-variable
  {CSP}s?
\newblock \emph{SIAM J. Comput.}, 37\penalty0 (1):\penalty0 319--357, 2007.
\newblock \doi{10.1137/S0097539705447372}.

\bibitem[Lee et~al.(2014{\natexlab{a}})Lee, Raghavendra, and Steurer]{LRS14}
J.~Lee, P.~Raghavendra, and D.~Steurer.
\newblock Lower bounds on the size of semidefinite programming relaxations.
\newblock \emph{STOC 2015, arXiv:1411.6317}, 2014{\natexlab{a}}.

\bibitem[Lee(2014)]{James2014}
J.~R. Lee.
\newblock personal communication, Nov. 2014.

\bibitem[Lee et~al.(2014{\natexlab{b}})Lee, Raghavendra, Steurer, and
  Tan]{lee2014power}
J.~R. Lee, P.~Raghavendra, D.~Steurer, and N.~Tan.
\newblock On the power of symmetric {LP} and {SDP} relaxations.
\newblock In \emph{Proceedings of the 2014 IEEE 29th Conference on
  Computational Complexity}, pages 13--21. IEEE Computer Society,
  2014{\natexlab{b}}.

\bibitem[Papadimitriou and Yannakakis(1991)]{papadimitriou1991optimization}
C.~H. Papadimitriou and M.~Yannakakis.
\newblock Optimization, approximation, and complexity classes.
\newblock \emph{Journal of computer and system sciences}, 43\penalty0
  (3):\penalty0 425--440, 1991.

\bibitem[Pashkovich(2012)]{Pashkovich12}
K.~Pashkovich.
\newblock \emph{Extended Formulations for Combinatorial Polytopes}.
\newblock PhD thesis, Magdeburg Universit\"at, 2012.

\bibitem[Pokutta and Van~Vyve(2013)]{VP2013}
S.~Pokutta and M.~Van~Vyve.
\newblock A note on the extension complexity of the knapsack polytope.
\newblock \emph{Operations Research Letters}, 41:\penalty0 347--350, 2013.

\bibitem[Rothvo{\ss}(2011)]{Rothvoss11}
T.~Rothvo{\ss}.
\newblock Some 0/1 polytopes need exponential size extended formulations, 2011.
\newblock arXiv:1105.0036.

\bibitem[Rothvo{\ss}(2014)]{Rothvoss13}
T.~Rothvo{\ss}.
\newblock The matching polytope has exponential extension complexity.
\newblock \emph{Proceedings of {STOC}}, 2014.

\bibitem[Schoenebeck(2008)]{Schoen-k-CSP}
G.~Schoenebeck.
\newblock Linear level {L}asserre lower bounds for certain k-{CSP}s.
\newblock In \emph{Proceedings of FOCS}, pages 593--602, Oct. 2008.
\newblock \doi{10.1109/FOCS.2008.74}.

\bibitem[Schrijver(1986)]{schrijver86._LP}
A.~Schrijver.
\newblock \emph{Theory of Linear Programming}.
\newblock Wiley-Interscience, 1986.

\bibitem[Singh(2010)]{Singh10}
M.~Singh.
\newblock Bellairs workshop on approximation algorithms.
\newblock Open problem session \#1, 2010.

\bibitem[Trevisan(1998)]{trevisan98._parallelLP}
L.~Trevisan.
\newblock Parallel approximation algorithms by positive linear programming.
\newblock \emph{Algorithmica}, 21\penalty0 (1):\penalty0 72--88, 1998.
\newblock ISSN 0178-4617 (print) and 1432-0541 (online).
\newblock \doi{10.1007/PL00009209}.

\bibitem[Trevisan(2004)]{trevisan2004inapproximability}
L.~Trevisan.
\newblock Inapproximability of combinatorial optimization problems.
\newblock \emph{The Computing Research Repository}, 2004.

\bibitem[Trevisan et~al.(2000)Trevisan, Sorkin, Sudan, and
  Williamson]{trevisan2000gadgets}
L.~Trevisan, G.~B. Sorkin, M.~Sudan, and D.~P. Williamson.
\newblock Gadgets, approximation, and linear programming.
\newblock \emph{SIAM Journal on Computing}, 29\penalty0 (6):\penalty0
  2074--2097, 2000.

\bibitem[Tulsiani(2009)]{tulsiani2009csp}
M.~Tulsiani.
\newblock Csp gaps and reductions in the lasserre hierarchy.
\newblock In \emph{Proceedings of {STOC}}, pages 303--312. ACM, 2009.

\bibitem[Yannakakis(1988)]{Yannakakis88}
M.~Yannakakis.
\newblock Expressing combinatorial optimization problems by linear programs
  (extended abstract).
\newblock In \emph{Proceedings of {STOC}}, pages 223--228, 1988.

\bibitem[Yannakakis(1991)]{Yannakakis91}
M.~Yannakakis.
\newblock Expressing combinatorial optimization problems by linear programs.
\newblock \emph{J. Comput. System Sci.}, 43\penalty0 (3):\penalty0 441--466,
  1991.
\newblock \doi{10.1016/0022-0000(91)90024-Y}.

\end{thebibliography}
\end{document}